\preto\subsection{\FloatBarrier}
\newcommand{\curly}{\mathrel{\leadsto}}
\newcommand{\cond}{\curly}
\newcommand{\re}{\mathbb{R}}
\newcommand{\GS}{\mathrm{GS}}
\newcommand{\cM}{\mathcal{M}}
\newcommand{\cD}{\mathcal{D}}
\newcommand{\cA}{\mathcal{A}}
\newcommand{\cN}{\mathcal{N}}
\newcommand{\cL}{\mathcal{L}}
\newcommand{\bfX}{\mathbf{X}}
\newcommand{\bfY}{\mathbf{Y}}
\newcommand{\bfx}{\mathbf{x}}
\newcommand{\bfz}{\mathbf{z}}
\newcommand{\bfy}{\mathbf{y}}
\newcommand{\bbN}{\mathbb{N}}
\newcommand{\reals}{{\mathbb R}}
\newcommand{\nats}{{\mathbb N}}
\newcommand{\ignore}[1]{{}}
\newcommand{\floor}[1]{{\lfloor #1 \rfloor}}
\newcommand{\ceil}[1]{{\lceil #1 \rceil}}
\newcommand{\seq}[1]{{$({#1}_n)$}}
\newcommand{\expect}{{\mathbb{E}}}
\newcommand{\eps}{\epsilon}
\newcommand{\eqd}{\overset{d}{=}}
\DeclareMathOperator{\Erf}{erf}
\newcommand{\erf}[1]{{\Erf \left(#1\right)}}
\newcommand{\erfinv}[1]{{\Erf^{-1} \left(#1\right)}}
\DeclareMathOperator*{\var}{Var}
\newtheorem{theorem}{Theorem}
\newtheorem{lemma}[theorem]{Lemma}
\newtheorem{proposition}[theorem]{Proposition}
\theoremstyle{remark}
\newtheorem{remark}[theorem]{Remark}
\theoremstyle{definition}
\newtheorem{definition}{Definition}
\newtheorem{condition}{Condition}
\numberwithin{equation}{section}
\numberwithin{theorem}{section}
\numberwithin{example}{section}
\numberwithin{definition}{section}
\title{Differentially private scale testing via rank transformations and percentile modifications}
\author{Joshua Levine and Kelly Ramsay}
\date{}
\begin{document}
\maketitle
\begin{abstract}
     We develop a class of differentially private two-sample scale tests, called the rank-transformed percentile-modified Siegel--Tukey tests, or RPST tests. These RPST tests are inspired both by recent differentially private extensions of some common rank tests and some older modifications to non-private rank tests. We present the asymptotic distribution of the RPST test statistic under the null hypothesis, under a very general condition on the rank transformation. We also prove RPST tests are differentially private, and that their type I error does not exceed the given level. 
We uncover that the growth rate of the rank transformation presents a tradeoff between power and sensitivity. 
 We do extensive simulations to investigate the effects of the tuning parameters and compare to a general private testing framework. Lastly, we show that our techniques can also be used to improve the differentially private signed-rank test.
 \noindent {\it Key words and phrases:}
Differential privacy, robust statistics, rank testing, nonparametric statistics, scale testing.
\end{abstract}
\section{Introduction}
Over the last two decades, differential privacy has emerged as one of the main modern privacy frameworks \citep{dwork2006calibrating}. One set of tools within this framework are differentially private hypothesis tests. Various differentially private hypothesis tests have been introduced. These include task specific hypothesis tests for common problems, such as ANOVA, regression, categorical data and more \citep{ cai2017priv,awan2018differentially,campbell2018differentially,couch2018differentially,barrientos2019differentially,couch2019differentially,Dette2022}, as well as methods to extend any non-private test to the private setting \citep{pena2022differentially,kazan2023test,kim2023differentially}. 
One common testing problem, scale testing, has not yet been carefully studied. 
To fill this gap, we develop a class of differentially private two-sample scale tests, called the \textit{rank-transformed percentile-modified Siegel--Tukey tests}, or RPST tests. 

These RPST tests are inspired both by recent differentially private extensions of some common rank tests \citep{couch2018differentially,couch2019differentially} and some older modifications to non-private rank tests \citep{van1952order, gastwirth1965percentile}. 
In particular, \citet{couch2018differentially,couch2019differentially} introduced private rank tests for testing for difference(s) in location parameters between groups. 
They introduced a technique to ensure privately estimating the sizes of each group does not inflate the type I error of the test, which we adapt to the setting of scale testing. 

Furthermore, our test incorporates rank transformations and percentile modifications, which allows us to increase the power of a ``naive'' private Siegel--Tukey test. 
Percentile modifications were first introduced by \citet{gastwirth1965percentile}, who noticed that extreme ranks determine if there are scale differences between groups. 
The rank transformation is inspired by the Van der Waerden test \citep{van1952order, van1953order}, a nonparametric alternative to ANOVA, which applies the quantile function of the standard normal distribution to the ranks of the data points.

Critically, we show that the transformation can actually reduce the sensitivity of the test statistic, and therefore reduce the additive noise required to ensure differential privacy. 
At a high level, the proposed test works as follows: (i) Combine both samples into one, and rank all points in a center-outward manner. (ii) Modify these ranks by setting central ranks to 0. (iii) Apply a non-negative, increasing transformation $\psi$ to the ranks. (iv) Sum the ranks and privatize the sum, privately estimate the asymptotic variance of the rank sum, and compare the private rank sum statistic to the appropriate null distribution.

Concrete contributions are as follows: (i)
We introduce a new class of nonparametric, robust private scale tests, RPST tests. We present their asymptotic distribution under the null hypothesis, under a very general condition on the rank transformation, see Theorem \ref{thm::main-result}. We also prove RPST tests are differentially private, Theorem~\ref{thm::privacy}, and that their type I error does not exceed the given level, $\alpha$, Lemma~\ref{lemm::no-type-1}. 
Theorem \ref{thm::main-result} uncovers that the growth rate of the transformation presents a tradeoff between power and sensitivity. 
(ii) We do extensive simulations to investigate the effects of the percentile modification and the rank transformation, under varying population distributions, sample sizes and privacy budgets. We show that our proposed test method outperforms tests constructed from general private testing frameworks. This justifies the development of private test specifically designed for scale testing. In addition, the previously mentioned tradeoff between power and sensitivity is verified empirically. (iii) We show that these techniques can also be used to improve the differentially private signed-rank test \citep{couch2019differentially}, see Section~\ref{sec::srtest}. 
(iv) Given that these tests did not exist in the non-private setting, some new rank tests are introduced as a byproduct of our work.


\section{Preliminaries}
\subsection{Problem of study}
We first introduce the problem we study in this work. 
Let $\cM_1(\re)$ be the set of Borel probability distributions on $\reals$. Going forward, we will identify members of $\cM_1(\re)$ with their cumulative distribution functions (CDFs).
We consider two independent random samples from distributions $F\in\cM_1(\re)$ and $G\in\cM_1(\re)$. The first one, which we will call ``group 1'', or $\bfX_{n_1}$, contains the observations $X_1, \dots, X_{n_1} \sim F$. The second one, which we will call ``group 2'', or $\bfY_{n_2}$, contains the observations $Y_1, \dots, Y_{n_2} \sim G$. 
We will further assume that both $F$ and $G$ are members of some \emph{location-scale family} of distributions, and both have the same location parameter. 
That is, there is some ``base'' distribution $F_0\in\cM_1(\re)$ and location $\mu\in\re$, which are fixed and unknown, and from which we can form the family of distributions given by $\cL=\{F_0(\sigma(\cdot-\mu)\colon \sigma\geq 0\}\subset\cM_1(\re)$. We then assume that $F$ and $G$ lie in $\cL$. For example, $\cL$ may be the family of Normal distributions with mean $\mu$ or the family of Cauchy distributions with median $\mu$. 
Our goal is to develop a differentially private test for a difference in scale, $\sigma$, between $F$ and $G$. 
That is, we wish to test the hypothesis
\begin{equation}\label{eqn::hyp}
    H_0: F=G \quad \text{vs.} \quad H_1: \text{there exists } \theta>0, \ \theta \neq 1, \text{ such that }  F(x) = G(\theta x)\ \forall x\in\reals,
\end{equation}
in a differentially private manner. 
Note that \eqref{eqn::hyp} implicitly assumes that $F$ and $G$ have the same location, e.g., mean or median. 
In practice, one can normalize the samples by a private estimate of the mean or median, or if one assumes normality, this can be achieved by differencing within each sample. 
For simplicity, we will also assume that $F$ and $G$ are absolutely continuous distributions. 
Relaxing the assumption of absolute continuity is discussed later in Remark~\ref{rem::ties}. 
These conditions can be summarized as follows:
\begin{condition}\label{cond::problem}
We have that $X_1, \dots, X_{n_1} \sim F$ and $Y_1, \dots, Y_{n_2} \sim G$ are two independent random samples for some $n_1,n_2\in\bbN$. Further, we have that $F,G\in\cL$ and $F,G$ are absolutely continuous. 
\end{condition}
Sometimes it will be convenient to consider the combined sample. In this way, let $n=n_1+n_2$, and we let $Z_1,Z_2,\ldots,Z_{n}$ be the combined sample, where we label the points in non-decreasing order $Z_1\leq\ldots\leq Z_n$. 
\subsection{Siegel--Tukey test}
Now that we understand the testing problem, we review the Siegel--Tukey test, a key ingredient in the proposed class of private scale tests. 
To test the hypothesis \eqref{eqn::hyp} in the nonprivate setting, \citet{siegel1960nonparametric} devised the following procedure, which is known as the Siegel--Tukey test. We first arrange the data points in the combined sample from least to greatest. We then assign ranks to the ordered data points in the following manner:
    \begin{enumerate}
        \item Assign rank 1 to the lowest data point.
        \item Assign rank 2 to the greatest data point, and assign rank 3 to the second-greatest data point.
        \item Assign rank 4 to the second-lowest data point, and assign rank 5 to the third-lowest data point.
        \item Continue alternating between the two highest unranked points and the two lowest unranked points in this manner until every data point has been assigned a rank.
    \end{enumerate}
The (non-private) test statistic is then computed as follows: the minimum of the sum of the ranks in group 1, say $\widehat U_1$ and the sum of the ranks in group 2, say $\widehat U_2$ is first computed. 
The final test statistic is given by $\widehat U=\min(\widehat U_1 -n_1(n_1+1)/2,\widehat U_2-n_2(n_2+1)/2)$, which has the same distribution under the null hypothesis as the Wilcoxon rank-sum statistic.  
Later, we introduce a new test based on these principles, which avoids using the minimum of two statistics. 

\subsection{Percentile modification}
It is helpful to review the percentile modification introduced by \citet{gastwirth1965percentile}. This is a technique used to improve the asymptotic relative efficiency of nonparametric rank-based hypothesis tests. This technique entails ignoring either the most extreme or the most central ranks of a dataset when calculating a test statistic. 
For instance, consider the Siegel--Tukey test. Given that both samples have the same mean, both samples will have some central observations. Therefore, both samples will have some larger (center-outward) ranks, and these will not be helpful in determining if there is a difference in dispersion between the two groups. 
Therefore, given $0<q<1$, we can instead set the $\floor{nq}$ largest ranks to be equal to $n-\floor{nq}$. This reduces unhelpful noise in the set of ranks, which improves the power of the test. 
Indeed, Gastwirth applied this technique to the Ansari-Bradley-Freund Test \citep{ansari1960rank}, another scale tests based on ranks which is very similar to the Siegel--Tukey Test. The result was an increase in the asymptotic relative efficiency and power for certain settings $q$. 
Subsequent works have utilized the technique of percentile modification to improve the power of other hypothesis tests \citep{chenouri2011data, Ramsay2023}. As mentioned previously, we utilize this technique in our proposed test. 

\subsection{Differential privacy}
Next, we introduce the neccessary background on differential privacy \citep{Dwork2006, Dwork2014}. 
Define a dataset of size $n\in \bbN$ to be a set of $n$ real numbers, and let $\cD_n$ be the set of datasets of size $n$. 
First, we say that a dataset $\bfx_n\in \cD_n$, is adjacent to another dataset $\bfy_n\in \cD_n$ if $\bfx_n$ and $\bfy_n$ differ by exactly one point. 
Let $\cA_n$ be the set of pairs of adjacent datasets of size $n$. 
Next, let $H_{\bfx_n}$ be a probability distribution over $\reals$, which depends on $\bfx_n$. 
That is, $H_{.}\colon \cD_n\to \cM_1(\reals)$. 
Assuming that $H_{\bfx_n}$ is absolutely continuous, let $h_{\bfx_n}$ be the associated density. 
We can now define differential privacy.  
\begin{definition}\label{dfn::pdp}
We say that the quantity $\theta\sim H_{\bfx_n}$ is $(\eps,\delta)$-differentially private if for all $(\bfy_n,\bfz_n)\in \cA_n$ and $x\in\reals$, it holds that
\begin{equation*}
h_{\bfy_n}(x)\leq e^\eps h_{\bfz_n}(x)+\delta .
\end{equation*}
\end{definition}
\noindent Here, $\theta$ is a differentially private quantity and the pair $(\eps,\delta)$ represent the privacy budget. 
The parameter $\eps>0$ controls the level of privacy assured to each individual, and $\delta>0$ is the probability of leakage, or, the probability that the privacy condition $h_{\bfy_n}(x)\leq e^\eps h_{\bfz_n}(x)$ fails to hold. 
Typically, $\eps$ is a small constant $\approx 1$ and $\delta=O(n^{-k})$ for some positive integer $k$. Smaller values of each of $\eps$ and $\delta$ enforce stricter privacy guarantees. 
Note that if $\delta = 0$, the mechanism is said to be $\eps$-differentially private. 
The following is a useful property for developing differentially private estimates. 
\begin{proposition}[\textbf{Basic Composition} \citep{dwork2006our}] \label{thm::comp}
    Suppose that, for $i = 1, 2, \dots, k$, the quantity $\theta_i$ is $(\eps_i, \delta_i)$-DP. Let $\theta = (\theta_1, \theta_2, \dots, \theta_k)$ be a sequence of these algorithms. Then, $\theta$ is $\left(\sum_{i = 1}^{k} \eps_i, \sum_{i = 1}^k \delta_i \right)$-differentially private.
\end{proposition}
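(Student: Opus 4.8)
The plan is to peel off one mechanism at a time, so the first step is to reduce the $k$-fold claim to the two-fold case by induction on $k$. Writing $\theta_{<k}=(\theta_1,\dots,\theta_{k-1})$, I would treat $\theta_{<k}$ as a single released quantity; if the inductive hypothesis gives that $\theta_{<k}$ is $(\sum_{i<k}\eps_i,\sum_{i<k}\delta_i)$-DP, then establishing that the composition of an $(\eps,\delta)$-DP quantity with an $(\eps',\delta')$-DP quantity is $(\eps+\eps',\delta+\delta')$-DP immediately closes the induction. So it suffices to prove the two-fold statement, which I now describe for a generic adjacent pair $(\bfy_n,\bfz_n)\in\cA_n$.

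For intuition, first consider the pure case $\delta_1=\delta_2=0$. When the two quantities use independent randomness the joint density factorizes, $h_{\bfy_n}(x_1,x_2)=h^{(1)}_{\bfy_n}(x_1)\,h^{(2)}_{\bfy_n}(x_2)$, so the joint privacy loss is the sum of the marginal losses and the density ratio is bounded by $e^{\eps_1}\cdot e^{\eps_2}=e^{\eps_1+\eps_2}$ pointwise; integrating over any event gives the claim. The difficulty is entirely in the additive terms: one cannot simply multiply the two bounds $h^{(i)}_{\bfy_n}\le e^{\eps_i}h^{(i)}_{\bfz_n}+\delta_i$, because the cross terms $\delta_i\,h^{(j)}_{\bfz_n}$ that appear involve densities and do not collapse to a constant.

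To handle $\delta>0$ I would switch from the pointwise form to the equivalent event form of the guarantee and use a ``bad event'' decomposition. Define the privacy-loss functions $c_i(x_i)=\ln\!\big(h^{(i)}_{\bfy_n}(x_i)/h^{(i)}_{\bfz_n}(x_i)\big)$ and the bad sets $B_i=\{c_i>\eps_i\}$; the DP hypothesis on $\theta_i$ translates into the statement that $\theta_i^{\bfy_n}$ places mass at most $\delta_i$ on $B_i$. The joint bad set $B=\{c_1+c_2>\eps_1+\eps_2\}$ satisfies $B\subseteq B_1\cup B_2$, since outside $B_1\cup B_2$ both losses are bounded and hence so is their sum. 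Independence plus a union bound then give that the joint law under $\bfy_n$ assigns mass at most $\delta_1+\delta_2$ to $B$. Finally, for any event $S$ I would split $S=(S\cap B)\cup(S\setminus B)$: the first piece contributes at most $\delta_1+\delta_2$, and on the second piece the joint loss is at most $\eps_1+\eps_2$, so its $\bfy_n$-probability is at most $e^{\eps_1+\eps_2}$ times its $\bfz_n$-probability. Combining yields the $(\eps_1+\eps_2,\delta_1+\delta_2)$ bound, completing the two-fold case and hence the induction.

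The main obstacle I expect is exactly the bookkeeping around the $\delta$ terms: making the equivalence between the density-level statement of Definition~\ref{dfn::pdp} and the event-level statement precise, and verifying that the union bound over $B_1,B_2$ is what keeps the leakage additive, rather than incurring a spurious factor such as $e^{\eps_2}\delta_1+\delta_2$ that a more naive change-of-measure would produce. For the paper's application the mechanisms are composed with independent noise, so the factorization of the joint density is immediate; if adaptive composition were needed one would replace $h^{(2)}_{\bfy_n}(x_2)$ by the conditional density given $x_1$ and run the same argument, which I would note but not carry out.
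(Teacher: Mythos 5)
The paper does not prove this proposition; it is quoted verbatim as a known result from \citet{dwork2006our}, so there is no internal proof to compare against. Evaluating your argument on its own terms: the inductive reduction to the two-fold case and the treatment of the pure case ($\delta_1=\delta_2=0$) via factorization of the joint density are both fine, and the independence assumption is harmless for the paper's application (independent Laplace draws).

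There is, however, a genuine gap exactly where you anticipated trouble. Your argument hinges on the claim that $(\eps_i,\delta_i)$-DP ``translates into the statement that $\theta_i^{\bfy_n}$ places mass at most $\delta_i$ on $B_i=\{c_i>\eps_i\}$.'' That implication is false: the definition only controls the \emph{excess} mass, $\int_{B_i}\bigl(h^{(i)}_{\bfy_n}-e^{\eps_i}h^{(i)}_{\bfz_n}\bigr)\leq\delta_i$, not the total mass $\Pr_{\bfy_n}(B_i)$. A concrete counterexample with $\eps=0$, $\delta=0.1$: take $h_{\bfz_n}$ uniform on $[0,1]$ and $h_{\bfy_n}$ equal to $1.1$ on $[0,1/2]$ and $0.9$ on $(1/2,1]$; then $h_{\bfy_n}(S)\leq h_{\bfz_n}(S)+0.05$ for every $S$, so $(0,0.1)$-DP holds, yet the bad set $B=\{h_{\bfy_n}>h_{\bfz_n}\}=[0,1/2]$ carries mass $0.55$. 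The ``loss bounded by $\eps$ except with probability $\delta$'' condition is sufficient for $(\eps,\delta)$-DP but not necessary, and you are using the false converse. If one instead works with the excess-mass form and splits $h^{(i)}_{\bfy_n}=\min\bigl(h^{(i)}_{\bfy_n},e^{\eps_i}h^{(i)}_{\bfz_n}\bigr)+\bigl(h^{(i)}_{\bfy_n}-e^{\eps_i}h^{(i)}_{\bfz_n}\bigr)_+$, the cross terms produce the very bound $e^{\eps_2}\delta_1+\delta_2$ you wanted to avoid. The standard repair is the approximation (coupling) lemma for approximate DP, e.g.\ Lemma 3.17 and Appendix B of Dwork and Roth, or Kasiviswanathan and Smith: if $\theta_i$ is $(\eps_i,\delta_i)$-DP, there exists a distribution within total variation $\delta_i$ of $H^{(i)}_{\bfy_n}$ whose density is \emph{pointwise} at most $e^{\eps_i}$ times $h^{(i)}_{\bfz_n}$. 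Substituting these surrogates, the pointwise ratios multiply to $e^{\sum\eps_i}$ and the total-variation errors add to $\sum\delta_i$ by a hybrid argument, which yields the clean additive bound. Without that lemma (or an equivalent), your bad-event decomposition does not close.
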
 
This allows us to quantify the privacy loss incurred when composing multiple separate algorithms. 
Next, we introduce the simplest way to generate a differentially private statistic: the Laplace mechanism. 
For a univariate statistic $T\colon\cD_n\to \reals$, define the global sensitivity of $T$ as 
$$\GS(T)=\sup_{(\bfy_n,\bfz_n)\in \cA_n}|T(\bfy_n)-T(\bfz_n)|.$$
The Laplace mechanism is given as follows:
\begin{proposition}[\textbf{Laplace Mechanism} \citep{dwork2006calibrating}]\label{thm::lapmech}
Let $T\colon\cD_n\to \reals$ be any statistic with $\GS(T)<\infty$. 
If $Z$ is a standard Laplace random variable, then $$T(\bfx_n)+\GS(T)Z/\eps$$ is $\eps$-differentially private. 
\end{proposition}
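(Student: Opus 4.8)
The plan is to verify the $(\eps,0)$-privacy condition of Definition~\ref{dfn::pdp} directly, by writing the output density in closed form and then bounding the likelihood ratio between any two adjacent datasets. Since $Z$ is a standard Laplace variable with density $\tfrac{1}{2}e^{-|z|}$, the scaled noise $\GS(T)Z/\eps$ is Laplace with scale parameter $b=\GS(T)/\eps$. First I would note that the mechanism's output $T(\bfx_n)+\GS(T)Z/\eps$ is then just this noise shifted by the constant $T(\bfx_n)$, so its density is
$$h_{\bfx_n}(x)=\frac{1}{2b}\exp\!\left(-\frac{|x-T(\bfx_n)|}{b}\right),\qquad b=\frac{\GS(T)}{\eps}.$$
The hypothesis $\GS(T)<\infty$ guarantees $b$ is finite (assuming $\GS(T)>0$; see the degenerate case below), so this density is well defined.

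Next I would fix an adjacent pair $(\bfy_n,\bfz_n)\in\cA_n$ and a point $x\in\reals$, and form the ratio of densities. The normalizing constants $1/(2b)$ cancel, leaving
$$\frac{h_{\bfy_n}(x)}{h_{\bfz_n}(x)}=\exp\!\left(\frac{|x-T(\bfz_n)|-|x-T(\bfy_n)|}{b}\right).$$
The only nontrivial step is to control the exponent, and this is where the reverse triangle inequality does all the work: it gives $|x-T(\bfz_n)|-|x-T(\bfy_n)|\leq|T(\bfy_n)-T(\bfz_n)|$, and because $(\bfy_n,\bfz_n)$ is an adjacent pair, the definition of global sensitivity yields $|T(\bfy_n)-T(\bfz_n)|\leq\GS(T)$.

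Finally, substituting $b=\GS(T)/\eps$ collapses the bound to $h_{\bfy_n}(x)/h_{\bfz_n}(x)\leq\exp(\GS(T)/b)=e^{\eps}$, so that $h_{\bfy_n}(x)\leq e^{\eps}h_{\bfz_n}(x)$ holds for every $x$ and every adjacent pair, which is precisely $(\eps,0)$-differential privacy. I do not expect any genuine obstacle: the argument is a one-line density computation followed by the reverse triangle inequality. The only point requiring a word of care is the degenerate case $\GS(T)=0$, where $T$ is constant on adjacent datasets and the output is deterministic and identical for $\bfy_n$ and $\bfz_n$, so the privacy condition holds trivially without appealing to the density formula.
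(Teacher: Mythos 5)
Your argument is correct and is the standard proof of the Laplace mechanism: write the output density as a shifted Laplace with scale $b=\GS(T)/\eps$, bound the log-likelihood ratio by the reverse triangle inequality, and invoke the definition of global sensitivity. The paper itself offers no proof of this proposition --- it is imported verbatim from \citet{dwork2006calibrating} --- so there is nothing to compare against; your write-up matches the canonical argument in that reference, and your remark on the degenerate case $\GS(T)=0$ (where the output is a deterministic constant identical across adjacent datasets, falling outside the absolutely continuous formulation of Definition~\ref{dfn::pdp} but trivially private) is a sensible point of care.
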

\noindent Proposition~\ref{thm::lapmech} illustrates how correctly calibrated Laplace noise can be added to a statistic to ensure differential privacy. 
\subsection{Unknown group sizes in private rank testing}
One issue in private, two-sample rank testing is that the group sizes must be known to execute the test. 
That is, the asymptotic distribution of the test statistic under the null hypothesis depends on the group sizes. 
However, the group sizes can potentially be sensitive information. 
Therefore, we cannot use them directly to obtain the asymptotic reference distribution or to compute the test statistic. 
In addition, it is not ideal to plug in a simple private estimate of the group sizes, as this can inflate the probability of type I error. 

\citet{couch2019differentially} considered this problem when creating their private Mann-Whitney test, where it is essential to know the minimum group size. 
They estimate the size of smallest group, $m$, such that with probability $1 - \delta$, the estimate of $m$, say $m^*$, is guaranteed to be less than $m$. 
This ensures that the type I error is bounded by $\alpha$, with probability $1-\delta$. 
However, this comes at the cost of pure differential privacy. 
Given $\eps_m>0$, the estimate is defined as follows $$m^*=\max \left\{\left\lceil\min\{n_1, n_2\}+ \frac{1}{\eps_m}Z + \frac{\ln(2\delta)}{\eps_m}\right\rceil, 0 \right\},$$
where $Z$ is a standard Laplace random variable. 
\citet{couch2019differentially} show that this algorithm is $(\eps_m, \delta)$-differentially private and that $m^* < m$ with probability $1 - \delta$. We generalize this approach to the private version of the Siegel--Tukey test. 



\section{Private Siegel--Tukey tests}\label{sec::methods}
\subsection{Definition} \label{sub::definition}
With the necessary background in hand, we can now introduce the RPST tests. 
Briefly, the tests proceed as follows: (i) We first rank the data similarly to that of the original Siegel--Tukey test. (ii) We then apply a percentile modification and a transformation $\psi$ to the ranks. (iii) We then compute a private rank-sum statistic based on these modified, transformed ranks. (iv) We conservatively and privately estimate the asymptotic variance of the private rank-sum statistic, which we use together with a central limit theorem to determine the $p$-value of the test. 
The last step relies on a private estimate of $|n/2-n_1|$, after which we adapt the techniques of \citet{couch2019differentially} previously mentioned.

In order to adapt the percentile modification to be used with the Siegel--Tukey test, we must modify the traditional Siegel--Tukey test statistic. 
In the original Siegel--Tukey test, under the null hypothesis, we have that $\widehat U_1-n_1(n_1+1)/2$ and $\widehat U_2-n_2(n_2+1)/2$ have the same distribution. 
If we were to naively replace the ranks in the calculation of the Siegel--Tukey test statistic with the rank-transformed, percentile-modified ranks, then this property would no longer hold in general. 
That is, the quantity associated with group 1 would no longer be equal in distribution to the quantity in group 2 under the null hypothesis. 

As a result, we propose the following, alternative ranking procedure.
As in the traditional percentile-modified tests, let $q$ be the proportion of central data points we wish to set to 0, and let $Q = \floor{nq}$ be then the number of points to set to 0. 
We propose a new way to assign the ranks, which is from the extremes inward, starting at $n-Q$. That is, the lowest data point is assigned rank $n - Q$, the highest and second-highest point are assigned ranks $n - Q - 1$ and $n - Q - 2$ respectively, and so on until the positive ranks have been exhausted. The remaining unranked points, which are the $Q$ most central points, are assigned rank 0. 
Denote these ranks by $r_1,\ldots,r_n$, and without loss of generality, take the first $r_1,\ldots,r_{n_1}$ to be the ranks pertaining to group 1. 
After computing these ranks, we apply an increasing, non-negative function $\psi: \reals^+ \rightarrow \left[0, \infty \right)$ to the ranks, where we require that $\psi(0)=0$. 
Here, $\psi$ is a \emph{rank transformation}. We have found that rank transformations alter the asymptotic distribution and global sensitivity of the test statistic. Therefore, in some cases, it can increase the power of the test. An example of our proposed ranking procedure is given in Figure~\ref{fig::example}. 

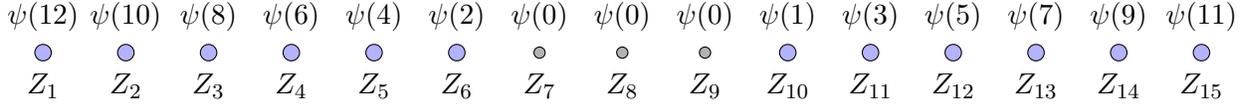
\begin{figure}[t]
    \centering
\begin{tikzpicture}[font=\small]
  \def\unit{1.1cm}

  \foreach \i/\psir in {
    1/12, 2/10, 3/8, 4/6, 5/4, 6/2,
    10/1,11/3,12/5,13/7,14/9,15/11
  }{
    \node[draw, fill=blue!30, circle, inner sep=2.2pt]
      at (\i*\unit,0)
      {};
    \node[below=4pt] at (\i*\unit,0) {\small $Z_{\i}$};
    \node[above=4pt] at (\i*\unit,0) {\small $\psi(\psir)$};
  }

  \foreach \i/\psir in {7/0, 8/0, 9/0}{
    \node[draw, fill=gray!60, circle, inner sep=1.5pt]
      at (\i*\unit,0)
      {};
    \node[below=4pt] at (\i*\unit,0) {\small $Z_{\i}$};
    \node[above=4pt] at (\i*\unit,0) {\small $\psi(\psir)$};
  }

\end{tikzpicture}
    \caption{An example of the ranking procedure for the RPST test, applied to a sample of size 15 with $q=0.2$. The ordered, combined sample $Z_1,\ldots, Z_n$ is pictured on the bottom. The assigned modified, transformed ranks for each of $Z_1,\ldots, Z_n$ are pictured on the top.   }
    \label{fig::example}
\end{figure}

Let $\delta_i=1$ if $i\in [n_1+n_2]$ pertains to an observation in group 1, and 0 otherwise. We can now define the non-private, percentile modified, rank transformed Siegel--Tukey test statistic.
\begin{definition}
Given $\psi: \reals^+ \rightarrow \left[0, \infty \right)$ such that $\psi(0)=0$ and $Q\in \{0,\ldots,n\}$, the non-private, percentile modified, rank transformed Siegel--Tukey test statistic is given by
$$U_1\coloneqq U_1(Q,\psi) = \sum_{i=1}^{n-Q}\delta_i\psi(i) - \frac{n_1}{n}  \sum_{i = 1}^{n - Q} \psi(i)\coloneqq \sum_{i=1}^{n_1}\psi(r_i)-\mu_1.$$ 
\end{definition}
The test statistic is simply the sum of the ranks in group 1, minus the expected value of the sum under the null hypothesis.
\begin{remark}
We lose no generality when we add the restriction $\psi(0)=0$. 
To see this, first for any real function $h$, define $U_1'(h)=\sum_{i=1}^{n_1}h(r_i)-E\left(\sum_{i=1}^{n_1}h(r_i)\right).$ Next, note that $U_1'(Q,h)=U_1'(Q,h+b)$ for some $b\in \reals$. 
Here, for a function $h$ and $b\in \reals$, $b+h$ denotes the function $x\mapsto h(x)+b$. 
\end{remark}
One should note that critically, if we were to define $ U_1$ in terms of group 2, say $U_2$, we have that $U_1=-U_2$, so the test is unaffected by which group is labelled group 1. 
To privatize the above test, we employ the Laplace mechanism after bounding the sensitivity of $U_1$. 
Let $\bar\psi_Q=n^{-1}\sum_{i=1}^{n-Q}\psi(i)$. 
Then, the private test statistic can be obtained as follows:
\begin{definition}
    Given $\psi: \reals^+ \rightarrow \left[0, \infty \right)$ such that $\psi(0)=0$, $\eps_U>0$ and $Q\in \{0,\ldots,n\}$, the private percentile modified, rank transformed Siegel--Tukey test statistic is given by \begin{align*}
    |\widetilde{U}_1|\coloneqq |\widetilde U_1(Q,\psi)| = |U_1 + Z\cdot    
        \GS^*(U_1)/\eps_U|,
\end{align*}
where $Z$ is a standard Laplace random variable and $$\GS(U_1)\leq \GS^*(U_1)=\max \left\{\psi(n - Q), \psi(n - Q) + \psi(n - Q - 1) - \bar\psi_Q \right\} .$$
\end{definition}
The proof of $\GS(U_1)\leq \GS^*(U_1)$ is given in the appendix, see Lemma~\ref{lem::rtpmstsens}. This implies that $\widetilde U_1$ is $\eps_U$-differentially private.

Let 
\begin{equation*}
   \sigma^2(n_1,n_2,\psi,Q) =\frac{n_1}{n}\left(1-\frac{n_1}{n}\right) \sum_{i = 1}^{n - Q} \psi^2(i)   + 2\frac{n_1}{n}\left(\frac{n_1 - 1}{n - 1}-\frac{n_1}{n}\right) \sum_{i = 1}^{n - Q - 1} \sum_{j = i + 1}^{n - Q} \psi(i) \psi(j)
    .
\end{equation*}
We show later, see Theorem~\ref{thm::main-result}, that under the null hypothesis, for many choices of $\psi$, we have that
\begin{align*}
    \widetilde U_1 \cond \cN\left(0,  \sigma^2(n_1,n_2,\psi,Q) \right)\text{.}
\end{align*}
Here, $\cond$ denotes convergence in distribution and $\cN(\mu,\sigma^2)$ denotes the normal distribution with mean $\mu$ and standard deviation $\sigma^2$. Therefore, if we knew the group sizes $n_1,n_2$, we could compute a $p$-value from $\cN\left(0,  \sigma^2(n_1,n_2,\psi,Q) \right)$. 




However, the group sizes are often subject to the constraint of differential privacy. Thus, we need a way to privately estimate the $p$-value, such that the probability of type I error does not increase. 
For this test, we reject the null hypothesis when the test statistic is sufficiently large. 
In that case, we must configure the algorithm to \textit{overestimate} the standard deviation of the asymptotic normal distribution of $\widetilde U_1$. This is equivalent to underestimating the disparity between the two group sizes, given by $ d_1 = \left|n_1 - n/2 \right|= \left|n_1/2 - n_2/2 \right|$, see Lemma~\ref{lem::vk}. 
Therefore, inspired by \citet{couch2019differentially}, given $\delta,\eps_d>0$, we can use the following estimate of $d_1$. 
Let $Z'$ be a standard Laplace random variable. 
\begin{enumerate}
    \item Compute $\widetilde{d}_1 \coloneqq d_1 + Z'/\eps_d$ and, subsequently, $d_1^* \coloneqq \max \{\ceil{\widetilde{d}_1 + \frac{\ln(2\delta)}{\eps_d}}, 0 \}$. 
    \item If $n$ is odd and $d_1^* \neq 0$, subtract $\frac{1}{2}$ from $d_1^*$. If $n$ is odd and $d_1^* = 0$, instead take $d_1^* = \frac{1}{2}$.
\end{enumerate}
Letting $\widetilde n_1 = n/2 - d_1^*$ and $\widetilde n_2 = n - \widetilde n_1$, we can then use the normal distribution with variance $\widetilde \sigma^2\coloneqq \sigma^2(\widetilde n_1,\widetilde n_2,\psi,Q)$ as the reference distribution to compute the $p$-value for our test. 
Using a similar argument to the one used to prove Lemma A.7 of \cite{couch2019differentially}, though fixing a very oversight in their proof, we find that $d_1^* \leq d_1$, implying that we are not introducing excess type I error into the procedure by estimating the group sizes. See Lemma~\ref{lemm::no-type-1} for a formal statement.


\subsection{Theory}\label{sec::theory}
We now present some theoretical results concerning the RPST tests. 
The first result shows that the RPST tests are differentially private. 
\begin{theorem} \label{thm::privacy}
For all $n_1,n_2\geq 1$, $Q\in \{0\}\cup [n-1]$, strictly-increasing, non-negative functions $\psi: \nats \rightarrow \reals$, the pair $(\widetilde U_1, \widetilde\sigma^2)$ are jointly $(\eps_U+\eps_d,\delta)$-differentially private. 
\end{theorem}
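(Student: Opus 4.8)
The plan is to exhibit $(\widetilde U_1, \widetilde\sigma^2)$ as the joint release of two mechanisms that draw \emph{independent} Laplace noise, bound the privacy loss of each separately, and then invoke Basic Composition (Proposition~\ref{thm::comp}). The two mechanisms are $\widetilde U_1$, which perturbs the deterministic statistic $U_1$ with the variable $Z$, and $\widetilde\sigma^2$, which is built from a separately perturbed copy of $d_1 = |n_1 - n/2|$ using the variable $Z'$. Since $Z$ and $Z'$ are independent, the release factors as a composition of the two, and it suffices to certify the privacy of each piece.

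First I would handle $\widetilde U_1$. By Lemma~\ref{lem::rtpmstsens} we have $\GS(U_1) \le \GS^*(U_1)$, where $\GS^*(U_1) = \max\{\psi(n-Q),\, \psi(n-Q)+\psi(n-Q-1)-\bar\psi_Q\}$ is finite and strictly positive (using $\psi(0)=0$, $\psi$ strictly increasing, and $Q \le n-1$, so that $\psi(n-Q) > 0$). Since $\widetilde U_1 = U_1 + Z\,\GS^*(U_1)/\eps_U$ adds Laplace noise calibrated to an \emph{upper bound} on the true sensitivity, the Laplace Mechanism (Proposition~\ref{thm::lapmech}) gives that $\widetilde U_1$ is $\eps_U$-DP; inflating the noise scale from $\GS(U_1)/\eps_U$ to $\GS^*(U_1)/\eps_U$ only strengthens the guarantee.

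Next I would handle $\widetilde\sigma^2$. The key sensitivity bound is $\GS(d_1) \le 1$: the total size $n$ is public and fixed across adjacent datasets, so changing one point alters $n_1$ by at most one, and hence alters $d_1 = |n_1 - n/2|$ by at most one. Thus $\widetilde d_1 = d_1 + Z'/\eps_d$ is an application of the Laplace Mechanism and is $\eps_d$-DP. Every subsequent step—adding the public constant $\ln(2\delta)/\eps_d$, taking the ceiling and the maximum with $0$, the parity adjustment for odd $n$, forming $\widetilde n_1 = n/2 - d_1^*$ and $\widetilde n_2 = n - \widetilde n_1$, and finally evaluating $\widetilde\sigma^2 = \sigma^2(\widetilde n_1,\widetilde n_2,\psi,Q)$—depends on the data only through $\widetilde d_1$ and otherwise uses the public quantities $n,\psi,Q,\eps_d,\delta$. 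By closure of differential privacy under post-processing (a standard property, directly verifiable from the density inequality in Definition~\ref{dfn::pdp}), $\widetilde\sigma^2$ is therefore $\eps_d$-DP, and a fortiori $(\eps_d,\delta)$-DP for every $\delta \ge 0$.

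Finally, because $Z$ and $Z'$ are independent, $(\widetilde U_1,\widetilde\sigma^2)$ is the joint release of an $(\eps_U,0)$-DP quantity and an $(\eps_d,\delta)$-DP quantity, so Basic Composition (Proposition~\ref{thm::comp}) yields the claimed $(\eps_U+\eps_d,\delta)$-DP. The substantive content lives entirely in the two sensitivity bounds: the bound on $\GS(U_1)$ is already isolated in Lemma~\ref{lem::rtpmstsens}, so the main obstacle here is the post-processing argument for $\widetilde\sigma^2$, namely checking carefully that \emph{no} step in the construction of $\widetilde\sigma^2$ re-touches the raw data and thereby spends privacy budget beyond the single $\eps_d$ already spent on $\widetilde d_1$.
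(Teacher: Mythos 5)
Your proposal is correct and follows essentially the same route as the paper's proof: $\widetilde U_1$ is $\eps_U$-DP by Lemma~\ref{lem::rtpmstsens} together with the Laplace mechanism, the variance estimate is private because it is a post-processing of the Laplace-perturbed $d_1$ (whose sensitivity is $1$ since $n$ is fixed across adjacent datasets), and Basic Composition combines the two. The only difference is that you spell out the sensitivity bound on $d_1$ and the post-processing step explicitly, where the paper defers to ``properties of the Laplace distribution'' and a cited result of \citet{couch2019differentially}; your observation that the second component is in fact pure $\eps_d$-DP, hence a fortiori $(\eps_d,\delta)$-DP, is a correct and slightly cleaner accounting.
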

The proof Theorem~\ref{thm::privacy} is given in Appendix~\ref{sec::proofs}. Theorem~\ref{thm::privacy} guarantees that all versions of the RPST test are differentially private. 
Our next result concerns the asymptotic distribution of the test statistic under the null hypothesis.  
For this result, we need a few conditions. 
\begin{condition}\label{cond::group_sizes}
There exists $\lambda \in (0, 1)$ such that $\frac{n_1}{n} \rightarrow \lambda$ as $n\to \infty$.
\end{condition}
Condition \ref{cond::group_sizes} says that the groups sizes are bounded away from zero, or, one group size is not asymptotically dominating another. This is a relatively standard assumption in rank testing. We now introduce another condition. 
\begin{condition}\label{cond::strong}
Let $\psi\colon \re^+\to \re^+$ be a continuous, strictly increasing, positive, real-valued function on $\re^+$, such that $\psi(n)=O(n^k)$ for any $k>0$ and $\psi(0)=0$. 
\end{condition}
Condition~\ref{cond::strong} gives a sufficient condition on $\psi$ for asymptotic normality of $U_1$ under the null hypothesis. 
Condition~\ref{cond::strong} is satisfied for increasing functions which pass through the origin, and do not grow faster than some polynomial. 
For instance, Condition~\ref{cond::strong} is satisfied for $\psi(x)$ being any of $x,\log(x+1)$ or $x^k$. 
Condition~\ref{cond::strong} can also be weakened, see Condition~\ref{cond::psi} in the appendix. However, the weaker condition is very opaque, whereas Condition~\ref{cond::strong} is very straightforward. 
We now present our main theorem, which gives the distribution of the test statistic under the null hypothesis. 
\begin{theorem}\label{thm::main-result}
Let $Q\in \{0\}\cup [n-1]$ be the number of central data points to truncate. Suppose that Conditions \ref{cond::problem}--\ref{cond::strong} hold, as well as $H_0$ holds, as given in \eqref{eqn::hyp}. 
For all $\eps,\delta>0$, it holds that
\begin{equation*}
    \frac{\widetilde U_1}{\sigma(\widetilde n_1,\widetilde n_2,\psi,Q)} \cond \cN(0, 1) \text{.}
\end{equation*}    
\end{theorem}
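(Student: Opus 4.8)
The plan is to establish the claim through Slutsky's theorem after the decomposition
\[
\frac{\widetilde U_1}{\sigma(\widetilde n_1,\widetilde n_2,\psi,Q)}
= \left(\frac{U_1}{\sigma(n_1,n_2,\psi,Q)} + \frac{Z}{\eps_U}\cdot\frac{\GS^*(U_1)}{\sigma(n_1,n_2,\psi,Q)}\right)\cdot \frac{\sigma(n_1,n_2,\psi,Q)}{\sigma(\widetilde n_1,\widetilde n_2,\psi,Q)}.
\]
Holding the privacy parameters $\eps_U,\eps_d,\delta$ fixed as $n\to\infty$, I would prove three facts: (i) $U_1/\sigma(n_1,n_2,\psi,Q)\cond \cN(0,1)$; (ii) $\GS^*(U_1)/\sigma(n_1,n_2,\psi,Q)\to 0$; and (iii) $\sigma(n_1,n_2,\psi,Q)/\sigma(\widetilde n_1,\widetilde n_2,\psi,Q)\to 1$ in probability. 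Granting these, fact (ii) together with $Z/\eps_U=O_p(1)$ sends the noise term to $0$ in probability, fact (i) governs the leading term, and fact (iii) is a consistent rescaling, so repeated application of Slutsky's theorem gives the result.

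For (i), the crucial observation is that under $H_0$ we have $F=G$, hence all $n$ observations are i.i.d.; conditionally on the order statistics, the set of group-$1$ positions is a uniformly random $n_1$-subset of $\{1,\dots,n\}$. The transformed ranks therefore form a \emph{fixed} population of scores $c_i=\psi(r_i)$, namely $Q$ copies of $0$ together with $\psi(1),\dots,\psi(n-Q)$, and $U_1=\sum_{i=1}^n\delta_i c_i-\tfrac{n_1}{n}\sum_i c_i$ is the centered sum of a simple random sample without replacement. A direct computation shows its variance equals $\sigma^2(n_1,n_2,\psi,Q)$; indeed, this reconciles the displayed formula with the finite-population expression $\tfrac{n_1n_2}{n-1}S^2$, where $S^2$ is the population variance of the $c_i$. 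I would then invoke a combinatorial central limit theorem for sampling without replacement (for instance, H\'ajek's theorem), whose hypotheses reduce to $\min(n_1,n_2)\to\infty$, guaranteed by Condition~\ref{cond::group_sizes}, together with a Noether/Lindeberg-type condition on the scores. That score condition is exactly the opaque Condition~\ref{cond::psi} of the appendix, so I would prove the CLT under Condition~\ref{cond::psi} and separately argue that Condition~\ref{cond::strong} implies Condition~\ref{cond::psi}.

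Fact (ii) follows because $\GS^*(U_1)\le 2\psi(n-Q)$ (using $\psi(n-Q-1)<\psi(n-Q)$ and $\bar\psi_Q\ge 0$), while $\sigma^2(n_1,n_2,\psi,Q)=\tfrac{n_1n_2}{n-1}S^2\asymp nS^2$; thus (ii) reduces to $\psi^2(n-Q)/(nS^2)\to 0$, the negligibility of the maximal score relative to the population variance, which is again subsumed by the score condition used in (i). For fact (iii), note that $\sigma^2(n_1,n_2,\psi,Q)$ depends on the split only through $n_1n_2=n^2/4-d_1^2$, and likewise $\widetilde n_1\widetilde n_2=n^2/4-(d_1^*)^2$, so the score-dependent factor cancels and it suffices to show $\tfrac{n^2/4-(d_1^*)^2}{n^2/4-d_1^2}\to 1$. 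Since $d_1^*=d_1+O_p(1)$ (the perturbation $Z'/\eps_d$ and the constant $\ln(2\delta)/\eps_d$ do not grow with $n$), and Condition~\ref{cond::group_sizes} gives $d_1/n\to|\lambda-\tfrac12|$ with $\tfrac14-(\lambda-\tfrac12)^2=\lambda(1-\lambda)>0$, both $(d_1/n)^2$ and $(d_1^*/n)^2$ converge to $(\lambda-\tfrac12)^2$, so the ratio tends to $1$ in probability; this also makes the denominator eventually positive, so the division is well defined with probability tending to one.

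I expect the main obstacle to be the verification that Condition~\ref{cond::strong}—monotonicity, positivity, $\psi(0)=0$, and at most polynomial growth—implies the abstract score condition of Condition~\ref{cond::psi}, that is, the Noether-type control $\max_i(c_i-\bar c)^2/\sum_i(c_i-\bar c)^2\to 0$ and the negligibility $\psi^2(n-Q)/(nS^2)\to 0$. The delicate point is securing a lower bound on the population variance $S^2$ that dominates $\psi^2(n-Q)/n$ uniformly over the allowed growth rates and truncation levels $Q=\floor{nq}$: for rapidly growing $\psi$ the maximal score is large but the variance grows commensurately, whereas for slowly growing $\psi$ such as $\log(x+1)$ the spread is small and one must check that $S^2$ does not decay faster than the squared maximal deviation. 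Handling both regimes, together with the stabilizing role of the $Q$ central zeros when $q>0$, is where the substantive analysis lies.
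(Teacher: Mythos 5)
Your proposal is correct and follows essentially the same route as the paper: the paper likewise decomposes $\widetilde U_1$ into the permutation statistic plus the Laplace noise, applies a combinatorial CLT for sampling without replacement (the Wald--Wolfowitz theorem, in place of your H\'ajek citation) under the weaker score condition (Condition~\ref{cond::psi}), kills the noise term via $\GS^*(U_1)/\sigma(n_1,n_2,\psi,Q)=O\bigl((n\lambda(1-\lambda))^{-1/2}\bigr)$, and handles the plug-in group sizes by showing the variance ratio tends to one. The single substantive step you flag but do not carry out --- the lower bound on the population variance of the scores that makes $\psi^2(n-Q)/(nS^2)\to 0$ and verifies the Noether-type condition uniformly over the growth rates of $\psi$ and the truncation levels $Q$ --- is exactly what the paper supplies in the proof of Lemma~\ref{lem::privacy-error}, where $(\overline{\psi^2}-\bar\psi_Q^2)/\psi(n-Q)^2$ is shown to be bounded below by a positive constant.
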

The proof of this Theorem~\ref{thm::main-result} is given in Appendix~\ref{sec::proofs}.  
Theorem~\ref{thm::main-result} gives the limiting distribution of the test statistic, which yields that it suffices to use the distribution $\cN(0, \sigma^2(\widetilde n_1,\widetilde n_2,\psi,Q))$ as the reference distribution for computing the $p$-values. 

\begin{remark}\label{rem::ties}[Relaxing the absolute continuity assumption]
If $F$ and $G$ are not absolutely continuous, then the data may contain repeated observations. In that case, a small amount of noise can be added to the data in order to obtain data without repeated observations. In that case, it is obvious that the global sensitivity of the test statistic remains the same. In addition, the asymptotic distribution under the null hypothesis will also remain unchanged, and so the test can be applied without changes. The only hiccup is to ensure that the magnitude of noise is small enough so that the order of the non-repeated observations stays the same, which can be chosen based on domain knowledge. 
\end{remark}

\begin{figure}[t]
  \centering
  \begin{minipage}[b]{0.48\textwidth}
    \centering
    \includegraphics[width=\linewidth]{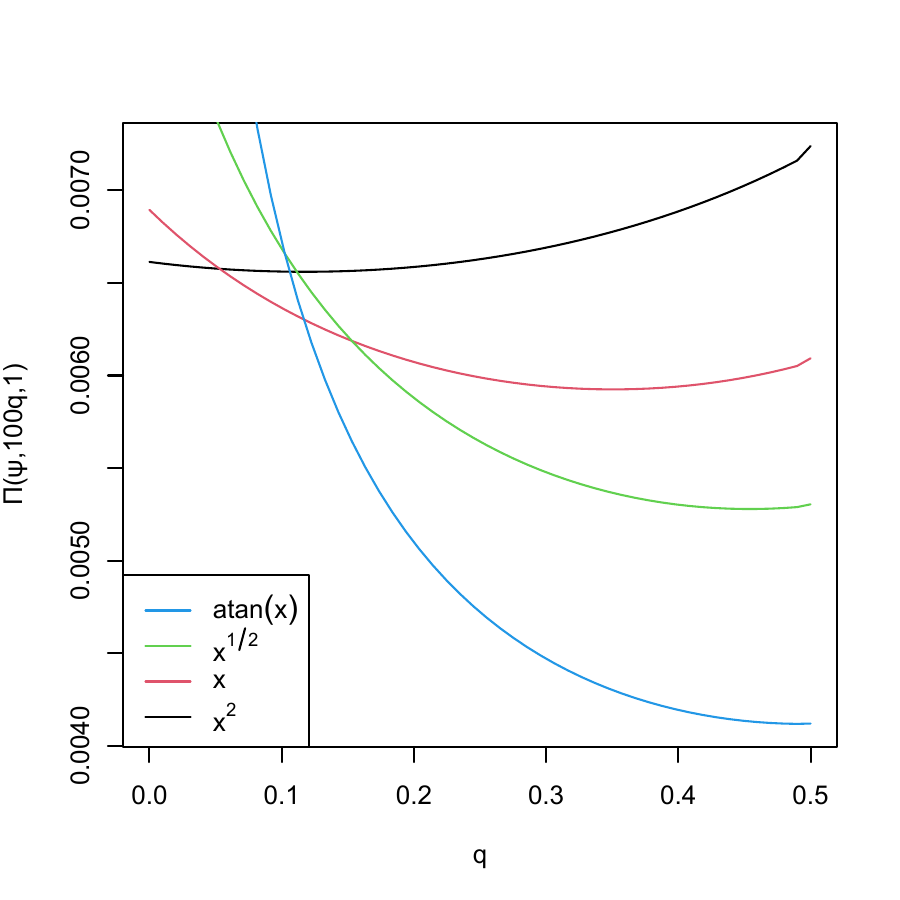}
  \end{minipage}
  \hfill
  \begin{minipage}[b]{0.48\textwidth}
    \centering
    \includegraphics[width=\linewidth]{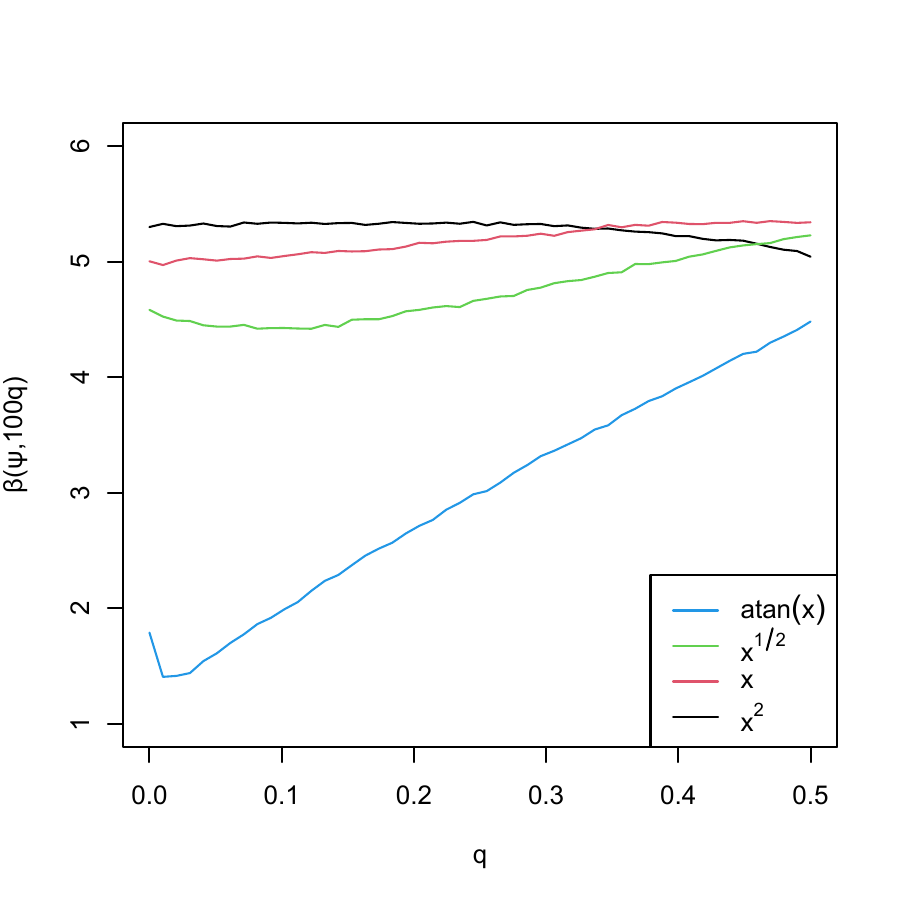}
  \end{minipage}
      \caption{Left: The function $\Pi_n(\psi,Q,1)$ as a function of $q=Q/n$ for different $\psi$ at $n=100$, with $n_1=n_2$. We see that for larger values of $q$ and slow growing $\psi$ is optimal, while the opposite is true for small values of $q$. Overall, we see that larger values of $q$ with slow growing $\psi$ seems to work well. Right: Estimates of $\beta_n(\psi,Q)$ when $F=\cN(0,1)$ and $G=\cN(0,9)$ via Monte Carlo simulation, as a function of $q=Q/n$ for different $\psi$ at $n=100$, with $n_1=n_2$. We see that larger values of $q$ and fast-growing $\psi$ is optimal, though at high values of $q$, the fastest growing $\psi$ is not always best. }
      \label{fig:sens_comparison}
\end{figure}

Theorem~\ref{thm::main-result} and its proof yields some insight into the choices of $\psi$ and $Q$. There is a trade-off between the amount of noise required for privacy is given by $\Pi_n(\psi,Q,\eps_U)={\GS^*(U_1)}/{\sigma(n_1, n_2, \psi, Q)\eps_U }$, and the power of the test, which is governed by the quantity $$\beta_n(\psi,Q)=\frac{\sum_{i=1}^{n_1}E(\psi(r_i))-\frac{n_1}{n}\sum_{i=1}^{n-Q}\psi(i)}{\sigma(n_1, n_2, \psi, Q)}.$$
In general, $\Pi_n(\psi,Q,\eps_U)$ is small for larger values of $Q$ and slow-growing $\psi$, as demonstrated in Figure~\ref{fig:sens_comparison}. 
On the other hand, $\beta_n(\psi,Q)$ is generally larger for fast-growing $\psi$ but can depend on the distribution and the value of $Q$. 
For example, Figure~\ref{fig:sens_comparison} estimates $\beta_{100}(\psi,Q)$ when $F=\cN(0,1)$ and $G=\cN(0,9)$ via Monte Carlo simulation. Here, we see that for large values of $Q$, moderately growing $\psi$ perform well, while very fast and very slow growth $\psi$ are subpar. 
A last remark is that $\Pi_n(\psi,Q,\eps_U)=O\left((n\lambda(1-\lambda)\eps_U  )^{-1/2}\right),$ see Lemma~\ref{lem::privacy-error}. This means that for large $n$, the sensitivity is small compared to $\beta_n(\psi,Q)$. In that case, we may opt for $\psi$ and $q$ which boost the power. We will see in the next section empirical simulations, which allow us to determine when these asymptotics take effect. As a result, we will put forward general recommendations for $\psi,q$.

\section{Simulations}\label{sec::sim}

\begin{table}[t]
\caption{Empirical sizes for the various versions of the test, compared to the test of tests approach when $\epsilon=1$ and the data was normally distributed. The empirical sizes for all tests are close to 5\%, with no notable patterns. Similar phenomena were observed under other distributions and privacy parameters. Values $\leq 5\%$ are bolded.}
\centering
\begin{tabular}{llllll|ll}
\toprule
\multicolumn{8}{c}{$n=100$}\\
\hline
$q$ & $\tan^{-1}(r)$ & $\ln(r + 1)$ & $r^{0.5}$ & $r$ & $r^2$ & ToT $r$ & ToT $\tan^{-1}(r)$\\
\hline
0     & 0.058 & \textbf{0.05}  & \textbf{0.048} & \textbf{0.042} & \textbf{0.05}  & \textbf{0.05}  & -      \\
0.25  & \textbf{0.046} & \textbf{0.048} & 0.072 & \textbf{0.044} & 0.056 & -      & -      \\
0.5   & 0.052 & 0.07  & \textbf{0.038} & 0.066 & \textbf{0.046} & -      & -      \\
0.75  & \textbf{0.046} & 0.052 & \textbf{0.04}  & 0.052 & 0.064 & -      & 0.052  \\
\hline
\multicolumn{8}{c}{$n=500$}\\
\hline
0     & 0.052 & \textbf{0.05}  & 0.052 & 0.062 & 0.056 & \textbf{0.046} & -      \\
0.25  & 0.058 & 0.072 & \textbf{0.038} & 0.07   & 0.07   & -             & -      \\
0.5   & 0.066 & 0.07  & \textbf{0.044} & 0.088 & 0.056 & -             & -      \\
0.75  & 0.06  & 0.056 & \textbf{0.048} & 0.062 & 0.06   & -             & \textbf{0.044} \\
\hline
\multicolumn{8}{c}{$n=1000$}\\
\hline
0     & 0.056 & 0.062 & 0.06  & 0.082 & \textbf{0.04}  & 0.058 & -      \\
0.25  & \textbf{0.05}  & \textbf{0.05}  & \textbf{0.048} & \textbf{0.04}  & 0.06  & -     & -      \\
0.5   & \textbf{0.042} & \textbf{0.042} & \textbf{0.046} & \textbf{0.048} & \textbf{0.04} & - & - \\
0.75  & \textbf{0.048} & 0.052 & 0.066 & \textbf{0.048} & \textbf{0.03}  & -   & 0.082  \\

\bottomrule
\end{tabular}
\label{tab:sizes}
\end{table}

\begin{table}[!t]
\centering
\caption{Empirical power for $\epsilon\in\{0.5,5\}$. Here, the data was normally distributed and the effect sizes were $\theta=2,1.5,$ and 1.25 for $n=100,500$, and 1000, respectfully. The tests with power within 1\% of the highest reported power are highlighted. We see that the RPST tests generally outperform the tests of tests tests, unless $n=100$ and $\epsilon=0.5$. We also see that taking $q$ larger and a slower growing choices of $\psi$, i.e., $\tan^{-1},\ln(\cdot+1)$ works well for higher privacy regimes and taking $\psi$ to be fast growing, i.e., $\psi(r)\geq r$ with $q=0.25$ works well for higher privacy budgets.}
\centering
\begin{tabular}[t]{l|lllll|ll}
\toprule
$q$ & $\tan^{-1}(r)$ & $\ln(r + 1)$ & $r^{0.5}$ & $r$ & $r^2$ & ToT $r$ & ToT $\tan^{-1}(r)$\\
\midrule
[0.3em]
\multicolumn{8}{l}{\textbf{$n=100$, $\epsilon=0.5$}}\\
\hline
\hspace{1em}0 & 0.048 & 0.076 & 0.066 & 0.08 & 0.06 & \textbf{0.122} & -\\
\hspace{1em}0.25 & 0.068 & 0.068 & 0.074 & 0.062 & 0.074 & - & -\\
\hspace{1em}0.5 & 0.102 & 0.092 & 0.082 & 0.064 & 0.07 & - & -\\
\hspace{1em}0.75 & 0.08 & 0.058 & 0.066 & 0.044 & 0.054 & - & \textbf{0.126}\\
\addlinespace[0.3em]
\multicolumn{8}{l}{\textbf{$n=100$, $\epsilon=5$}}\\
\hline
\hspace{1em}0 & 0.252 & 0.882 & 0.9 & 0.872 & 0.766 & 0.664 & -\\
\hspace{1em}0.25 & 0.498 & 0.744 & 0.862 & 0.912 & \textbf{0.942} & - & -\\
\hspace{1em}0.5 & 0.834 & 0.904 & 0.926 & \textbf{0.95} & 0.924 & - & -\\
\hspace{1em}0.75 & \textbf{0.944} & 0.928 & 0.92 & 0.882 & 0.752 & - & 0.734\\
\addlinespace[0.3em]
\multicolumn{8}{l}{\textbf{$n=500$, $\epsilon=0.5$}}\\
\hline
\hspace{1em}0 & 0.062 & 0.114 & 0.226 & 0.352 & 0.326 & 0.246 & -\\
\hspace{1em}0.25 & 0.28 & 0.368 & 0.422 & 0.432 & 0.328 & - & -\\
\hspace{1em}0.5 & \textbf{0.564} & \textbf{0.572} & 0.508 & 0.4 & 0.284 & - & -\\
\hspace{1em}0.75 & 0.532 & 0.414 & 0.364 & 0.224 & 0.15 & - & 0.322\\
\addlinespace[0.3em]
\multicolumn{8}{l}{\textbf{$n=500$, $\epsilon=5$}}\\
\hline
\hspace{1em}0 & 0.534 & \textbf{1} & \textbf{1} & \textbf{1} & 0.984 & 0.62 & -\\
\hspace{1em}0.25 & 0.722 & 0.942 & \textbf{0.998} & \textbf{1} & \textbf{1} & - & -\\
\hspace{1em}0.5 & 0.978 & \textbf{1} & \textbf{1} & \textbf{1} & \textbf{1} & - & -\\
\hspace{1em}0.75 & \textbf{1} & \textbf{1} & \textbf{1} & \textbf{1} & \textbf{1} & - & 0.68\\
\addlinespace[0.3em]
\multicolumn{8}{l}{\textbf{$n=1000$, $\epsilon=0.5$}}\\
\hline
\hspace{1em}0 & 0.05 & 0.14 & 0.32 & 0.378 & 0.374 & 0.13 & -\\
\hspace{1em}0.25 & 0.276 & 0.37 & 0.436 & 0.492 & 0.44 & - & -\\
\hspace{1em}0.5 & 0.528 & \textbf{0.59} & \textbf{0.584} & 0.47 & 0.328 & - & -\\
\hspace{1em}0.75 & \textbf{0.584} & 0.51 & 0.432 & 0.302 & 0.186 & - & 0.17\\
\addlinespace[0.3em]
\multicolumn{8}{l}{\textbf{$n=1000$, $\epsilon=5$}}\\
\hline
\hspace{1em}0 & 0.35 & \textbf{0.994} & \textbf{0.992} & 0.972 & 0.866 & 0.23 & -\\
\hspace{1em}0.25 & 0.53 & 0.788 & 0.94 & 0.978 & \textbf{0.99} & - & -\\
\hspace{1em}0.5 & 0.86 & 0.934 & 0.974 & 0.982 & \textbf{0.99} & - & -\\
\hspace{1em}0.75 & 0.954 & 0.972 & 0.978 & 0.982 & \textbf{0.986} & - & 0.368\\
\bottomrule
\end{tabular}
\label{tab::power}
\end{table}

To evaluate the performance of the proposed test relative to appropriate benchmarks \citep{kazan2023test}, as well as to investigate the effects of the parameters $q,\psi$, we conducted a simulation study. 
All code is made available in the \texttt{RPST} Github Repository. 
We tested sample sizes $n\in\{100, 500, 1000\}$, with $q \in\{ 0, 0.25, 0.5, 0.75\}$. Unless otherwise stated, we assumed that $n_1=n_2$. In Appendix~\ref{app::group}, we considered uneven groups, and found that the power decreases as the groups becoming increasingly uneven. Next, we considered $\psi(n) \in \{\tan^{-1}(n),  \ln(n + 1), n^{0.5} ,n,,n^2\}$. 
We primarily present results from normally distributed data. Results for other distributions, such as Exponential, and Lomax were very similar, and can be seen in Appendix~\ref{app::RPTS}. 
We measure the effect size by $\theta$, which one recalls is the ratio of the scale parameter of $F$ to that of $G$. 
For each combination, we set $\delta = 10^{-6}$ and ran the test on 500 simulated databases.
We considered privacy budgets given by $\eps=\eps_U+\eps_d= 0.5, 1, 5$, where $\eps_U=\eps_d$. 
We also conducted some simulations to test the optimal allocation of the privacy budget. We found that attributing 80\% of the budget to $\eps_U$ performed well, see Appendix~\ref{app::budget}. 

Given that, to our knowledge, there are no direct tests for differentially private scale, we compare our test to the test of tests framework introduced by \citet{kazan2023test} applied to the non-private Siegel--Tukey test. As a sort of post-hoc analysis, given the positive performance of the RPST tests, we also consider the tests of tests framework applied to the non-private version of the RPST test, with $\psi=\tan^{-1}$ and $q=0.75$. We do not compare to \citet{pena2022differentially} as the test of \citet{kazan2023test} has provably higher power \citep{kazan2023test}. 
We present the results with $m=\ceil{n/25}$ and $\alpha_0=0.1$, however, results for various $m$ and $\alpha_0$ can be seen in Appendix~\ref{app:tot}. The parameters $m=\ceil{n/25}$ and $\alpha_0=0.1$ were the best performing by visual inspection, when balancing both empirical size and power. 
If you examine the results for other values of $m$ and $\alpha_0$, it can be seen that our conclusions are not affected by the choice of $m$ and $\alpha_0$. 

We first discuss the empirical size of the tests, followed by the empirical power. 
Empirical sizes for RPST test and the test of tests can be seen in Table~\ref{tab:sizes}. Table~\ref{tab:sizes} shows the empirical sizes under $\epsilon=1$ with Normally distributed data. We defer the other results to Appendix~\ref{app::RPTS}, as similar phenomena were observed under other distributions and privacy parameters. First, for the RPST tests, the empirical sizes don't depend strongly on $\psi$ or $q$. Next, for all tests, the empirical sizes are within 2\% of the theoretical size.

Table~\ref{tab::power} displays the empirical power of the tests for normally distributed data, for $\epsilon=0.5,5$. 
We defer the results from other distributions and for $\epsilon=1$ to Appendix~\ref{app::RPTS}, as the conclusions are essentially identical. 
First, we see that there are some RPST tests that perform substantially better than the ``naive'' test, that is, taking $\psi(r)=r$ and $q=0$. 
For instance, taking $\psi=\tan^{-1}$ with $q=0.5$ or $q=0.75$ performs very well across all parameter settings, but particularly when $\epsilon<5$. For larger privacy budgets ($\epsilon=5$), we have that taking $\psi(r)$ to be $r$ or $r^2$ with $q=0.25$ performs the best. This is consistent with the tradeoff between sensitivity and power discussed in the previous section. 
We conclude that when chosen carefully, the rank transformation and percentile modification have a positive effect on power.

%

Next, comparing to the tests of tests version of the Siegel--Tukey tests, we see that for $n=100$ and $\epsilon=0.5$, the tests of tests tests have a higher power than the RPST tests, regardless of $\psi$ and $q$. However, for other parameter settings, especially for $\epsilon>1$ or $n>100$, we have that the RPST tests for the pairs of $q$ and $\psi$ mentioned previously, and even other versions of the test, outperforms the tests of tests tests. This effect is independent of the distribution, where the results can be seen in Appendix~\ref{app::RPTS}. 
We conclude that the specialized scale test outperforms a naive approach taken through existing, general private testing frameworks.

\section{Extensions to the Wilcoxon signed rank test}\label{sec::srtest}

The rank transformation and percentile modification can be applied to other rank-based testing procedures. As an example, we consider its application to the Wilcoxon Signed Rank Test. The Wilcoxon signed rank test applies to the setting where we observe $n$ pairs of data, say $(X_1,Y_1),\ldots,(X_n,Y_n)$ where $X_1\sim F$ and $Y_1\sim G$, and we want to test now whether
$$H_0\colon E(X_1)= E(Y_1)\ vs.\ H_1\colon E(X_1)\neq E(Y_1).$$
Note that $X_i$ is not necessarily independent of $Y_i$, but for $i\neq j$, $(X_i,Y_i)$ is independent of $(X_j,Y_j)$, and $(X_i,Y_i)\eqd (X_j,Y_j)$. 
In this case, the non-private test statistic is computed as follows. For each pair of observations, we compute the absolute difference $b_i=|Y_i-X_i|$ and the sign $s_i=sign(Y_i-X_i)$. Each observation gets a rank $r_i$, which is simply its linear rank amongst the $b_i$. We can then apply percentile modification and rank transformations to these ranks. The test statistic is then 
$$W_1\coloneqq W_1(\psi,Q) =\sum_{i=1}^n s_i\cdot\psi((r_i-Q) \vee 0 ).$$
Following similar logic as \citet{couch2018differentially}, we have that
\begin{theorem}\label{thm::WWWCRS}
    The global sensitivity of $W_1$ is bounded by $2 \psi(n - Q)$.
\end{theorem}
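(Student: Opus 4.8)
The plan is to bound the global sensitivity directly, by fixing a pair of adjacent databases that differ in a single pair — say the $k$-th — and tracking precisely how each ingredient of $W_1$ responds. Replacing the $k$-th pair $(X_k,Y_k)$ alters only the sign $s_k$ and the absolute difference $b_k=|Y_k-X_k|$; every other sign $s_i$ ($i\neq k$) is untouched, while the ranks $r_i$ may shift because the ordering of the $b_i$ changes. Writing $W_1'$, $s_k'$, $r_i'$ for the quantities on the neighbouring database, and setting $g(r)=\psi((r-Q)\vee 0)$, I would decompose
\[
W_1-W_1' = \bigl[s_k\, g(r_k)-s_k'\, g(r_k')\bigr] + \sum_{i\neq k} s_i\bigl[g(r_i)-g(r_i')\bigr],
\]
isolating the contribution of the perturbed pair from that of the ``bystanders'' $i\neq k$.

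The key structural fact is that relocating a single $b_k$ from rank $r_k$ to rank $r_k'$ reshuffles only the ranks of the observations it passes over: these form a contiguous block of old ranks (namely $r_k+1,\dots,r_k'$ if $b_k$ moves up, or $r_k',\dots,r_k-1$ if it moves down), each of which shifts by exactly one in the same direction, while all remaining bystanders keep their ranks. Since $\psi$ is increasing and $r\mapsto(r-Q)\vee 0$ is non-decreasing, $g$ is non-decreasing; bounding $|s_i|\le 1$ and using monotonicity of $g$ to discard the absolute values, the bystander sum telescopes over the contiguous block:
\[
\Bigl|\sum_{i\neq k} s_i\bigl[g(r_i)-g(r_i')\bigr]\Bigr| \le \sum_{j}\bigl[g(j)-g(j-1)\bigr] = \bigl|g(r_k')-g(r_k)\bigr|.
\]

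For the perturbed term, the triangle inequality together with $s_k,s_k'\in\{-1,+1\}$ yields the crude bound $g(r_k)+g(r_k')$. Adding the two pieces, the telescoped difference cancels the smaller of $g(r_k),g(r_k')$, leaving
\[
|W_1-W_1'| \le g(r_k)+g(r_k')+\bigl|g(r_k')-g(r_k)\bigr| = 2\max\{g(r_k),g(r_k')\} \le 2\psi(n-Q),
\]
where the final inequality uses $r_k,r_k'\le n$, monotonicity of $g$, and $Q\le n$. Taking the supremum over adjacent databases gives $\GS(W_1)\le 2\psi(n-Q)$. I expect the only genuinely delicate point to be the rank-shift bookkeeping — confirming that moving one $b_k$ shifts each bystander's rank by at most one, in a single contiguous monotone block, so that the telescoping collapses cleanly — after which the cancellation between the telescoped bystander difference and the perturbed-pair term is exactly what sharpens the estimate to $2\psi(n-Q)$ rather than a looser multiple.
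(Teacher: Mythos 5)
Your proposal is correct and follows essentially the same route as the paper's proof: isolate the contribution of the changed pair, observe that only a contiguous block of bystanders shifts rank by one (the paper's region $A_2$), telescope the increments of $\psi$ over that block, and let the telescoped term cancel against part of the changed-pair bound to land at $2\psi(n-Q)$. Your intermediate bound $2\max\{g(r_k),g(r_k')\}$ is a slightly cleaner way of packaging the same cancellation the paper performs with its $\psi(\max\{c_1+1-Q,0\})$ terms.
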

Therefore, we can define $\widetilde W_1=W_1+2Z\psi(n-Q)/\epsilon_U$ as the test statistic. 
We can call this the private rank transformed percentile modified signed rank test, or private RPSR test. 
We also have the following theorem 
\begin{theorem} \label{thm::rsasymp}
    Let $\psi: \nats \, \cup \, \{0\} \rightarrow \reals$ be a strictly-increasing, non-negative function that is $O(n^z)$ for some $z > 0$.  Under $H_0$, $\widetilde W_1 \cond \cN\left(0, \sum_{i = 1}^{n - Q} \psi^2(i)\right)$.
\end{theorem}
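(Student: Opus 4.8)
My plan is to reduce $\widetilde W_1$ to a weighted sum of i.i.d.\ signs and then invoke the Lindeberg--Feller central limit theorem; the argument should be considerably simpler than that of Theorem~\ref{thm::main-result}, because here the summands are genuinely independent under $H_0$ rather than being a random subset of a fixed set of ranks.

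\textbf{Step 1 (reduction).} Set $D_i = Y_i - X_i$, $b_i=|D_i|$ and $s_i=\sgn(D_i)$. Under $H_0$ the differences are symmetric about $0$, so by the standard signed-rank argument the sign vector, indexed by the rank of the magnitude, is i.i.d.\ uniform on $\{-1,+1\}$ and independent of the ranks. Reindexing $W_1=\sum_{i=1}^n s_i\psi((r_i-Q)\vee 0)$ by rank value and invoking $\psi(0)=0$ to annihilate the $Q$ central terms, I obtain the distributional identity $W_1\eqd\sum_{k=1}^{n-Q}\epsilon_k\psi(k)$ with $\epsilon_1,\epsilon_2,\dots$ i.i.d.\ Rademacher. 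Hence $\expect(W_1)=0$ and $\var(W_1)=\sum_{k=1}^{n-Q}\psi^2(k)=:s_m^2$, which already matches the claimed variance.

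\textbf{Step 2 (CLT and noise).} Writing $m=n-Q$, the summands $\epsilon_k\psi(k)$ are independent, mean zero, and bounded in absolute value by $\psi(m)=\max_k\psi(k)$, so the Lindeberg condition holds as soon as the negligibility ratio $\psi^2(m)/s_m^2\to 0$ (for large $m$ the truncation indicators vanish identically). Granting this, $W_1/s_m\cond\cN(0,1)$. The privatized statistic only adds $2Z\psi(m)/\eps_U$ for $Z$ standard Laplace; dividing by $s_m$, its coefficient is $(2/\eps_U)\sqrt{\psi^2(m)/s_m^2}\to 0$, so the noise term is $o_p(1)$, and Slutsky's theorem yields $\widetilde W_1/s_m\cond\cN(0,1)$. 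This is the assertion, read (as in Theorem~\ref{thm::main-result}) after standardizing the stated $\cN(0,s_m^2)$ limit.

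\textbf{Step 3 (the obstacle).} The crux is the negligibility ratio $\psi^2(n-Q)/\sum_{k=1}^{n-Q}\psi^2(k)\to 0$, and this is exactly where the growth hypothesis on $\psi$ must be spent, together with $n-Q\to\infty$ (implicit, since $Q=\floor{nq}$ with $q<1$). Monotonicity alone is not enough -- a geometrically growing $\psi$ would let the single top term carry a constant fraction of $s_m^2$ and destroy normality -- so the polynomial bound $\psi(k)=O(k^z)$ is essential for excluding such concentration. For the continuous, regularly growing transformations of interest (e.g.\ $\psi(r)\in\{\log(r+1),r^z\}$), I would establish the ratio by a Riemann-sum / Euler--Maclaurin comparison, showing $\sum_{k=1}^{m}\psi^2(k)\sim\int_0^m\psi^2$ and hence $\psi^2(m)=o\!\left(\int_0^m\psi^2\right)$, since a polynomially bounded increasing $\psi$ cannot place an asymptotically non-negligible share of its squared mass on the final unit interval. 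I expect this verification to be the delicate part, as it is precisely the step that converts the qualitative growth bound into the quantitative uniform asymptotic negligibility demanded by Lindeberg--Feller; particular care is needed to make the estimate uniform along the whole sequence $m\to\infty$ rather than merely along a subsequence.
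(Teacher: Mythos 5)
Your proposal is correct and follows essentially the same route as the paper's proof: both reduce $W_1$ under $H_0$ to a sum of independent symmetric summands $\pm\psi(k)$, apply a central limit theorem for independent, non-identically distributed variables (you invoke Lindeberg--Feller via negligibility of the maximal summand, where the paper verifies the Lyapunov condition with $\delta \ge 2$), and dispatch the Laplace noise through the same ratio $\psi(n-Q)\big/\bigl(\sum_{k=1}^{n-Q}\psi^2(k)\bigr)^{1/2}\to 0$ together with Slutsky's theorem. The negligibility ratio you isolate in Step 3 is exactly the quantity on which the paper's Lyapunov computation also turns, so your identification of the crux --- and of why the polynomial growth bound is what excludes a single dominant top rank --- matches the paper's argument.
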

The proofs of both of these theorems can be seen in Appendix~\ref{app::sr_proofs}. 
We note that \citet{couch2018differentially} introduced a private version of $W_1(h,0)$, where $h(x)=x$. We now compare the rank transformed, percentile modified version to theirs via a simulation study. 
We consider the same $\psi,q,\epsilon,n$ combinations as in the previous section. We consider normally and student $t_3$ distributed data, and consider the effect size to be the difference in means between the paired data means. 
As in the previous section, each combination of parameters was run 500 times. 

In terms of the results, the empirical size was acceptable and comparable between all combinations of $\psi$ and $q$, so we defer those results to the appendix. In fact, in the interest of space, we defer many of the results to the appendix. To illustrate our results, we simply present Table~\ref{tab::signed}, which displays the empirical power of the procedure under various parameters with $n=100$. Similar conclusions hold for larger sample sizes, and student $t_3$ distributed data, see Appendix~\ref{app::sr}. Here, we see that for small privacy budgets, there is some gain in using certain transformations and values of $q.$ For example, using $\psi(r)=\tan^{-1}(r)$ with $q=0.25$. On the other hand, there is not much gain for larger privacy budgets. Thus, we conclude if the privacy budget is small, i.e., $\epsilon\leq 1$, then we should use say, $\psi(r)=\tan^{-1}(r)$ with $q=0.25$. Otherwise, the test of \citet{couch2018differentially} or the test with $\psi(r)=r^2$ with $q=0.25$ are acceptable. 

\begin{table}[!t]
\caption{Empirical power for the RPSR test, for Normally distributed data, with an effect size of 0.5. Values within 1\% of the highest reported power are bolded. The pair dependence is modelled with a Gaussian copula. Notice how the RPSR test performs well when $\epsilon$ is small, but the gains are minimal or non-existent for larger privacy budgets.}
\centering
\begin{tabular}{lll|ccccc}
\toprule
$q$ & $n$ & $\epsilon$ & $\psi=\ $ $\tan^{-1}(r)$ & $\ln(r+1)$ & $r^{0.5}$ & $r$ & $r^2$\\
\midrule
\addlinespace[0.3em]
\cellcolor{gray!15}{\hspace{1em}0} & \cellcolor{gray!15}{100} & \cellcolor{gray!15}{0.5} & \cellcolor{gray!15}{\textbf{0.724}} & \cellcolor{gray!15}{\textbf{0.72}} & \cellcolor{gray!15}{0.646} & \cellcolor{gray!15}{0.488} & \cellcolor{gray!15}{0.278}\\
\hspace{1em}0.25 & 100 &  0.5 & \textbf{0.726} & 0.642 & 0.578 & 0.398 & 0.204\\
\cellcolor{gray!15}{\hspace{1em}0.5} & \cellcolor{gray!15}{100} & \cellcolor{gray!15}{0.5} & \cellcolor{gray!15}{0.546} & \cellcolor{gray!15}{0.438} & \cellcolor{gray!15}{0.314} & \cellcolor{gray!15}{0.202} & \cellcolor{gray!15}{0.1}\\
\hspace{1em}0.75 & 100 &  0.5 & 0.202 & 0.152 & 0.122 & 0.094 & 0.06\\
\addlinespace[0.3em]
\hline
\cellcolor{gray!15}{\hspace{1em}0} & \cellcolor{gray!15}{100} & \cellcolor{gray!15}{1} & \cellcolor{gray!15}{0.928} & \cellcolor{gray!15}{\textbf{0.942}} & \cellcolor{gray!15}{0.936} & \cellcolor{gray!15}{0.928} & \cellcolor{gray!15}{0.842}\\
\hspace{1em}0.25 & 100 &  1 & \textbf{0.952} & \textbf{0.944} & 0.926 & 0.872 & 0.712\\
\cellcolor{gray!15}{\hspace{1em}0.5} & \cellcolor{gray!15}{100} & \cellcolor{gray!15}{1} & \cellcolor{gray!15}{0.93} & \cellcolor{gray!15}{0.894} & \cellcolor{gray!15}{0.824} & \cellcolor{gray!15}{0.696} & \cellcolor{gray!15}{0.412}\\
\hspace{1em}0.75 & 100 &  1 & 0.658 & 0.476 & 0.412 & 0.242 & 0.138\\
\addlinespace[0.3em]
\hline
\cellcolor{gray!15}{\hspace{1em}0} & \cellcolor{gray!15}{100} & \cellcolor{gray!15}{5} & \cellcolor{gray!15}{0.968} & \cellcolor{gray!15}{\textbf{0.986}} & \cellcolor{gray!15}{\textbf{0.992}} & \cellcolor{gray!15}{\textbf{0.996}} & \cellcolor{gray!15}{\textbf{0.994}}\\
\hspace{1em}0.25 & 100 &  5 & 0.984 & \textbf{0.992} & \textbf{0.996} & \textbf{0.998} & \textbf{0.992}\\
\cellcolor{gray!15}{\hspace{1em}0.5} & \cellcolor{gray!15}{100} & \cellcolor{gray!15}{5} & \cellcolor{gray!15}{\textbf{0.996}} & \cellcolor{gray!15}{\textbf{0.994}} & \cellcolor{gray!15}{\textbf{0.994}} & \cellcolor{gray!15}{\textbf{0.99}} & \cellcolor{gray!15}{0.972}\\
\hspace{1em}0.75 & 100 &  5 & 0.966 & 0.958 & 0.958 & 0.928 & 0.846\\
\bottomrule
\end{tabular}
\label{tab::signed}
\end{table}


\section{Conclusion}
We have investigated a new class of nonparametric, robust rank tests, called percentile modified rank transformed tests. We have introduced and studied differentially private versions of these tests. We have shown that these tests often perform better relative to existing tests. Some open questions remain, such as controlling the type I error under pure differential privacy, deriving asymptotic results under alternative hypotheses and private scale testing for multivariate data. Another interesting direction of future work is deriving the optimal transformation given a class of distributions. 

\subsubsection*{Acknowledgments}
The authors acknowledge the support of the Natural Sciences and Engineering Research Council of Canada (NSERC). Cette recherche a \'et\'e financ\'ee par le Conseil de recherches en sciences naturelles et en g\'enie du Canada (CRSNG),  [DGECR-2023-00311].

\newpage
\bibliographystyle{apalike}
\bibliography{main}
\newpage
\appendix

\section{Proofs}\label{sec::proofs}
\subsection{A useful structural theorem}
We begin by stating a useful structural theorem. 

\begin{theorem} \label{thm::sk}
    Let $x_1, x_2, \dots, x_n \in \reals$ (these values need not be distinct). Let $S_k$ be a random variable equal to the sum of $k$ values drawn from $(x_1, \dots, x_n)$ in a simple random sample without replacement. Then, we have that:
    \begin{enumerate}
        \item $\expect(S_k) = \frac{k}{n} \sum_{i = 1}^{n} x_i$. \label{thm::sk1}
        \item $\var(S_k) = \frac{k}{n} \sum_{i = 1}^{n} x_i^2 + 2 \frac{k(k - 1)}{n(n - 1)} \sum_{i = 1}^{n} \sum_{j = i + 1}^{n} x_i x_j - \left(\frac{k}{n} \sum_{i = 1}^{n} x_i\right)^2$. \label{thm::sk2}
    \end{enumerate}
\end{theorem}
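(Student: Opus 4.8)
The plan is to prove both parts simultaneously via the method of inclusion indicators, which turns the combinatorics of sampling without replacement into routine first- and second-moment calculations. First I would introduce, for each $i \in \{1,\ldots,n\}$, the indicator $\ind_i = \ind\{x_i \text{ is drawn into the sample}\}$, so that $S_k = \sum_{i=1}^n \ind_i x_i$. By the symmetry of simple random sampling without replacement, every individual item is equally likely to be included, and a direct count gives $\Pr(\ind_i = 1) = \binom{n-1}{k-1}/\binom{n}{k} = k/n$. Part~\ref{thm::sk1} then follows immediately from linearity of expectation: $\expect(S_k) = \sum_{i=1}^n \expect(\ind_i)\, x_i = \frac{k}{n}\sum_{i=1}^n x_i$.

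For Part~\ref{thm::sk2} I would compute $\var(S_k) = \expect(S_k^2) - (\expect(S_k))^2$. Expanding the square gives
\begin{equation*}
S_k^2 = \sum_{i=1}^n \ind_i^2\, x_i^2 + \sum_{i \neq j} \ind_i \ind_j\, x_i x_j.
\end{equation*}
The key inputs are that $\ind_i^2 = \ind_i$ (so $\expect(\ind_i^2) = k/n$), and that for $i \neq j$ the joint inclusion probability is $\expect(\ind_i \ind_j) = \Pr(\ind_i = 1, \ind_j = 1) = \binom{n-2}{k-2}/\binom{n}{k} = \frac{k(k-1)}{n(n-1)}$, again by a counting argument. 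Taking expectations and rewriting the unordered off-diagonal sum as $\sum_{i \neq j} x_i x_j = 2\sum_{i=1}^{n}\sum_{j=i+1}^{n} x_i x_j$ yields
\begin{equation*}
\expect(S_k^2) = \frac{k}{n}\sum_{i=1}^n x_i^2 + 2\frac{k(k-1)}{n(n-1)}\sum_{i=1}^{n}\sum_{j=i+1}^{n} x_i x_j.
\end{equation*}
Subtracting $(\expect(S_k))^2 = \left(\frac{k}{n}\sum_{i=1}^n x_i\right)^2$ from this gives exactly the claimed expression.

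There is no serious obstacle here; the result is classical and the argument is essentially bookkeeping. The only point requiring a little care is the justification of the inclusion probabilities $k/n$ and $\tfrac{k(k-1)}{n(n-1)}$, which I would establish by the hypergeometric counting argument above rather than appealing to a with-replacement heuristic, since the without-replacement dependence between $\ind_i$ and $\ind_j$ is precisely what produces the $\frac{k(k-1)}{n(n-1)}$ covariance factor (as opposed to $(k/n)^2$) and hence the finite-population correction in the variance. I would also remark that the statement holds without any distinctness assumption on the $x_i$, since the indicator decomposition never uses distinctness.
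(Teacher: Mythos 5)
Your proposal is correct and follows essentially the same route as the paper's proof: both decompose $S_k$ via inclusion indicators, compute the single and pairwise inclusion probabilities $k/n$ and $\binom{n-2}{k-2}/\binom{n}{k} = \tfrac{k(k-1)}{n(n-1)}$ by the hypergeometric counting argument, and then expand $\expect(S_k^2)$ to obtain the variance. No substantive differences.
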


\begin{proof}
    We begin with assertion \ref{thm::sk1}. Let $A = (A_1, A_2, \dots, A_k)$ be a random sample from $(x_1, \dots, x_n)$. For each $i \in \{1, 2, \dots, n\}$, let $\delta_i = \mathbf{1}\{x_i \in A\}$. Then,

    \begin{align*}
        \expect(S_k) &= \expect \left(\sum_{i = 1}^k A_i \right) \\
        &= \expect \left(\sum_{i = 1}^n x_i \delta_i \right) \\
        &\text{(As $A_i \in \{x_1, \dots, x_n\}$ for all $i \in \{1, 2, \dots, k\}$)} \\
        &= \sum_{i = 1}^n \expect(x_i \delta_i) \\
        &\stackrel{(!)}{=} \sum_{i = 1}^n x_i \, \frac{k}{n} \\
        &\text{(Due to the properties of simple random samples without replacement)} \\
        &= \frac{k}{n} \sum_{i = 1}^{n} x_i.
    \end{align*}

    This completes the proof of assertion \ref{thm::sk1}.

    To prove assertion \ref{thm::sk2}, it suffices to show that

    \begin{align*}
        \expect(S_k^2) = \frac{k}{n} \sum_{i = 1}^{n} x_i^2 + 2 \frac{k(k - 1)}{n(n - 1)} \sum_{i = 1}^{n} \sum_{j = i + 1}^{n} x_i x_j \text{.}
    \end{align*}

    We may write:

    \begin{align*}
        \expect(S_k^2) &= \expect \left[ \left(\sum_{i = 1}^k A_i \right)^2 \right]= \expect \left(\sum_{i = 1}^k A_i^2 + 2 \sum_{i = 1}^{k - 1} \sum_{j = i + 1}^{k} A_i A_j\right).
    \end{align*}

    To make further progress, we introduce the indicators $\delta_{ij}$, where

    \begin{align*}
        \delta_{ij} = \begin{cases}
            1 \quad &\text{if $\{x_i, x_j\} \subseteq A$} \\
            0 \quad &\text{otherwise}
        \end{cases}\cdot
    \end{align*}

    We then have that:

    \begin{align*}
        \expect(S_k^2) &= \expect \left(\sum_{i = 1}^k A_i^2\right) + 2 \,\expect \left(\sum_{i = 1}^{k - 1} \sum_{j = i + 1}^{k} A_i A_j\right) \\
        &= \expect \left(\sum_{i = 1}^n x_i^2 \delta_i \right) + 2 \,\expect \left(\sum_{i = 1}^{n - 1} \sum_{j = i + 1}^{n} x_i x_j \delta_{ij} \right) \\
        &= \sum_{i = 1}^n x_i^2 \, \expect(\delta_i) + 2 \, \sum_{i = 1}^{n - 1} \sum_{j = i + 1}^{n} x_i x_j \expect(\delta_{ij}) \\
        &= \sum_{i = 1}^n x_i^2 \, \frac{\binom{n - 1}{k - 1}}{\binom{n}{k}} + 2 \, \sum_{i = 1}^{n - 1} \sum_{j = i + 1}^{n} x_i x_j \frac{\binom{n - 2}{k - 2}}{\binom{n}{k}} \\
        &\begin{array}{r l}&\text{(As the expectation of an indicator of an event} \\ 
        &\text{is the probability of that event occurring)} \\
        \end{array} \\
        &= \frac{k}{n} \sum_{i = 1}^{n} x_i^2 + 2 \frac{k(k - 1)}{n(n - 1)} \sum_{i = 1}^{n} \sum_{j = i + 1}^{n} x_i x_j \text{.}
    \end{align*}

    Assertion \ref{thm::sk2} then follows from the fact that $\var({X}) = \expect(X^2) - \expect(X)^2$ for any random variable $X$.  
\end{proof}
\subsection{Results justifying the private group sizes}
The following two results ensure the validity (but not the privacy) of the algorithms we use to estimate the group sizes in the RPST tests.  
\begin{lemma}\label{thm::eqvar}
    Let $S_k$ be defined as in Theorem~\ref{thm::sk}. We have that $\var(S_k) = \var(S_{n - k})$.
\end{lemma}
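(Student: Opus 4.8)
The plan is to exploit a complementarity between a size-$k$ sample and its complement, which is a sample of size $n-k$. First I would observe that drawing a simple random sample of $k$ points without replacement from $(x_1,\dots,x_n)$ is equivalent to selecting a uniformly random index set $A\subseteq\{1,\dots,n\}$ with $|A|=k$, so that its complement $A^c$ is a subset of size $n-k$. Writing $T=\sum_{i=1}^n x_i$ for the deterministic total, the indices partition into those in $A$ and those in $A^c$, hence the sample sum over the complementary draw is exactly $T-S_k$.

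Next I would argue that $T-S_k\eqd S_{n-k}$. This follows because the map $A\mapsto A^c$ is a bijection between the $k$-subsets and the $(n-k)$-subsets of $\{1,\dots,n\}$, and it pushes the uniform distribution on $k$-subsets forward to the uniform distribution on $(n-k)$-subsets; therefore $A^c$ is a genuine simple random sample of size $n-k$, and its sum $T-S_k$ has the same law as $S_{n-k}$. Taking variances and using that $T$ is a constant and that variance is invariant under additive shifts and sign changes, I would conclude $\var(S_{n-k})=\var(T-S_k)=\var(S_k)$, which is the claim.

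The only point requiring care is justifying that the complement of a uniformly chosen $k$-subset is a uniformly chosen $(n-k)$-subset, and this is the crux of the argument; it is immediate from the bijection above, but it is worth stating explicitly since it is what converts the size-$k$ problem into the size-$(n-k)$ problem. As an independent check that avoids any distributional reasoning, one could instead substitute into the closed form from Theorem~\ref{thm::sk}: eliminating the cross term via $(\sum_i x_i)^2=\sum_i x_i^2+2\sum_{i<j}x_ix_j$ collapses the variance to $\frac{k(n-k)}{n(n-1)}\bigl(\sum_{i=1}^n x_i^2-\tfrac1n(\sum_{i=1}^n x_i)^2\bigr)$, in which both factors are manifestly invariant under $k\mapsto n-k$. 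I expect the distributional argument to be the cleanest, with this algebraic simplification serving as a fallback.
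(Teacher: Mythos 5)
Your proposal is correct and its main argument — identifying the sum of the complement of a uniformly chosen $k$-subset with $\sum_{i=1}^n x_i - S_k$, noting the complement map is a measure-preserving bijection onto $(n-k)$-subsets, and invoking invariance of variance under additive shifts and negation — is essentially the same proof as the paper's. The algebraic fallback via the closed form $\frac{k(n-k)}{n(n-1)}\bigl(\sum_{i=1}^n x_i^2 - \tfrac{1}{n}(\sum_{i=1}^n x_i)^2\bigr)$ is a correct independent check but not needed.
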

\begin{proof}
    Let $A = (a_1, \dots, a_k)$ be a particular simple random sample without replacement from $(x_1, x_2, \dots, x_n)$. Consider the set of elements $B = (b_1, \dots, b_{n - k})$ which are not present in $A$. $A$ and $B$ are related in the following two ways:

    \begin{itemize}
        \item Since $\binom{n}{k} = \binom{n}{n -k}$, the number of possible samples of size $k$ is equal to the number of possible samples of size $n - k$. Therefore, the probability of obtaining sample $A$ from $k$ draws without replacement is equal to the probability of obtaining sample $B$ from $n - k$ draws without replacement.
        \item Since $A \cup B = \{x_1, x_2, \dots, x_n\}$, we must have that $\sum_{i = 1}^{k} a_i = \sum_{i = 1}^{n} x_i - \sum_{i = 1}^{n - k} b_i$.
    \end{itemize}

    Taken together, and noting that $A$ is an arbitrary sample of size $k$, we have that $S_k \stackrel{d}{=} \sum_{i = 1}^{n} x_i - S_{n - k}$. The desired result then follows from the fact that neither the addition of a constant nor negation affects the variance of a random variable.
\end{proof}
Next, we show that $\var(S_k)$ is a decreasing function of $|k - \frac{n}{2}|$, which justifies our method of computing the $p$-value. 
\begin{lemma} \label{lem::vk}
    Let $x_1, \dots, x_n$ and $S_k$ be defined as in Theorem~\ref{thm::sk}. Suppose that $x_h \geq 0$ for all $h \in \{1, 2, \dots, n\}$ and that there exists $i, j \in \{1, 2, \dots, n\}$ such that $x_i \neq x_j$. Then, $V(k) = \var(S_k)$ is decreasing in $|k - \frac{n}{2}|$.
\end{lemma}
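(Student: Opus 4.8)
The plan is to reduce the variance formula from Theorem~\ref{thm::sk} to a transparent expression in which all of the dependence on $k$ is isolated in a single factor $k(n-k)$. First I would set $A = \sum_{i=1}^n x_i^2$, $C = \sum_{i=1}^n x_i$, and $B = \sum_{i=1}^{n-1}\sum_{j=i+1}^n x_i x_j$, so that assertion \ref{thm::sk2} reads $V(k) = \tfrac{k}{n}A + \tfrac{2k(k-1)}{n(n-1)}B - \tfrac{k^2}{n^2}C^2$. The key algebraic observation is the identity $C^2 = A + 2B$, which lets me eliminate the awkward cross-term via $2B = C^2 - A$.

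Substituting this and collecting the coefficients of $A$ and of $C^2$ separately, I expect both to simplify to multiples of $\tfrac{k(n-k)}{n(n-1)}$. Concretely, the coefficient of $A$ becomes $\tfrac{k}{n} - \tfrac{k(k-1)}{n(n-1)} = \tfrac{k(n-k)}{n(n-1)}$, while the coefficient of $C^2$, put over the common denominator $n^2(n-1)$, becomes $\tfrac{k(k-1)}{n(n-1)} - \tfrac{k^2}{n^2} = -\tfrac{k(n-k)}{n^2(n-1)}$. Factoring out the shared factor then yields
\begin{equation*}
V(k) = \frac{k(n-k)}{n(n-1)}\left(A - \frac{C^2}{n}\right) = \frac{k(n-k)}{n(n-1)}\sum_{i=1}^n\left(x_i - \bar x\right)^2,
\end{equation*}
where $\bar x = C/n$. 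The bracketed factor $\sum_i (x_i - \bar x)^2$ is independent of $k$ and, by the hypothesis that $x_i \neq x_j$ for some $i,j$, is strictly positive; note in passing that the nonnegativity assumption $x_h \geq 0$ is not actually required for this particular claim.

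To finish, I would complete the square in the remaining $k$-dependence, writing $k(n-k) = \tfrac{n^2}{4} - \left(k - \tfrac{n}{2}\right)^2$. Then
\begin{equation*}
V(k) = \frac{1}{n(n-1)}\left(\sum_{i=1}^n (x_i - \bar x)^2\right)\left(\frac{n^2}{4} - \left(k - \frac{n}{2}\right)^2\right),
\end{equation*}
and since the leading constant is strictly positive, $V(k)$ is a strictly decreasing function of $(k - n/2)^2$, hence of $|k - n/2|$, which is the claim. This is also a good consistency check against the symmetry $V(k) = V(n-k)$ proved in Lemma~\ref{thm::eqvar}, since the expression above depends on $k$ only through $(k-n/2)^2$.

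I do not anticipate a genuine conceptual obstacle: the argument is essentially bookkeeping. The only steps requiring care are the correct application of the identity $C^2 = A + 2B$ and keeping the two rational coefficients straight when combining them over the denominator $n^2(n-1)$; once the factorization $V(k) = \tfrac{k(n-k)}{n(n-1)}\sum_i(x_i-\bar x)^2$ is obtained, the monotonicity conclusion is immediate.
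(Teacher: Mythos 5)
Your proof is correct, and it takes a genuinely different (and arguably cleaner) route than the paper's. The paper extends $V$ to a continuous function of $k$ on $[0,n]$, computes $V'$ and the (constant) second derivative $V''$, and invokes the Cauchy--Schwarz inequality $\left(\sum_i x_i\right)^2 \leq n\sum_i x_i^2$ (strict unless all $x_i$ coincide) to conclude that $V$ is a downward-opening parabola with vertex at $k=n/2$. You instead eliminate the cross term via $C^2 = A + 2B$ and factor the quadratic explicitly into the classical finite-population identity $V(k) = \tfrac{k(n-k)}{n(n-1)}\sum_i (x_i-\bar x)^2$, after which completing the square in $k$ makes the monotonicity in $\left|k-\tfrac{n}{2}\right|$ immediate. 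Both arguments ultimately rest on $V$ being a concave quadratic in $k$, and your Cauchy--Schwarz step is implicit in the positivity of $\sum_i(x_i-\bar x)^2$; but your version buys an explicit closed form for $\var(S_k)$, makes the symmetry $V(k)=V(n-k)$ of Lemma~\ref{thm::eqvar} manifest, avoids the slightly awkward device of differentiating with respect to an integer parameter, and correctly observes that the hypothesis $x_h\geq 0$ is not needed (it is equally unused in the paper's Cauchy--Schwarz argument). Your coefficient computations check out: $\tfrac{k}{n}-\tfrac{k(k-1)}{n(n-1)}=\tfrac{k(n-k)}{n(n-1)}$ and $\tfrac{k(k-1)}{n(n-1)}-\tfrac{k^2}{n^2}=-\tfrac{k(n-k)}{n^2(n-1)}$, so the factorization is valid and the proof is complete.
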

\begin{proof}
   Applying Theorem \ref{thm::sk}, we have that:

    \begin{align*}
        V(k) &= \frac{k}{n} \sum_{i = 1}^{n - 1} x_i^2 + 2 \frac{k(k - 1)}{n(n - 1)} \sum_{i = 1}^{n} \sum_{j = i + 1}^{n} x_i x_j - \left(\frac{k}{n} \sum_{i = 1}^{n} x_i\right)^2 \\
        &\propto k \sum_{i = 1}^{n - 1} x_i^2 + 2 \frac{k^2 - k}{n - 1} \sum_{i = 1}^{n} \sum_{j = i + 1}^{n} x_i x_j - \frac{\left(k \sum_{i = 1}^{n} x_i\right)^2}{n}.
    \end{align*}

    We will extend the domain of $V$ to $\left[0, n\right]$ and take the first two derivatives with respect to $k$. Starting with the first derivative:

    \begin{align*}
        V'(k) &\propto \sum_{i = 1}^{n - 1} x_i^2 + 2 \frac{2k - 1}{n - 1} \sum_{i = 1}^{n} \sum_{j = i + 1}^{n} x_i x_j - \frac{2k}{n} \left(\sum_{i = 1}^{n} x_i\right)^2 \\
        &= \sum_{i = 1}^{n - 1} x_i^2 + 2 \frac{2k - 1}{n - 1} \sum_{i = 1}^{n} \sum_{j = i + 1}^{n} x_i x_j - \frac{2k}{n} \sum_{i = 1}^{n} x_i^2 - \frac{4k}{n} \sum_{i = 1}^{n} \sum_{j = i + 1}^{n} x_i x_j.
    \end{align*}

    It can easily be verified that $V'(\frac{n}{2}) = 0$, indicating the possible existence of a maximum or minimum there. We investigate further by taking the second derivative of $V$ with respect of $k$, with the aim of showing that it is negative:

    \begin{align*}
        V''(k) &\propto \frac{4}{n - 1} \sum_{i = 1}^{n} \sum_{j = i + 1}^{n} x_i x_j - \frac{2}{n} \left(\sum_{i = 1}^{n} x_i\right)^2 \\
        &= \frac{4}{n - 1} \sum_{i = 1}^{n} \sum_{j = i + 1}^{n} x_i x_j - \frac{2}{n} \sum_{i = 1}^{n} x_i^2 - \frac{4}{n} \sum_{i = 1}^{n} \sum_{j = i + 1}^{n} x_i x_j \\
        &= \frac{4}{n^2 - n} \sum_{i = 1}^{n} \sum_{j = i + 1}^{n} x_i x_j - \frac{2}{n} \sum_{i = 1}^{n} x_i^2 \\
        &\propto \frac{2}{n - 1} \sum_{i = 1}^{n} \sum_{j = i + 1}^{n} x_i x_j - \sum_{i = 1}^{n} x_i^2 \\
        &= \frac{2}{n - 1} \sum_{i = 1}^{n} \sum_{j = i + 1}^{n} x_i x_j - \sum_{i = 1}^{n} x_i^2  + \frac{1}{n - 1} \sum_{i = 1}^n x_i^2 - \frac{1}{n - 1} \sum_{i = 1}^n x_i^2\\
        &= \frac{1}{n - 1}\left(\sum_{i = 1}^n x_i\right)^2 - \frac{n}{n - 1} \sum_{i = 1}^n x_i^2 \\
        &\propto \left(\sum_{i = 1}^n x_i\right)^2 - n \sum_{i = 1}^n x_i^2 .
    \end{align*}
    To show that the above expression is negative under our hypotheses, we employ the Cauchy--Schwarz inequality. 
    We obtain:
    \begin{align*}
        \left|\sum_{i = 1}^n x_i \right| &\leq \sqrt{\sum_{i = 1}^n x_i^2} \, \sqrt{\sum_{i = 1}^n 1^2} \iff \left|\sum_{i = 1}^n x_i \right| &\leq \sqrt{\sum_{i = 1}^n x_i^2} \, \sqrt{n}   \iff \left(\sum_{i = 1}^n x_i \right)^2 &\leq n \, \sum_{i = 1}^n x_i^2.
    \end{align*}

    Therefore, the aforementioned expression is non-positive, and is equal to 0 if and only if $x_1 = x_2 = \dots = x_n$. This cannot be the case under our assumptions, and so $V''$ is negative. Since $V''$ is also a constant, we may draw the following conclusions:

    \begin{enumerate}
        \item $V$ attains a maximum at $k = \frac{n}{2}$.
        \item $V$ is decreasing in $\left|k - \frac{n}{2}\right|$.
    \end{enumerate}

    This is what we wanted to show.
\end{proof}

\subsection{The privacy of the test}
We now, prove the privacy of the test, which follows mainly from the following lemma: 
\begin{lemma} \label{lem::rtpmstsens}
It holds that for all $Q\in \{0,\ldots,n\}$, and all increasing, non-negative real functions $\psi$, 
\begin{align*}
      \GS(U_1)\leq   \max \left\{\psi(n - Q), \psi(n - Q) + \psi(n - Q - 1) - \bar\psi_Q  \right\}.
    \end{align*}
\end{lemma}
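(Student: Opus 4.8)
The plan is to reduce the sensitivity question to a statement about how a single-record change perturbs the group-1 rank sum, and then to exploit a monotonicity (``valley-shape'') property of the transformed ranks. Write $S_1=\sum_{i=1}^{n_1}\psi(r_i)$, so that $U_1=S_1-\mu_1$ with $\mu_1=(n_1/n)\sum_{i=1}^{n-Q}\psi(i)=n_1\bar\psi_Q$. Index the combined sample by sorted position $p\in\{1,\dots,n\}$, let $c(p)$ be the center-outward rank attached to position $p$ (so $c(1)=n-Q$, $c(n)=n-Q-1$, $c(n-1)=n-Q-2$, and the $Q$ central positions receive $c=0$), and set $v_p=\psi(c(p))$. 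The structural backbone of the argument is the observation that $v$ is valley-shaped: the most extreme positions carry the largest center-outward ranks and $\psi$ is increasing, so $v$ is nonincreasing from $p=1$ down to the central block, equals $\psi(0)=0$ across that block, and is nondecreasing from the block up to $p=n$. Moreover $\max_p v_p=v_1=\psi(n-Q)$, whereas the largest value attained on the upper half is $v_n=\psi(n-Q-1)$; this asymmetry is what ultimately produces the two distinct ranks in the bound.

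Next I would describe an adjacency concretely. Writing $\chi_p\in\{0,1\}$ for the indicator that the point at position $p$ lies in group 1, we have $S_1=\sum_p\chi_p v_p$. Replacing one record moves its point from sorted position $p_0$ to $p_1$ (say $p_0<p_1$), shifts every point originally at positions $p_0+1,\dots,p_1$ down by exactly one position, and may also flip that record's group label. A direct bookkeeping computation yields the identity
\[
S_1^{\mathrm{new}}-S_1^{\mathrm{old}}=\sum_{p=p_0+1}^{p_1}\chi^{\mathrm{old}}_p\,(v_{p-1}-v_p)+\chi^{\mathrm{new}}_{p_1}v_{p_1}-\chi^{\mathrm{old}}_{p_0}v_{p_0},
\]
with an analogous formula when $p_1<p_0$. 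The first sum is the aggregate effect of the one-position shift on the other group-1 points, and the trailing terms are the contribution of the moved record itself. Since flipping the label changes $n_1$ by one and hence $\mu_1$ by exactly $\pm\bar\psi_Q$, the quantity $U_1^{\mathrm{new}}-U_1^{\mathrm{old}}$ equals the displayed expression, plus $0$ if the label is unchanged and $\mp\bar\psi_Q$ if it flips.

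I would then split into cases. If the label is unchanged, then $\chi^{\mathrm{old}}_{p_0}=\chi^{\mathrm{new}}_{p_1}$, and for a group-1 record the telescoping identity $v_{p_1}-v_{p_0}=-\sum_{p=p_0+1}^{p_1}(v_{p-1}-v_p)$ collapses the difference to $\sum_{p=p_0+1}^{p_1}(\chi_p-1)(v_{p-1}-v_p)$, a signed sum with coefficients in $\{-1,0\}$; monotonicity of $v$ on each half makes the extremal value a single telescoped difference of magnitude at most $v_1-0=\psi(n-Q)$, and the group-2 case is even smaller. If the label flips (say group 2 to group 1), the shift sum is maximized, using monotonicity on the lower half, by a full low-side telescoping to at most $\psi(n-Q)$, while the moved record's own term is a single rank $v_{p_1}$, bounded by $\psi(n-Q-1)$ when it lands on the upper side. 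One checks the two large contributions cannot both equal $\psi(n-Q)$: a full low-side telescoping forces the record onto the upper half (own term $\le\psi(n-Q-1)$), and landing at the bottom leaves only upper-half terms for the shift sum. Hence this case is bounded by $\psi(n-Q)+\psi(n-Q-1)-\bar\psi_Q$, and taking the maximum over the cases gives the claim.

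The main obstacle is precisely this last piece of bookkeeping. A naive estimate that bounds each of the up-to-$n$ shifted points independently, or that simply adds the record's own rank change to the shift sum at their separate maxima, gives the far weaker $2\psi(n-Q)$. Avoiding this requires (i) using monotonicity of $v$ on each half so that the adversarially chosen group-1 terms telescope to a single difference rather than accumulating, and (ii) tracking carefully that the full shift contribution $\psi(n-Q)$ and a near-maximal own rank cannot be realized at the lower extreme simultaneously, which is exactly what distinguishes $\psi(n-Q)+\psi(n-Q-1)$ from $2\psi(n-Q)$ and forces the asymmetric pair of ranks appearing in the statement.
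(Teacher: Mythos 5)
Your proposal is correct and follows essentially the same route as the paper's proof: both decompose the effect of an adjacent change into the moved record's own rank, the one-position shift of the intermediate records, and the $\bar\psi_Q$ change in the centering term when the group label flips, and both rely on the monotone (valley-shaped) structure of the center-outward ranks to telescope the shift contribution and to note that the full lower-side telescope $\psi(n-Q)$ is incompatible with the own term also attaining $\psi(n-Q)$. Your version packages the paper's region-by-region bookkeeping (the $A_j$ and $b_j$) into a single algebraic identity, but the underlying argument is the same.
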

\noindent We will prove this lemma following the proof of Theorem~\ref{thm::privacy}.

\begin{proof}[Proof of Theorem~\ref{thm::privacy}]
From Lemma~\ref{lem::rtpmstsens} and Theorem~\ref{thm::lapmech}, we have that $\widetilde U_1$ is $\eps_U$-differentially private. Furthermore, the properties of the Laplace distribution directly yield that $d_1^*$ is $(\eps_d,\delta)$-differentially private, see also Theorem A.6 of \cite{couch2019differentially}, with the oversight corrected as given in Appendix~\ref{app::Correction}. 
Lastly, basic composition (Theorem~\ref{thm::comp}) gives the final result. 
\end{proof}
In order to prove Lemma~\ref{lem::rtpmstsens}, we first prove a simpler theorem, a bound on the global sensitivity of $U_1$ when $Q=0$. 
\begin{lemma}\label{thm::rtstsens}
It holds that
    \begin{align*}
       \GS(U_1(\psi,0))\leq  \max \left\{\psi(n) - \psi(1), \psi(n) + \frac{1}{n} \sum_{i = 1}^n  \psi(i) - \psi(1) - \psi(2)\right\} \text{.}
    \end{align*}
\end{lemma}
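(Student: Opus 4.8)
The plan is to reduce the claim to a statement about the group-1 rank-sum and then exploit the unimodal shape of the center-outward ranking. First I would note that under the adjacency defining $\cA_n$ the group labels and the sizes $n_1,n_2$ are held fixed while a single value is altered, so the centering term $\mu_1=\tfrac{n_1}{n}\sum_{i=1}^{n}\psi(i)$ is identical for $\bfy_n$ and $\bfz_n$. Hence $\GS(U_1(\psi,0))=\GS(T)$ where $T=\sum_{i=1}^{n_1}\psi(r_i)$ is the sum of transformed ranks over group 1, and it suffices to bound $|T(\bfy_n)-T(\bfz_n)|$. For $Q=0$ the center-outward ranks are a permutation of $\{1,\dots,n\}$, so I would record the key structural fact: writing $\rho(p)$ for the rank attached to the $p$-th smallest order statistic, the map $\rho$ is ``V-shaped''—it equals $n$ at $p=1$, equals $n-1$ at $p=n$, and decreases monotonically toward its minimum value $1$ at the central position as $p$ moves inward from either end. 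Consequently $i\mapsto \psi(\rho(i))$ is monotone along each arm of the V.

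Next I would describe the effect of changing one value: in the combined sorted sample this relocates a single observation across a contiguous block of order statistics, cyclically shifting the ranks of every observation strictly between its old and new positions by one step. Bookkeeping this shift, the induced change in $T$ can be written as a signed sum $\sum_{q}\ind[q\ \text{opposite group}]\bigl(\psi(\rho(q))-\psi(\rho(q-1))\bigr)$ over the swept block, the overall sign depending only on which group the moved observation belongs to and on the direction of the move. The crucial simplification is that the full block sum (over \emph{all} observations in the block) telescopes to a difference of $\psi(\rho(\cdot))$ at the two endpoints; combining this with the moved observation's own rank change cancels its contribution and leaves a one-signed sum over only the observations of the group opposite to the one that moved.

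I would then bound this selected partial sum by the total one-signed variation of $\psi(\rho(\cdot))$ along the relevant monotone arm, which telescopes to a difference between $\psi$ at an extreme rank and $\psi$ at the central rank. Carrying this out in cases—according to whether the relocated observation lies in group 1 or group 2, and whether it sweeps across the minimal order statistic (rank $n$) or only interior ranks—produces the two competing quantities: the interior/monotone case telescopes cleanly to $\psi(n)-\psi(1)$, while the case in which a group-1 observation acquires the extreme rank $\psi(n)$ and the swept opposite-group contribution is controlled by its average $\bar\psi_0=\tfrac1n\sum_{i=1}^n\psi(i)$ rather than telescoped gives $\psi(n)+\bar\psi_0-\psi(1)-\psi(2)$. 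Taking the maximum over the cases yields the stated bound; in fact the interior case already governs the worst case, so the displayed maximum is a (slightly conservative) upper estimate.

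The step I expect to be most delicate is the bookkeeping of the simultaneous rank shifts: a single value change perturbs the ranks of an entire block of observations at once, and a naive term-by-term estimate would be hopelessly loose. The telescoping identity together with monotonicity on each arm of the V-shaped profile is exactly what collapses this many-rank perturbation into a two-endpoint difference; getting the signs and the opposite-group indicators right in each move direction, and checking the degenerate configurations—the moved point becoming the new minimum or maximum, and the parity of the central position—is where the care is needed.
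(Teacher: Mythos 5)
Your proof founders on its very first step. You assert that under the paper's adjacency relation the group labels and the sizes $n_1,n_2$ are held fixed, so that the centering term $\mu_1=\tfrac{n_1}{n}\sum_{i=1}^n\psi(i)$ cancels between neighbouring databases. That is not the adjacency the paper uses: a row of the database carries both a value and a group label, and a neighbouring database may change either or both. (This is forced by the rest of the construction: the disparity $d_1=|n_1-n/2|$ is privatized with Laplace noise of scale $1/\eps_d$, which presumes that a single row change can alter $n_1$ by one.) The paper's proof therefore has a second main case, ``the changed row switches its group,'' and it is precisely there that the term $\tfrac{1}{n}\sum_{i=1}^n\psi(i)$ enters: when $n_1$ changes by one, the centering constant $\mu_1$ changes by $\tfrac{1}{n}\sum_{i=1}^n\psi(i)$, and this shift is added to the rank-shift effects. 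By fixing the groups you have deleted this case, and with it the only legitimate source of the second term in the stated maximum.

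The attempt to recover that term by other means does not work. A partial sum of consecutive differences $\psi(\rho(q))-\psi(\rho(q-1))$ over the swept block is bounded, after telescoping along a monotone arm of the $V$-shaped rank profile, by a difference of two values of $\psi$; there is no mechanism by which such a sum is ``controlled by the average'' $\tfrac1n\sum_{i=1}^n\psi(i)$, and that quantity cannot arise from rank shifts alone. Your closing remark that the interior case already governs the worst case is also backwards: for any genuinely increasing $\psi$ (say $\psi(x)=x$, where $\tfrac1n\sum_{i=1}^n\psi(i)=(n+1)/2$ while $\psi(2)=2$) the second term $\psi(n)+\tfrac1n\sum_{i=1}^n\psi(i)-\psi(1)-\psi(2)$ strictly exceeds $\psi(n)-\psi(1)$, so the group-switch case is the binding one. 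The telescoping and monotone-arm bookkeeping you describe corresponds to the paper's Case 1 and is sound as far as it goes, but without Case 2 the lemma --- and the privacy guarantee of Theorem~\ref{thm::privacy} that rests on it --- is not established.
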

\begin{proof} 
    The following proof is similar in spirit to that of Theorem 4.1 of \cite{couch2019differentially} in form. However, since ranks are assigned differently, we will require new arguments. 

    Consider two neighboring databases $\mathbf{x}$ and $\mathbf{x'}$.
    Arrange the data points in database $\mathbf{x}$ as follows: $X_1, X_2, \dots, X_{n_1}, Y_1, Y_2, \dots, Y_{n_2}$. Label these arranged data points $Z_1, Z_2, \dots, Z_n$, in order from least to greatest. Denote by $w_i$ the ``raw rank'' of the $i$th data point in this sequence, $i \in \{1, 2, \dots, n_1 + n_2 = n\}$. From there, let $r(w, n)$ be the map from the ``raw ranks'' to the ``working ranks'' (i.e., the ranks used in the rank-transformed Siegel--Tukey testing procedure), and let $r_i$ denote the ``working rank'' of data point $i$. Do the same for database $\mathbf{x'}$ and denote the corresponding quantities with the same notation, except with an apostrophe (ex., the ``working ranks'' of database $x'$ will be denoted $w'_i$). 

    Suppose the data point with $w_i = p$ from database $\mathbf{x}$ is changed to become the data point with $w'_i = q$ in database $\mathbf{x'}$. Without loss of generality, we may assume that $p \leq q$. Divide the rest of the data points in database $\mathbf{x}$ into four regions as follows, depending on the case. Let $h$ be the raw rank of the data point with the highest working rank (this will depend on $n$). Note that $r(h, n) = \psi(n)$. When $p \leq h$ and $q > h$, the regions are:

    \begin{itemize}
        \item Region $A_1$, consisting of the data points with $w_i < p$;
        \item Region $A_2$, consisting of the data points with $p < w_i \leq h$;
        \item Region $A_3$, consisting of the data points with $h < w_i < q$; and
        \item Region $A_4$, consisting of the data points with $w_i > q$.
    \end{itemize}

    When $p, q \leq h$ or $p, q > h$, the regions are:

    \begin{itemize}
        \item Region $A_1$, consisting of the data points with $w_i < p$;
        \item Region $A_2$, consisting of the data points with $p < w_i < q$; and
        \item Region $A_3$, consisting of the data points with $w_i > q$.
    \end{itemize}

    Now define:

    \begin{align*}
        b_j = \sum_{i: Z_i \in \bfX_{n_1}, w_i \in A_j} (r_i - r'_i),
    \end{align*}
    where $\bfX_{n_1}$ consists of all the data points in group 1 (i.e., the $X$ points). Armed with this notation, we are now ready to conduct the case analysis.

    \hfill\newline
    \noindent \textbf{Case 1: The Changed Row Retains its Group}
    \paragraph{Case 1a: $p, q \leq h$}
    Suppose without loss of generality that the changed row is from group 1. Let us first bound the sensitivity of $U_1$. As in \cite{couch2019differentially}, we decompose the sensitivity into the sum of three effects:
    \begin{enumerate}
        \item The removal of the row with raw rank $p$ from database $\mathbf{x}$, which (by itself) decreases $U_1$ by $r(p, n)$. 
        \item The addition of the row (whose raw rank becomes $q$) into database $\mathbf{x'}$, which increases $U_1$ by $r(q, n)$.
        \item The reduction of the raw ranks of the data points in region $A_2$ of database $\mathbf{x}$ by one apiece, which decreases $U_1$ by $|b_2|$ (as working ranks decrease away from the centre).
    \end{enumerate}
    This means that:
    \begin{align*}
        \GS(U_1) \leq |-r(p, n) + r(q, n) - |b_2||.
    \end{align*}
    Since
    \begin{align*}
        r(1, n) = \psi(1) \leq ||b_2| + r(p, n)| \leq r(h, n) - r(1, n) = \psi(n) - \psi(1),
    \end{align*}
    and
    \begin{align*}
        \psi(1) \leq r(q, n) \leq \psi(n) \text{,}
    \end{align*}
    we have that:
    \begin{align*}
        \GS(U_1) \leq \psi(n) - \psi(1) \text{.}
    \end{align*}
    By a similar argument, if it were the case that $p, q > h$, one could obtain the same sensitivity bound. 
    \paragraph{Case 1b: $p \leq h$ and $q > h$}
    Suppose without loss of generality that the changed row is from group 1. Let us first bound the sensitivity of $U_1$. As in \cite{couch2019differentially}, we decompose the sensitivity into the sum of four effects:
    \begin{enumerate}
        \item The removal of the row with raw rank $p$ from database $\mathbf{x}$, which (by itself) decreases $U_1$ by $r(p, n)$. 
        \item The addition of the row (whose raw rank becomes $q$) into database $\mathbf{x'}$, which increases $U_1$ by $r(q, n)$. 
        \item The reduction of the raw ranks of the data points in region $A_3$ of database $\mathbf{x}$ by one apiece, which increases $U_1$ by $|b_3|$ (as working ranks increase toward the centre).
        \item The reduction of the raw ranks of the data points in region $A_2$ of database $\mathbf{x}$ by one apiece, which decreases $U_1$ by $|b_2|$ (as working ranks decrease away from the centre).
    \end{enumerate}
    This means that:
    \begin{align*}
        \GS(U_1) \leq |-r(p, n) + r(q, n) + |b_3| - |b_2||.
    \end{align*}
    Since
    \begin{align*}
        0 \leq ||b_2 + r(p, n)| \leq r(h, n) - r(1, n) = \psi(n) - \psi(1),
    \end{align*}
    and
    \begin{align*}
        0 \leq ||b_3 + r(q, n)| \leq r(h, n) - r(n, n) = \psi(n) - \psi(2) \text{,}
    \end{align*}
    we have that:
    \begin{align*}
        \GS(U_1) \leq \psi(n) - \psi(1) \text{.}
    \end{align*}
    This is the same bound as was achieved in Case 1a. 
    
    \hfill\newline
    \textbf{Case 2: The Changed Row Switches its Group}
    \paragraph{Case 2a: $p, q \leq h$}
    Suppose, without loss of generality, that the changed data point transfers from group 1 to group 2, and that $p \leq q$. As before, we decompose the sensitivity into the sum of three effects:
    \begin{enumerate}
        \item The removal of the row with raw rank $p$ from database $\mathbf{x}$, which (by itself) decreases $U_1$ by $r(p, n)$.
        \item The reduction of the raw ranks of the data points in region $A_2$ of database $\mathbf{x}$ by one apiece, which decreases $U_1$ by $|b_2|$ (as working ranks decrease away from the centre).
        \item The removal of a row from group 1, which increases $U_1$ by $\frac{1}{n} \sum_{i = 1}^n \psi(i)$.
    \end{enumerate}
    This means that:
    \begin{align*}
        \GS(U_1) \leq \left|-r(p, n) - |b_2| + \frac{1}{n} \sum_{i = 1}^n \psi(i)\right|.
    \end{align*}
    Since
    \begin{align*}
        \psi(1) \leq ||b_2| + r(p, n)| \leq r(h, n) - r(1, n) = \psi(n) - \psi(1),
    \end{align*}
    and $\sum_{i = 1}^n \psi(i) \leq \psi(n)$, we have that $\GS(U_1) \leq \psi(n) - \psi(1)$.
    By a similar argument, the same bound is achieved when $p, q > h$.

    \paragraph{Case 2c: $p \leq h$ and $q > h$}
    The assumptions remain the same as in case 2a. The sensitivity of $U_1$ can be written as the sum of the following four effects:
    \begin{enumerate}
        \item The removal of the row with raw rank $p$ from database $\mathbf{x}$, which (by itself) decreases $U_1$ by $r(p, n)$. 
        \item The removal of a row from group 1, which increases $U_1$ by $\frac{1}{n} \sum_{i = 1}^n \psi(i)$.
        \item The reduction of the raw ranks of the data points in region $A_3$ of database $\mathbf{x}$ by one apiece, which increases $U_1$ by $|b_3|$ (as working ranks increase toward the centre).
        \item The reduction of the raw ranks of the data points in region $A_2$ of database $\mathbf{x}$ by one apiece, which decreases $U_1$ by $|b_2|$ (as working ranks decrease away from the centre).
    \end{enumerate}
    This means that:
    \begin{align*}
        \GS(U_1) \leq \left|-r(p, n) + \frac{1}{n} \sum_{i = 1}^n \psi(i) + |b_3| - |b_2|\right|.
    \end{align*}
    To make further progress, we will utilize the technique of bounding the expression within the absolute value in two separate cases: when it is positive and when it is negative. This technique is used by \cite{couch2019differentially} to bound the sensitivity of the Mann-Whitney test statistic. When the expression is positive, we have that:
    \begin{align*}
        \GS(U_1) &\leq \frac{1}{n} \sum_{i = 1}^n \psi(i) + |b_3| - r(p, n) - |b_2| \leq \frac{1}{n} \sum_{i = 1}^n \psi(i) + \psi(n) - \psi(2) - \psi(1),
    \end{align*}
    The last line follows from the fact that, $|b_3| \leq \psi(n) - \psi(2)$ and $r(p, n) \leq \psi(1)$.
    If the aforementioned expression is negative, then:
    \begin{align*}
        \GS(U_1) &\leq -\frac{1}{n} \sum_{i = 1}^n \psi(i) - |b_3| + r(p, n) + |b_2| \leq \psi(n) - \psi(1).
    \end{align*}
    The last line follows from the facts that $r(p, n) + |b_2| \leq \psi(n)$ and $\frac{1}{n} \sum_{i = 1}^n \psi(i) \geq \psi(1)$.
    All told,
    \begin{align*}
        \GS(U_1) = \max \left\{\psi(n) - \psi(1), \psi(n) + \frac{1}{n} \sum_{i = 1}^n  \psi(i) - \psi(1) - \psi(2)\right\} \text{.}
    \end{align*}
    This is what we wanted to show.
\end{proof}

\begin{proof}[Proof of Lemma~\ref{lem::rtpmstsens}]
    The proof is in the spirit of that of Lemma~\ref{thm::rtstsens}, which proves the result for $Q=0$, but with several key differences. In addition to the notation we used in Lemma~\ref{thm::rtstsens}, let $B$ be the lowest ``raw rank'' for which the ``working rank'' of $\psi(0)$ is assigned, and let $T$ be the highest such rank. The differences are outlined below:

    \begin{itemize}
        \item The dataset $\mathbf{x}$ is separated into regions as follows:
        \begin{itemize}
            \item Region $A_1$, consisting of the data points with $w_i < p$;
            \item Region $A_2$, consisting of the data points with $p < w_i \leq B$;
            \item Region $A_3$, consisting of the data points with $B < w_i \leq T$; 
            \item Region $A_4$, consisting of the data points with $T < w_i < q$; and
            \item Region $A_5$, consisting of the data points with $w_i > q$.
        \end{itemize}
        \item For each ``main case'' (i.e., the changed data point switching and not switching groups), there is a new ``sub-case'' to consider: $p < B$ and $B \leq q \leq T$ (there is also the possibility that $p > T$ and $B \leq q \leq T$, but the symmetry would result in the same bound).
        \item The following sensitivity bounds are attained:
        \begin{itemize}
            \item Case 1: The changed row does not switch groups
            \begin{itemize}
                \item $p, q < B$: $\GS(U_1) \leq \psi(n - Q) - \psi(1)$
                \item $p < B \leq q \leq T$: $\GS(U_1) \leq \psi(n - Q)$
                \item $p < B < T < q$: $\GS(U_1) \leq \psi(n - Q) - \psi(1)$.
            \end{itemize}
            \item Case 2: The changed row switches groups
            \begin{itemize}
                \item $p, q < B$: $\GS(U_1) \leq \max \left\{\bar\psi_Q, \psi(n - Q) - \bar\psi_Q\right\}$
                \item $p < B \leq q \leq T$: $\GS(U_1) \leq \psi(n - Q)$
                \item $p < B < T < q$: $\GS(U_1) \leq \max \{\psi(n - Q) + \psi(n - Q - 1) - \bar\psi_Q, \bar\psi_Q\}$.
            \end{itemize}
        \end{itemize}
    \end{itemize}
    The desired result then follows.
\end{proof}


\subsection{The limiting distribution under $H_0$}
We first consider what happens to the ``Laplace'' portion of the test statistic. That is, the term $Z\cdot \GS^*(U_1)/\sigma(n_1, n_2, \psi, Q),$ where $$\GS^*(U_1)=\max \left\{\psi(n - Q), \psi(n - Q) + \psi(n - Q - 1) - \bar\psi_Q  \right\}.$$
\begin{lemma}\label{lem::privacy-error}
Suppose that Conditions \ref{cond::problem}--\ref{cond::group_sizes} hold as well as $H_0$ (as given in \eqref{eqn::hyp}) holds. Then, if $\psi$ is continuous on $\re^+$, then 
    $$ \frac{\GS^*(U_1)}{\sigma(n_1, n_2, \psi, Q)}=O\left(\frac{1}{\sqrt{n\lambda(1-\lambda) }}\right).$$
\end{lemma}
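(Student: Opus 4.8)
The plan is to bound the numerator and the denominator of $\GS^*(U_1)/\sigma(n_1,n_2,\psi,Q)$ separately: I will show $\GS^*(U_1)=O(\psi(n-Q))$ while $\sigma(n_1,n_2,\psi,Q)=\Omega\big(\psi(n-Q)\sqrt{n_1n_2/n}\big)$, so that the ratio is $O\big(\sqrt{n/(n_1n_2)}\big)$, which is $O\big((n\lambda(1-\lambda))^{-1/2}\big)$ once $n_1/n\to\lambda$ is used. The numerator bound is immediate: since $\psi$ is non-negative and increasing, $\psi(n-Q-1)\le\psi(n-Q)$ and $\bar\psi_Q\ge 0$, so both entries of the maximum defining $\GS^*(U_1)$ are at most $2\psi(n-Q)$.

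For the denominator I would first rewrite $\sigma^2$ transparently. Collect the $n$ modified, transformed ranks into a multiset $a_1,\dots,a_n$, namely $\psi(1),\dots,\psi(n-Q)$ together with the $Q$ central values $\psi(0)=0$. Substituting the identity $\sum_{i<j}a_ia_j=\tfrac12\big((\sum_i a_i)^2-\sum_i a_i^2\big)$ into the expression for $\sigma^2$ and simplifying identifies it with $\var(S_{n_1})$ from Theorem~\ref{thm::sk}, and yields the finite-population form
\[
\sigma^2 = \frac{n_1n_2}{n(n-1)}\,D,\qquad D:=\sum_{i=1}^n (a_i-\bar a)^2,\quad \bar a=n^{-1}\sum_{i=1}^n a_i .
\]
This reduces everything to a lower bound on the sum of squared deviations $D$.

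The crux, and the step I expect to be the main obstacle, is to show $D\ge c\,n\,\psi^2(n-Q)$ for some constant $c>0$; the naive bound from the single extreme pair $(\max,\min)$ only gives $D\ge\psi^2(n-Q)$ and loses the crucial factor of $n$, i.e.\ a factor $\sqrt n$ in the final rate. To recover it I would argue that a positive fraction of the $a_i$ lie a constant multiple of $\psi(n-Q)$ away from $\bar a$, via a split on the size of $\bar a$. If $\bar a\le\tfrac12\psi(n-Q)$, monotonicity of $\psi$ shows that the ranks with $a_i\ge\tfrac34\psi(n-Q)$ form a positive fraction of the sample (this is where the growth rate of $\psi$ enters, since one must control how far below $n-Q$ the level $\psi^{-1}(\tfrac34\psi(n-Q))$ sits), each contributing at least $\tfrac{1}{16}\psi^2(n-Q)$ to $D$. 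If instead $\bar a>\tfrac12\psi(n-Q)$, then each of the $Q$ central zeros contributes $\bar a^2>\tfrac14\psi^2(n-Q)$, giving $D\ge\tfrac14 Q\,\psi^2(n-Q)$; together with $Q$ being a positive fraction of $n$ this again yields $D\gtrsim n\,\psi^2(n-Q)$. The delicate point is that when $Q$ is small the second case must instead be closed using enough small transformed ranks $\psi(i)$, so the argument genuinely couples the truncation level $Q$ with the growth of $\psi$, consistent with the power--sensitivity tradeoff emphasized in the paper.

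Finally, combining $D\ge c\,n\,\psi^2(n-Q)$ with the numerator bound,
\[
\frac{\GS^*(U_1)}{\sigma}
\le \frac{2\psi(n-Q)}{\sqrt{\tfrac{n_1n_2}{n(n-1)}\,c\,n\,\psi^2(n-Q)}}
= \frac{2}{\sqrt c}\,\sqrt{\frac{n-1}{n_1n_2}} .
\]
Since $n_1n_2=n^2(n_1/n)(1-n_1/n)$ and $n_1/n\to\lambda\in(0,1)$, the right-hand side is $O\big((n\lambda(1-\lambda))^{-1/2}\big)$, which is the claim. Everything except the variance lower bound of the third paragraph is routine bookkeeping.
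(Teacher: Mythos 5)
Your first two steps are sound, and in fact cleaner than the paper's: the bound $\GS^*(U_1)\le 2\psi(n-Q)$ is immediate, and recognizing $\sigma^2(n_1,n_2,\psi,Q)$ as the exact finite-population sampling variance $\tfrac{n_1n_2}{n(n-1)}\sum_{i=1}^n(a_i-\bar a)^2$ replaces the paper's approximate asymptotic expansion of the double sum with an identity. The paper then heads exactly where you do: it normalizes the population variance by $\psi(n-Q)^2$ and asserts that the normalized quantity $C(n,\psi)$ stays above a positive constant, which is precisely your target $D\ge c\,n\,\psi^2(n-Q)$.

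However, the gap you flag in your second case is not a delicate point that can be closed; the inequality is false in the regime you isolate. Take $\psi(x)=\log(1+x)$ and $Q=0$, both explicitly permitted by the paper. Then $\bar a=\tfrac1n\sum_{i=1}^n\log(1+i)=\log n-1+o(1)\sim\psi(n)$, and a direct computation gives $\tfrac1n\sum_{i=1}^n\psi(i)^2-\bar a^2\to 1$, so $D=\Theta(n)=o\bigl(n\,\psi^2(n)\bigr)$; moreover only $o(n)$ indices satisfy $\psi(i)\le(1-\delta)\psi(n)$ for any fixed $\delta>0$, so the ``enough small transformed ranks'' you hope to use in that case simply are not there. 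In this example $\GS^*(U_1)/\sigma=\Theta(\log n/\sqrt n)$, so the stated $O\bigl((n\lambda(1-\lambda))^{-1/2}\bigr)$ rate itself fails for slowly growing unbounded $\psi$; only the weaker conclusion that the ratio tends to $0$ --- which is all that Lemma~\ref{lem::most-general} and Slutsky's theorem actually require --- survives. You should not take this as a defect of your route alone: the paper's own proof claims the same uniform lower bound $C(n,\psi)\ge c>0$ via a recursion of the form $C(n+1,\psi)\ge p\,C(n,\psi)$ with $p<1$, which does not produce a uniform lower bound and fails on the same example. Closing the argument honestly requires either strengthening the hypotheses on $\psi$ (e.g.\ regular variation of positive index, which restores $D\asymp n\,\psi^2(n-Q)$) or weakening the conclusion to $\GS^*(U_1)/\sigma\to 0$.
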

\begin{proof}
We have that 
\begin{align*}
    \sigma^2(n_1,n_2,\psi,Q) &=\frac{n_1}{n}\left(1-\frac{n_1}{n}\right) \sum_{i = 1}^{n - Q} \psi^2(i)   + 2\frac{n_1}{n}\left(\frac{n_1 - 1}{n - 1}-\frac{n_1}{n}\right) \sum_{i = 1}^{n - Q - 1} \sum_{j = i + 1}^{n - Q} \psi(i) \psi(j)
    .
\end{align*}
Now, 
\begin{align*}
        \frac{n_1}{n} - \frac{n_1 - 1}{n - 1}=\frac{n_1(n-1)-n(n_1 - 1)}{n(n - 1)} &= \frac{n - n_1}{n(n - 1)} \stackrel{(!)}{=} \frac{n - \lambda n + O(n^{1/2})}{n(n - 1)} = \frac{1 - \lambda }{n - 1}+ O(n^{-3/2}),
    \end{align*}
Using this, 
\begin{align*}
    \sigma^2(n_1,n_2,\psi,Q) &\approx\lambda(1-\lambda)\left(\left(1+\frac{1}{n-1} \right) \sum_{i = 1}^{n - Q} \psi^2(i) - \frac{1}{n-1} \sum_{i = 1}^{n - Q } \sum_{j = 1}^{n - Q} \psi(i)\psi(j)
    \right)\\
    &=\lambda(1-\lambda)\left(\left(1+\frac{1}{n-1} \right) n\bar\psi_Q^2 - \frac{n^2}{n-1} (\bar\psi_Q)^2
    \right)\\
    &\approx n\lambda(1-\lambda)\left( \bar\psi_Q^2 - (\bar\psi_Q)^2
    \right). 
\end{align*}
In general, we have that 
\begin{align*}
    \sigma^2(n_1,n_2,\psi,Q) &= n\lambda(1-\lambda)\left( \bar\psi_Q^2 - (\bar\psi_Q)^2
    \right)+R, 
\end{align*}
where $R$ is a remainder of lower order. 
    Note that
\begin{equation*}
   \GS^*(U_1)
    \to   \psi(n - Q) + \psi(n - Q - 1) - \bar\psi_Q.
\end{equation*}
Thus, we consider the quantity
$$\frac{\psi(n - Q)}{\sqrt{n\lambda(1-\lambda)\left( \bar\psi_Q^2 - (\bar\psi_Q)^2
    \right)}}.$$
Now, let
\begin{align*}
    C(n,\psi)&=( \bar\psi_Q^2 - (\bar\psi_Q)^2)/\psi(n - Q)^2\\
    &=n^{-1}\sum_{i=1}^{n-Q}\psi(i)^2/\psi(n - Q)^2-\left(n^{-1}\sum_{i=1}^{n-Q}\psi(i)/\psi(n - Q)\right)^2\\
    &\coloneqq V_{n,1}-V_{n,2}^2.
\end{align*}
Then, since $\psi$ is continuous and monotonic, it is differentiable. We have by the mean value theorem there exists $n-Q\leq r\leq n-Q+1$, $n_0\in\bbN$ and $0<p<1$ so that for all $n\geq n_0$,
\begin{align*}
    \psi(n-Q+1)-\psi(n-Q)=\psi'(r)&\implies 1-\frac{\psi(n-Q)}{\psi(n-Q+1)}=\frac{\psi'(r)}{\psi(n-Q+1)}\\
    &\implies \frac{\psi(n-Q)}{\psi(n-Q+1)}\geq 1-\frac{\psi'(r)}{\psi(n-Q+1)}\geq \sqrt{p}. 
\end{align*}
In addition, $a_n=n/(n+1)$ is also greater than $\sqrt{p}$ eventually.
Thus, 
\begin{align*}
    C(n+1,\psi)&=V_{n+1,1}-V_{n+1,2}^2\\
    &=a_n b_n V_{n,1}+(n+1)^{-1}-a_n^2 b_n^2 V_{n,2}^2-2a_n b_n V_{n,2}/n-1/(n+1)^2\\
    &\approx a_n b_n V_{n,1}-a_n^2 b_n^2 V_{n,2}^2\\
    &=a_n b_n C(n,\psi)+(a_n b_n-a_n^2 b_n^2) V_{n,2}^2\\
    &\geq p C(n,\psi) \text{ eventually. }
\end{align*}
We have that 
\begin{align*}
    C(n,\psi)&\geq p\sum_{i=n_0}^{n-1}C(i,\psi)+a_{n_0} b_{n_0} C(n_0,\psi)+(a_{n_0} b_n-a_{n_0}^2 b_{n_0}^2) V_{n_0,2}^2\geq (a_{n_0} b_n-a_{n_0}^2 b_{n_0}^2) V_{n_0,2}^2. 
\end{align*}
Thus, there exists $0<c<1$ and $n_0\in\bbN$ such that $c<C(n,\psi)<1$ for all $n\geq n_0$. 
\end{proof}
Before proving Theorem~\ref{thm::main-result}, we now introduce a condition which is implied by Condition~\ref{cond::strong}. 
For a sequence \seq{y} and given $m, r \in \nats$, define:
\begin{align*}
            \mu_r(y_n, n) = \frac{1}{n} \sum_{i=1}^n \left(y_i - \sum_{j = 1}^n y_j/n \right)^r.
\end{align*}
\begin{condition}\label{cond::psi}
The function $\psi: \re^+ \rightarrow \reals^+$ is a continuous, strictly-increasing, non-negative function that satisfies $\psi(n)=O(n^z)$ for some $z \in \reals$. Furthermore, for all $r > 2$, the sequence $\psi_n\coloneqq\psi_{n,Q}=\psi(1),\psi(2),\ldots$ has the property that
$   \mu_r(\psi_n, n)/\mu_2(\psi_n, n)^{\frac{r}{2}} = O(1).$
\end{condition}
This condition allows us to apply the Theorem~\ref{thm::waldwolf}, which results in asymptotic normality of the non-private test statistic, under the null hypothesis.
\begin{lemma}\label{lem::most-general}
Let $Q\in \{0\}\cup [n-1]$ be the number of central data points to truncate. Suppose that Conditions \ref{cond::problem}, \ref{cond::group_sizes} and \ref{cond::psi} hold, as well as $H_0$ holds, as given in \eqref{eqn::hyp}. 
It holds that 
\begin{equation*}
    \frac{\widetilde U_1}{\sigma(n_1,n_2,\psi,Q)} \cond \cN(0, 1)\text{.}
\end{equation*}       
\end{lemma}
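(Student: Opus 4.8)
The plan is to split $\widetilde U_1 = U_1 + Z\,\GS^*(U_1)/\eps_U$ into its non-private statistic part and its Laplace part, prove a central limit theorem for $U_1/\sigma$, show the noise term is asymptotically negligible after scaling, and finish with Slutsky's theorem. The central step is to recognize what $U_1$ is distributionally under $H_0$. Since $H_0$ asserts $F=G$, the combined sample $Z_1 \le \cdots \le Z_n$ consists of $n$ i.i.d.\ draws, and the group labels are independent of the values. The working ranks are a fixed, deterministic function of the positions in the ordered sample (the center-outward assignment giving $\psi(1),\dots,\psi(n-Q)$ to the non-central points and $0$ to the $Q$ central points). Consequently, conditionally on the ordered values, the multiset of working ranks landing in group~1 is a uniformly random subset of size $n_1$ drawn without replacement from the fixed multiset of all $n$ transformed working ranks. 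Hence $\sum_{i=1}^{n_1}\psi(r_i) \eqd S_{n_1}$, where $S_k$ is the simple-random-sampling-without-replacement sum of Theorem~\ref{thm::sk} applied to $x$-values equal to these transformed ranks. By Theorem~\ref{thm::sk}(\ref{thm::sk1}) the centering satisfies $\mu_1 = \expect(S_{n_1})$, so $U_1$ is exactly the centered sum, and by Theorem~\ref{thm::sk}(\ref{thm::sk2}) its variance is $\sigma^2(n_1,n_2,\psi,Q)$ (the $Q$ zero ranks contributing nothing to either sum).

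With $U_1$ identified as a centered SRSWOR sum, I would invoke a Wald--Wolfowitz-type combinatorial central limit theorem (Theorem~\ref{thm::waldwolf}) to conclude $U_1/\sigma(n_1,n_2,\psi,Q) \cond \cN(0,1)$. This is where Conditions~\ref{cond::group_sizes} and \ref{cond::psi} do the work: Condition~\ref{cond::group_sizes} keeps $n_1/n \to \lambda \in (0,1)$ bounded away from the degenerate endpoints, while the Noether-type moment bound $\mu_r(\psi_n,n)/\mu_2(\psi_n,n)^{r/2} = O(1)$ for all $r>2$ in Condition~\ref{cond::psi} is precisely the hypothesis ensuring no single transformed rank dominates, so the combinatorial Lindeberg condition holds. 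Checking that Condition~\ref{cond::psi} matches the hypotheses of Theorem~\ref{thm::waldwolf} — and in particular that truncating to $n-Q$ nonzero ranks plus $Q$ zeros still satisfies the moment ratio uniformly in $n$ — is the main technical obstacle, and is the reason Condition~\ref{cond::psi} is phrased in terms of the centered moments $\mu_r$ rather than directly in terms of $\psi$.

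Finally I would dispatch the privacy-noise term. Writing $\widetilde U_1/\sigma = U_1/\sigma + (Z/\eps_U)\cdot \bigl(\GS^*(U_1)/\sigma\bigr)$, Lemma~\ref{lem::privacy-error} gives $\GS^*(U_1)/\sigma = O\bigl((n\lambda(1-\lambda))^{-1/2}\bigr) \to 0$. Since $Z/\eps_U$ is a fixed Laplace variable not depending on $n$, multiplying it by this deterministic $o(1)$ factor yields a term converging to $0$ in probability. Applying Slutsky's theorem to the sum of $U_1/\sigma \cond \cN(0,1)$ and the vanishing noise term gives $\widetilde U_1/\sigma \cond \cN(0,1)$, as claimed. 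The only care needed is that the CLT for $U_1/\sigma$ and the bound of Lemma~\ref{lem::privacy-error} both hold under the stated conditions, which they do.
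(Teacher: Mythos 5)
Your proposal is correct and follows essentially the same route as the paper, whose proof of this lemma is simply the one-line observation that it follows from Theorem~\ref{thm::waldwolf}, Lemma~\ref{lem::privacy-error}, and Slutsky's theorem; you have merely filled in the details (identifying $U_1$ as a centered sampling-without-replacement sum with mean and variance given by Theorem~\ref{thm::sk}, applying the Wald--Wolfowitz CLT under Conditions~\ref{cond::group_sizes} and~\ref{cond::psi}, and killing the Laplace term via Lemma~\ref{lem::privacy-error}). No gaps.
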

\begin{proof}
    The result follows from Theorem~\ref{thm::waldwolf}, Lemma~\ref{lem::privacy-error} and Slutsky's theorem. 
\end{proof}
\begin{proof}[Proof of Theorem~\ref{thm::main-result}]
In view of Lemma~\ref{lem::most-general}, it suffices to show that Condition~\ref{cond::strong} implies Condition~\ref{cond::psi} and that $\sigma(\widetilde n_1,\widetilde n_2,Q,\psi)/\sigma( n_1, n_2,Q,\psi)\to 1 $ in probability. The latter is trivial from the definition of $\widetilde n_1,\widetilde n_2$ and continuous mapping theorem. To show that Condition~\ref{cond::strong} implies Condition~\ref{cond::psi}, first, consider the case where $Q=0$. 
We bound above the growth rate of $\mu_r(\psi_n, n)$ as follows:
    \begin{align*}
        |\mu_r(\psi_n, n)| &\leq \frac{1}{n}  \sum_{i = 1}^n \left|\psi(i) - \bar\psi_{0,n}\right|^r \leq (2 \psi(n))^r = O(\psi^r(n)).
    \end{align*}
It now suffices to show that $\mu_2(\psi_n, n)$ is $\Omega(\psi^2(n))$, but this was shown in the proof of Lemma~\ref{lem::privacy-error}. 
Now, the case of $Q>0$ is implied by the previous case, since we have already shown that $\mu_2(\psi_n, n)$ is $\Omega(\psi^2(n))$ for all $Q\in\{0\}\cup [n]$ in the proof of Lemma~\ref{lem::privacy-error}. Furthermore, we have that $|\mu_r(\psi_{n,Q}, n)|\leq |\mu_r(\psi_{n,0}, n)|$ for all $Q\in\{0\}\cup [n]$. 
\end{proof}
\section{Proofs concerning the signed rank test}\label{app::sr_proofs}

\begin{proof}[Proof of Theorem~\ref{thm::rsasymp}]
Following similar logic as the proof of Lemma~\ref{lem::privacy-error}, we have that $2\psi(n-Q)\bigg/\sqrt{\sum_{i=1}^{n-Q}\psi(i)^2}\to 0$ as $n\to \infty$. Therefore, it suffices to consider $W_1$.  
    If we can show that $W_1$ satisfies the hypotheses of the \nameref{thm::lya}, the desired result would automatically follow from said theorem. To this end, we shall define, for each $i \in \{1, 2, \dots, Q\}$, the mutually-independent random variables $W_{0, i}$ which take on the values $\psi(0)$ and $-\psi(0)$ with probability $\frac{1}{2}$ apiece. As well, for each $j \in \{1, 2, \dots, n-Q\}$, define:

    \begin{align*}
        W_j = \begin{cases}
            \psi(j) \quad &\text{w.p. $\frac{1}{2}$} \\
            -\psi(j) \quad &\text{w.p. $\frac{1}{2}$}
        \end{cases}
        \text{,}
    \end{align*}
where the $W_j$'s and the $W_{0, i}$'s are mutually independent. Notice that under $H_0$, 
    \begin{align*}
    W_1 \stackrel{d}{=} \sum_{i = 0}^Q W_{0, i} + \sum_{j = 1}^{n - Q} W_j. 
    \end{align*}
    We have just re-expressed the test statistic in a manner that facilitates the application of the \nameref{thm::lya}. Before using this result, we need to compute the relevant moments of the $W_{0, i}$'s and the $W_j$'s. First, we observe that for all $i \in \{1, 2, \dots, Q\}$, for all $j \in \{1, 2, \dots, n - Q\}$ and for all $m \in \nats$:

    \begin{enumerate}[1)]
        \item \begin{align} \label{eqn::EWj}
        \expect(W_j^m) &= \begin{cases}
            0 \quad &\text{if $m$ is odd} \\
            \psi^m(j) \quad &\text{if $m$ is even}
        \end{cases}
    \end{align}
        \item \begin{align}
        \expect(|W_j|^m) = \psi^m(j) \label{eqn::EabsWj}
    \end{align}
        \item \begin{align} \label{eqn::EW0}
        \expect(W_{0, i}^m) &= \begin{cases}
            0 \quad &\text{if $m$ is odd} \\
            \psi^m(0) \quad &\text{if $m$ is even}
        \end{cases}
    \end{align}
        \item \begin{align}
        \expect(|W_{0, i}|^m) = \psi^m(0) \label{eqn::EabsW0}
    \end{align}
    \item \begin{align}
        \var(W_j) = \sigma^2_j = \psi^2(j) \text{.} \label{eqn::VarWj}
    \end{align} 
    \item \begin{align}
        \var(W_{0, i}) = \sigma^2_{0, i} = \psi^2(0) \text{.} \label{eqn::VarW0}
    \end{align}
    \end{enumerate}

    To make further progress, we note that there must exist $\ell \in \nats$ such that $\ell \in \nats \cup \{0\}$ and $\psi$ is $\Omega(n^{\frac{\ell}{4}})$ and $o(n^{\frac{\ell + 1}{4}})$. Our strategy is then to bound above the growth rate of the numerator of the fraction present in Condition \ref{eqn::lyacond} and bound below the growth rate of the denominator. For reference, Condition \ref{eqn::lyacond} reads as follows in our context (due to the above equations):

    \begin{equation} \label{lyacondcontext}
            \frac{Q |\psi(0)|^{2 + \delta} + \sum_{j = 1}^{n - Q} \left(|\psi(j)|^{2 + \delta} \right)}{\left(Q \psi^2(0) + \sum_{j = 1}^{n - Q} \psi^2(j) \right)^{1 + \frac{\delta}{2}}} \rightarrow 0 \quad \text{as $n \rightarrow \infty$ for some $\delta > 0$}.
    \end{equation}

    Since $\psi$ is non-negative, increasing and $o(n^{\frac{\ell + 1}{4}})$, term $j$ of the sum in the numerator is $o(j^{\frac{\ell + 1}{2} + \delta \, \frac{\ell + 1}{4}})$, so the sum as a whole is $o((n - Q)^{\frac{\ell + 3}{2} + \delta \, \frac{\ell + 1}{4}})$. The $Q |\psi(0)|^{2 + \delta}$ term does not affect this bound on the asymptotic growth rate of the sum as $Q \leq n$ and $\psi$ is strictly increasing. Meanwhile, the sum in the denominator is $\Omega((n - Q)^{\frac{\ell}{2} + 1})$, so the denominator as a whole is $\Omega((n - Q)^{\frac{\ell + 2}{2} + \delta \, \frac{\ell + 2}{4}})$. Thus, if $\delta \geq 2$, the limit of the fraction at large as $n \rightarrow \infty$ is indeed 0. This indicates that the hypotheses of the \nameref{thm::lya} are satisfied, which is what we needed to show. The asymptotic distribution of $W_1$ then follows from the \nameref{thm::lya} and Theorem \ref{thm::sk}.
\end{proof}
\begin{proof}[Proof of Theorem~\ref{thm::WWWCRS}]
    Fix neighboring databases $\mathbf{v}$ and $\mathbf{w}$ and assume row $j$ of the former can be changed to produce the latter. We may assume, without loss of generality, that $a_j^{'} > a_j$. We will also divide the rows of $\mathbf{v}$ into three regions:
\begin{enumerate}
    \item $A_1$, consisting of the rows with ranks not exceeding $r_j$;
    \item $A_2$, consisting of the rows with ranks exceeding $r_j$ but with $a_i < a_j^{'}$, where $a_j^{'}$ is the absolute difference of row $j$ of database $\mathbf{w}$; and,
    \item $A_3$, consisting of the rows with $a_i > a_j^{'}$.
\end{enumerate}

Before proceeding further, we will define some more notation. Let $\Delta j$ be the change in the signed rank of row $j$ when transitioning from $\mathbf{v}$ to $\mathbf{w}$. In addition, for $i = 1, 2, 3$, let $\Delta A_i$ be the change in the sum of the signed ranks of all rows in region $A_i$ upon making this transition. The change in the overall test statistic, $\Delta W_1$, is the sum of these two effects. This is because the ranks of the data points in $A_1$ and $A_3$ are unaffected by the transition between the neighbouring databases.

We notice that the only rows whose ranks differ between the databases are row $j$ and the rows in region $A_2$. Let us first bound $\Delta j$. As $\psi$ is increasing and $s_j$ changes in the worst-case scenario (this is the same justification as in the proof of Theorem 4.1 of \cite{couch2018differentially}), we have that:

\begin{align*}
    \Delta j \leq \psi(n - Q) + \psi(\max\{c_1 + 1 - Q, 0\}).
\end{align*}

Meanwhile, the ranks of each row in region $A_2$ decrease by 1 apiece, while their signs remain unchanged. Thus,

\begin{align*}
    \Delta A_2 &= \left|\sum_{i \in A_2} \psi(\max\{r_i^{'} - Q, 0\} ) - \psi(\max \{r_i - Q, 0\}) \right| \\
    &\leq \sum_{i \in A_2} \left| \psi(\max \{r_i^{'} - Q, 0 \}) - \psi(\max \{r_i - Q, 0\}) \right| \quad \text{(by the triangle inequality)} \\
    &=(\psi(\max \{c_1 + 2 - Q, 0\}) - \psi(\max\{c_1 + 1 - Q, 0\})) \\
    &\phantom{=}+ (\psi(\max \{c_1 + 3 - Q, 0\}) - \psi(\max\{c_1 + 2 - Q, 0\})) \\
    &\phantom{= \,} + \dots + (\psi(\max\{c_1 + c_2 + 1 - Q, 0\}) - \psi(\max\{c_1 + c_2 - Q, 0\})) \\
    &= \psi(\max\{c_1 + c_2 + 1 - Q, 0\}) - \psi(\max \{c_1 + 1 - Q, 0\}) \\
    &\leq \psi(n - Q) - \psi(\max\{c_1 + 1 - Q, 0\}).
\end{align*}

Then,

\begin{align*}
    \Delta W_1 &\leq \Delta j + \Delta A_2 \\
    &\leq \psi(n - Q) + \psi(\max\{c_1 + 1 - Q, 0\}) + \psi(n - Q) - \psi(\max\{c_1 + 1 - Q, 0\}) \\
    &= 2 \psi(n - Q).
\end{align*}

This is what we wanted to show. 
\end{proof}
\section{Useful auxiliary results}
We now present a set of results which will be useful in our proofs pertaining to the asymptotic distributions of our test statistics. 
Our theoretical results rely upon two central limit theorems. 
The first of these results is the \nameref{thm::lya}, which generalizes the `ordinary' Central Limit Theorem to sets of random variables which are not necessarily identically distributed. We adopt the formulation of this result from \citet{pratt2012concepts}.
\begin{theorem}[\textbf{Lyapunov Central Limit Theorem}] \label{thm::lya}
If $X_1, X_2, \dots$ are independent, real-valued random variables with possibly different distributions, each having finite absolute moments of the order $2 + \delta$ for some $\delta > 0$, and if
        \begin{equation} \label{eqn::lyacond}
            \frac{\sum_{j = 1}^{n} \expect \left(|X_j - \mu_j|^{2 + \delta} \right)}{\left(\sum_{j = 1}^{n} \sigma_j^2 \right)^{1 + \frac{\delta}{2}}} \rightarrow 0 \quad \text{as $n \rightarrow \infty$},
        \end{equation}
where $\mu_j = \expect(X_j)$ and $\sigma_j^2 = \var(X_j)$, then
        \begin{align*}
            \frac{\sum_{j = 1}^{n} (X_j - \mu_j)}{\left(\sum_{j = 1}^{n} \sigma_j^2 \right)^{\frac{1}{2}}},
        \end{align*}
        converges in distribution to the standard normal distribution.
\end{theorem}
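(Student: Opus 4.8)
The plan is to treat this as the classical Lyapunov central limit theorem and give the standard characteristic-function argument, reducing everything to L\'evy's continuity theorem. Write $s_n^2 = \sum_{j=1}^n \sigma_j^2$ and set $Y_j = (X_j - \mu_j)/s_n$, so that $T_n = \sum_{j=1}^n Y_j$ has mean zero and variance one. Since the $X_j$ are independent, $\phi_{T_n}(t) = \prod_{j=1}^n \phi_{Y_j}(t)$, and it suffices to show $\phi_{T_n}(t) \to e^{-t^2/2}$ for each fixed $t \in \reals$, after which L\'evy continuity delivers convergence in distribution to the standard normal.

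First I would establish the negligibility of the individual summands. By the moment (Jensen) inequality, $\sigma_j^{2+\delta} \le \expect(|X_j-\mu_j|^{2+\delta})$, so
\begin{equation*}
\max_{j \le n}\left(\frac{\sigma_j}{s_n}\right)^{2+\delta} \le \sum_{j=1}^n \left(\frac{\sigma_j}{s_n}\right)^{2+\delta} \le \frac{\sum_{j=1}^n \expect(|X_j-\mu_j|^{2+\delta})}{s_n^{2+\delta}},
\end{equation*}
which is exactly the ratio in \eqref{eqn::lyacond} and hence tends to $0$. Thus $\max_{j\le n}\sigma_j^2/s_n^2 \to 0$, the uniform smallness that lets the product be handled factor by factor.

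Next I would Taylor expand each factor. Using the classical bound $\left|e^{iu} - 1 - iu + \tfrac{1}{2}u^2\right| \le \min\!\left(\tfrac{|u|^3}{6},\,|u|^2\right) \le C\,|u|^{2+\delta}$, valid for $\delta \in (0,1]$ by interpolating the two arguments of the minimum, applied with $u = t Y_j$ and followed by taking expectations, gives
\begin{equation*}
\left|\phi_{Y_j}(t) - 1 + \tfrac{1}{2}t^2\frac{\sigma_j^2}{s_n^2}\right| \le C\,|t|^{2+\delta}\,\frac{\expect(|X_j-\mu_j|^{2+\delta})}{s_n^{2+\delta}}.
\end{equation*}
Summing over $j$, the right-hand side is $C|t|^{2+\delta}$ times the Lyapunov ratio, hence $o(1)$, so $\sum_{j=1}^n(\phi_{Y_j}(t)-1) \to -t^2/2$. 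Since $\max_j|\phi_{Y_j}(t)-1| \to 0$ by negligibility while $\sum_j|\phi_{Y_j}(t)-1|$ stays bounded, the standard lemma that $\prod_j z_j$ and $\exp\!\left(\sum_j(z_j-1)\right)$ share the same limit (controlled by $\sum_j|z_j-1|^2 \le \max_j|z_j-1|\sum_j|z_j-1| \to 0$) yields $\phi_{T_n}(t) \to e^{-t^2/2}$, as required.

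The main obstacle is the remainder control in the second display: one must bound the Taylor error purely in terms of the available $(2+\delta)$-th absolute moment rather than a third moment, and this is precisely where the interpolation inequality and the Lyapunov hypothesis \eqref{eqn::lyacond} carry the argument; the remaining ingredients (negligibility, passage from the product to the exponential, L\'evy continuity) are routine. If one prefers to avoid characteristic functions, the identical hypothesis instead verifies Lindeberg's condition directly, since for every $\varepsilon > 0$
\begin{equation*}
\frac{1}{s_n^2}\sum_{j=1}^n \expect\!\left[(X_j-\mu_j)^2\,\mathbf{1}\{|X_j-\mu_j| > \varepsilon s_n\}\right] \le \frac{1}{\varepsilon^{\delta}}\cdot\frac{\sum_{j=1}^n \expect(|X_j-\mu_j|^{2+\delta})}{s_n^{2+\delta}} \to 0,
\end{equation*}
after which the conclusion follows from the Lindeberg--Feller theorem.
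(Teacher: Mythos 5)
The paper does not actually prove this statement: Theorem~\ref{thm::lya} is quoted as a known auxiliary result, with its formulation attributed to \citet{pratt2012concepts}, so there is no in-paper argument to compare yours against. Your proof is the standard one and is essentially correct: the Jensen bound $\sigma_j^{2+\delta}\le \expect\left(|X_j-\mu_j|^{2+\delta}\right)$ gives uniform negligibility of the summands, the Taylor remainder of each characteristic-function factor is controlled by the $(2+\delta)$-th absolute moments so that the sum of the errors is the Lyapunov ratio up to a constant, and the product-to-exponential lemma together with L\'evy continuity finishes. One point to watch: the interpolation $\min\left(|u|^3/6,\,|u|^2\right)\le C|u|^{2+\delta}$ that drives your characteristic-function argument holds only for $\delta\in(0,1]$, as you note, whereas the theorem is stated for arbitrary $\delta>0$. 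For $\delta>1$ you would either need to first reduce to a smaller exponent via the moment interpolation $\expect\left(|Z|^{2+\delta'}\right)\le\left(\expect Z^2\right)^{\theta}\left(\expect|Z|^{2+\delta}\right)^{1-\theta}$ combined with H\"older's inequality on the sum over $j$, or simply rely on the Lindeberg verification in your final display, which is valid for every $\delta>0$ and, together with the Lindeberg--Feller theorem, already constitutes a complete proof on its own. So the proposal stands; the Lindeberg route is the cleanest way to cover the full range of $\delta$, and it is also the argument most textbooks (including the cited source) use to derive Lyapunov's theorem as a corollary.
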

\noindent The second result is due to \citet{wald1944statistical}:
\begin{theorem}[\textbf{Wald-Wolfowitz Theorem}] \label{thm::waldwolf}
        For a sequence \seq{y} and for all $m, r \in \nats$, define:

        \begin{align*}
            \mu_r(y_n, m) = \frac{1}{m} \sum_{i=1}^m \left(y_i - \sum_{j = 1}^m y_j \right)^r.
        \end{align*}
Let the sequence \seq{a} have the property that
\begin{align} \label{eqn::wwcond}
            \frac{\mu_r(a_n, m)}{\mu_2(a_n, m)^{\frac{r}{2}}} = O(1),
\end{align}
for all $r > 2$, and let the sequence \seq{d} also satisfy this property. Let $X = \left(x_1, x_2, \dots, x_m \right)^T$ be a random vector whose elements form a permutation of the first $m$ terms of \seq{a} with probability 1. Then, the random variable 
\begin{align*}
            L_m = \sum_{i = 1}^m d_i \, x_i,
\end{align*}
has the property that
\begin{align*}
            \frac{L_m - \expect(L_m)}{\text{SD}(L_m)} \cond N(0, 1) \text{.}
        \end{align*}
\end{theorem}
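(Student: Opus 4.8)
The plan is to prove this classical finite-population central limit theorem by the method of moments, following the original argument of Wald and Wolfowitz. First I would reduce to the centered case. Writing $\bar a = m^{-1}\sum_i a_i$ and $\bar d = m^{-1}\sum_i d_i$, and using that $\sum_i x_i = \sum_i a_i$ almost surely (since $X$ is a permutation of the $a$'s), one gets
$$L_m = \sum_{i=1}^m (d_i - \bar d)(x_i - \bar a) + m\bar a \bar d,$$
so the additive constant $m\bar a \bar d$ is irrelevant to the standardized statistic and we may assume $\bar a = \bar d = 0$; then $\mu_r(a_n,m) = m^{-1}\sum_i a_i^r$ and likewise for $d$. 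Under this centering $\expect(x_i)=0$ gives $\expect(L_m)=0$, and the first two moments of a simple random sample without replacement (Theorem~\ref{thm::sk} applied with $k=1,2$ to recover $\expect(x_i^2)=\mu_2(a_n,m)$ and $\expect(x_ix_j)=-\mu_2(a_n,m)/(m-1)$) yield $\var(L_m)=\frac{m^2}{m-1}\,\mu_2(a_n,m)\,\mu_2(d_n,m)$.

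Next, since the standard normal law is determined by its moments (its moment sequence satisfies Carleman's condition), it suffices to show that for every fixed integer $s\geq 1$,
$$\frac{\expect(L_m^s)}{\var(L_m)^{s/2}} \longrightarrow \expect(N^s), \qquad N\sim \cN(0,1),$$
so that the limit is $0$ for odd $s$ and $(s-1)!!$ for even $s$. To evaluate $\expect(L_m^s)=\sum_{i_1,\dots,i_s} d_{i_1}\cdots d_{i_s}\,\expect(x_{i_1}\cdots x_{i_s})$, I would group the multi-indices $(i_1,\dots,i_s)$ according to the set partition $\pi$ of $\{1,\dots,s\}$ recording which coordinates coincide. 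The joint moment $\expect(x_{i_1}\cdots x_{i_s})$ depends only on $\pi$, and for a partition with block sizes $\lambda_1,\dots,\lambda_\ell$ it factors, to leading order in $m$, as a product of the central power sums $p_{\lambda_k}=m\,\mu_{\lambda_k}(a_n,m)$ divided by the appropriate falling factorial of $m$; the $d$-sum attached to $\pi$ contributes an analogous product of the $\mu(d_n,m)$'s.

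The decisive step is the asymptotic bookkeeping of these partition contributions. Centering kills every partition containing a singleton block, since the associated leading factor is $p_1=\sum_i a_i=0$, up to lower-order corrections from without-replacement sampling that I would control separately. Among the remaining partitions, I would invoke the moment hypothesis \eqref{eqn::wwcond}, which bounds $\mu_r/\mu_2^{r/2}$ for both sequences, to show that any partition with a block of size $\geq 3$ contributes $o(\var(L_m)^{s/2})$: such a block trades a factor $\mu_2$ for a factor $\mu_{\lambda_k}$ of the same or lower order that additionally carries an extra negative power of $m$ from the normalization, so the whole term vanishes after standardization. The only surviving partitions are then the perfect matchings (all blocks of size $2$), which occur only when $s$ is even; there are exactly $(s-1)!!$ of them, and each contributes $\var(L_m)^{s/2}(1+o(1))$, reproducing precisely the Gaussian moments.

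I expect the main obstacle to be exactly this last combinatorial estimate: correctly identifying the leading power of $m$ for each block structure in the without-replacement joint moments, verifying that singleton blocks vanish once the lower-order sampling corrections are accounted for, and converting the bound $\mu_r/\mu_2^{r/2}=O(1)$ into the quantitative statement that blocks of size $\geq 3$ are asymptotically negligible uniformly in $m$. All the probabilistic content is routine once this accounting is in place; the difficulty is purely in organizing the partition sum and tracking orders of magnitude.
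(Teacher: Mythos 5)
You should know at the outset that the paper contains no proof of this statement to compare against: Theorem~\ref{thm::waldwolf} is imported as an auxiliary result and attributed directly to \citet{wald1944statistical}, so the only meaningful benchmark is the original source. Measured against that, your plan is essentially a faithful reconstruction of the Wald--Wolfowitz method-of-moments argument, and its ingredients check out. The reduction to the centered case is correct ($\sum_i x_i=\sum_i a_i$ almost surely, so the standardized statistic is unchanged by replacing $a_i,d_i$ with their centered versions), the variance formula $\var(L_m)=\frac{m^2}{m-1}\,\mu_2(a_n,m)\,\mu_2(d_n,m)$ is the classical permutation-statistic identity, and the partition bookkeeping you describe --- singleton blocks annihilated by centering up to without-replacement correction terms, blocks of size $\geq 3$ suppressed by the hypothesis \eqref{eqn::wwcond} applied to both sequences, and perfect matchings surviving with multiplicity $(s-1)!!$ --- is exactly the accounting carried out in the original paper, with Carleman's condition justifying the passage from moment convergence to convergence in distribution. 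You also correctly identify the genuine labor: the joint moments $\expect(x_{i_1}\cdots x_{i_s})$ under sampling without replacement are not exact products, so the singleton-block cancellation holds only to leading order and the residual terms (e.g., a $\mu_3$-type correction of order $m^{-1}$ relative to the main term) must be absorbed using \eqref{eqn::wwcond}; that is where the real work of the original proof lives. One small point worth flagging: the paper's statement of the theorem contains a typo, centering each $y_i$ at the sum $\sum_{j=1}^m y_j$ rather than the mean $\sum_{j=1}^m y_j/m$ (compare Condition~\ref{cond::psi} in the main text, which uses the mean); your proposal implicitly works with the correct, mean-centered version, as does the original theorem.
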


\section{Correction of the Proof of Theorem A.8 of \cite{couch2019differentially}} \label{app::Correction}
In Section~\ref{sec::methods}, we asserted that there is a minor oversight in the proof of Theorem A.8 of \citet{couch2019differentially}. However, using their techniques, and correcting the mistake, Lemma~\ref{lemm::no-type-1} below implies Theorem A.8 of \citet{couch2019differentially} holds whenever 
\begin{align*}
        1 - \erf{\sqrt{\frac{3}{2}(n - 1)}} < \alpha < 0.5 \text{,}
    \end{align*} 
where $$\erf{x} = \frac{2}{\sqrt{\pi}} \int_{0}^x \exp(-t^2)  dt\,$$ is the error function. 
This is restriction is essentially immaterial, for example, it holds for $n\geq 3$ when $\alpha=0.05$, but does depend on $\alpha$. 

The issue with the proof of Theorem A.8 of \citet{couch2019differentially} is the following. 
At one point in the proof, \cite{couch2019differentially} seem to imply that if $X \sim N(\mu_X, \sigma_X^2)$ and if $Y \sim N(\mu_Y, \sigma_Y^2)$ with $\mu_X < \mu_Y$ and $\sigma^2_X < \sigma^2_Y$, then the lower-tail critical values of $Y$ are less than the corresponding critical values of $X$. In general, this is not true. To give a counterexample, the 1st percentile of $X \sim N(0, 1)$ (which is approximately -2.33) is less than the 1st percentile of $Y \sim N(-0.5, 0.5^2)$ (which is approximately -1.66). 

Another issue is that, in the Mann--Whitney test (as in the Siegel--Tukey Test), while $U_1$ and $U_2$ are both normally-distributed, the ``final'' test statistic $U$ follows a different distribution (i.e., a half-normal distribution). Below, we prove a modified version of their Theorem A.8 which accounts for this. 

In this work, we adopt a similar algorithm as that of \citet{couch2019differentially} to underestimate the difference between the groups sizes with high probability. Lemma~\ref{lemm::no-type-1} below gives that underestimating the difference in the groups sizes does not increase the type I error of the Siegel--Tukey test.

\begin{lemma}\label{lemm::no-type-1}
Suppose that $\alpha$ satisfies
    \begin{align*}
        1 - \erf{\sqrt{\frac{3}{2}(n - 1)}} < \alpha < 0.5 \text{.}
    \end{align*}
Let $\pi_\alpha(n_1; n)$ be the $100\alpha$th percentile of the limiting distribution of the private (or non-private) Siegel--Tukey test statistic under the null hypothesis when the group sizes are $n_1$ and $n - n_1$, respectively. Then, $\pi_\alpha$ is decreasing in $|\frac{n}{2} - n_1|$.
\end{lemma}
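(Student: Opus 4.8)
The plan is to identify the limiting null law of the minimum-based Siegel--Tukey statistic as a \emph{reflected half-normal} whose location and scale are both governed by the single quantity $p\coloneqq n_1 n_2$, and then to show that its lower $\alpha$-quantile varies monotonically with $p$. Write $\widehat U=\min(U_1',U_2')$ for the two centered rank sums. Under $H_0$ these are reflections of one another about their common mean $L=n_1 n_2/2$ (indeed $U_1'+U_2'=n_1n_2$), so that $\widehat U\eqd L-|N|$ with $N\sim\cN(0,\sigma^2)$. A direct computation from Theorem~\ref{thm::sk} shows the null variance is proportional to $p$, say $\sigma^2=\gamma p$ with $\gamma$ independent of how $n$ is split; and by Theorem~\ref{thm::main-result} the Laplace term is of strictly lower order, so the same reflected half-normal governs the private statistic. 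The crucial point --- and exactly where \citet{couch2019differentially} slipped --- is that $\widehat U$ is reflected half-normal, not normal: its lower $\alpha$-quantile is $\pi_\alpha=L-c\,\sigma$ with $c=\Phi^{-1}(1-\alpha/2)$, and the hypothesis $\alpha<1/2$ is what forces $c>0$, placing $\pi_\alpha$ genuinely in the lower tail.

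With this reduction the claim becomes one-dimensional. Both $L$ and $\sigma$ increase with $p$, and $p=n^2/4-(n/2-n_1)^2$ is itself strictly decreasing in the disparity $|n/2-n_1|$, so it suffices to prove that $p\mapsto\pi_\alpha=\tfrac{p}{2}-c\sqrt{\gamma p}$ is increasing over the admissible range of $p$. Viewed as a function of $u=\sqrt p$ this is the upward parabola $\tfrac12 u^2-c\sqrt\gamma\,u$, which is increasing once $u$ clears its vertex; hence monotonicity holds as soon as the smallest attainable value of $p$ exceeds $c^2\gamma$. The smallest split is $\min(n_1,n_2)=1$, giving $p=n-1$, which is the origin of the $n-1$ in the hypothesis. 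Combining monotonicity in $p$ with the fact that $p$ decreases in $|n/2-n_1|$ then yields that $\pi_\alpha$ is decreasing in $|n/2-n_1|$, as required; this is the same monotone family of critical values that Lemma~\ref{lem::vk} anticipates through the variance.

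The last step is to read off the restriction on $\alpha$. The $(1-\alpha)$-quantile of the standard half-normal is $c=\Phi^{-1}(1-\alpha/2)$, and the requirement that $c$ be small enough for the location term to dominate the scale term at the extreme split $p=n-1$ is a bound $c<\sqrt{3(n-1)}$. Using the identity $1-\erf{x}=2\bigl(1-\Phi(x\sqrt2)\bigr)$, this is exactly $\alpha>1-\erf{\sqrt{\tfrac{3}{2}(n-1)}}$, while $\alpha<1/2$ keeps $c>0$; together these are the two sides of the hypothesis. Because $n_1$ takes integer (or, for odd $n$, half-integer) values, I would run the parabola comparison on this discrete grid and confirm that $p=n-1$ is the binding case, so that the single inequality there certifies monotonicity throughout.

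I expect the main obstacle to be the interaction of two \emph{competing} monotonicities: as the split becomes more unbalanced both the location $L$ and the scale $\sigma$ shrink, the first pushing $\pi_\alpha$ down and the second pushing it up, and the claim is true only because the first effect dominates. It is precisely this cancellation that Couch et al.\ mishandled by treating the statistic as normal rather than reflected half-normal, and it is the reason a lower bound on $\alpha$ (equivalently, an upper bound on $c$) is unavoidable: were $\alpha$ allowed to be too small, the scale effect would win at the extreme split and monotonicity would break. Getting the half-normal quantile, rather than the Gaussian quantile, into $\pi_\alpha=L-c\,\sigma$ and then showing the location term wins at $p=n-1$ is thus the heart of the argument.
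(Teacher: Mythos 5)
Your proposal is correct and follows essentially the same route as the paper's proof: identify the limiting null law of the min-based statistic as a reflected half-normal, write $\pi_\alpha$ as the location $n_1n_2/2$ minus the half-normal quantile constant times the scale, and establish monotonicity in the split by an elementary one-variable calculus argument whose binding case is the extreme split $\min(n_1,n_2)\in\{0,1\}$, which is exactly where the hypothesis $\alpha>1-\erf{\sqrt{\tfrac32(n-1)}}$ comes from. Your reparametrization by $p=n_1n_2$ and the upward-parabola-in-$\sqrt{p}$ picture is only a cosmetic repackaging of the paper's derivative-in-$m$ computation together with its separate check that $\pi_\alpha(0;n)<\pi_\alpha(1;n)$ is the stricter of the two requirements.
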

\begin{proof}
Let $n_2 = n - n_1$. We know that $U_1 \cond N(\frac{n_1n_2}{2}, \frac{n_1n_2(n + 1)}{12})$. Let us consider $U$. By definition,

\begin{align*}
    U &= \begin{cases}
        U_1 \quad &\text{if $U_1 \leq \frac{n_1n_2}{2}$} \\
        n_1n_2 - U_1 \quad &\text{if $U_1 > \frac{n_1n_2}{2}$.}
    \end{cases} 
\end{align*}

Thus,

\begin{align*}
    U - \frac{n_1n_2}{2} &= \begin{cases}
        U_1 - \frac{n_1n_2}{2} \quad &\text{if $U_1 \leq \frac{n_1n_2}{2}$} \\
        \frac{n_1n_2}{2} - U_1 \quad &\text{if $U_1 > \frac{n_1n_2}{2}$} \\
    \end{cases} \\
    &= \left|U_1 - \frac{n_1n_2}{2} \right| \text{.}
\end{align*}

Since 

\begin{align*}
    \left(U_1 - \frac{n_1n_2}{2}\right) \cond N\left(0, \frac{n_1n_2(n + 1)}{12}\right)
\end{align*}

it is the case that

\begin{align*}
    -\left|U_1 - \frac{n_1n_2}{2}\right| \stackrel{d}{\longrightarrow} \text{Half-Normal}\left(\sigma = \sqrt{\frac{n_1n_2(n + 1)}{12}} \right) \text{.}
\end{align*}

As such, $\pi_\alpha$, the $100\alpha$th percentile of $U$, is given by:

\begin{align*}
    \pi_\alpha = \frac{m(n - m)}{2} - \sqrt{\frac{m(n - m)(n + 1)}{12}} \sqrt{2} \, \erfinv{1 - \alpha} 
\end{align*}

where $m = \min \{n_1, n_2\}$. Extending $\pi_\alpha$ to be continuous in $m$ and taking the first derivative, we obtain:

\begin{align*}
    \frac{\partial \pi_\alpha}{\partial m} = \left(\frac{n}{2} - m\right)\left(1 - \frac{\erfinv{1 - \alpha}}{\sqrt{6m(n - m)}}\right) \text{.}
\end{align*}

We would like $\pi_\alpha$ to be increasing among $m \in \nats$ for $0 \leq m \leq \frac{n}{2}$. This would be the case if the following two conditions were met:

\begin{enumerate}
    \item $\frac{\partial \pi_\alpha}{\partial m}$ is positive for all $m \in \left[1, \frac{n}{2} \right)$.
    \item $\pi_\alpha(0; n) < \pi_\alpha(1; n)$.
\end{enumerate}

Let us start with item 1. We note that $\frac{n}{2} - m$ is  positive for all $m \in \left[1, \frac{n}{2} \right)$. Consequently, $\frac{\partial \pi_\alpha}{\partial m}$ is positive whenever

\begin{align*}
    \frac{\erfinv{1 - \alpha}}{\sqrt{6m(n - m)}} < 1 \text{.}
\end{align*}

We may also note that $m(n - m)$ is increasing in $m$ for $0 < m < n/2$. Hence, the left-hand side of the above expression is maximized when $m$ takes the lowest value in the interval under consideration, i.e., when $m = 1$. By substituting $m = 1$ and rearranging, we find that item 1 holds whenever

\begin{itemize}
    \item $n > \frac{\left[\erfinv{1 - \alpha} \right]^2}{6} + 1$ \quad \text{or}
    \item $\alpha > 1 - \erf{\sqrt{6(n - 1)}}$ \text{.}
\end{itemize}

Let us move on to item 2, which, after substituting $m = 0$ and $m = 1$ into the expression for $\pi_\alpha$ and setting the latter to be greater than or equal to the former, turns out to be equivalent to:

\begin{align*}
    0 \leq \frac{n - 1}{2} - \sqrt{\frac{n - 1}{6}} \erfinv{1 - \alpha}.
\end{align*}

This is in turn equivalent to either or both of the two following inequalities:

\begin{itemize}
    \item $n > \frac{2}{3} \left[\erfinv{1 - \alpha} \right]^2 + 1$,
    \item $\alpha > 1 - \erf{\sqrt{\frac{3}{2}(n - 1)}}$.
\end{itemize}

As long as these inequalities (which are stricter than the ones from before) hold, we have our desired result.  
\end{proof}

The restrictions on $\alpha$ and $n$ are very mild and would be inconsequential for the purposes of virtually any study. To illustrate, in order to be able to use any $\alpha \geq 0.05$, we only require $n \geq 3$. Even the use of $\alpha = 0.001$, which is low enough for a broad range of disciplines, only requires $n \geq 5$. 

\section{Additional Simulation Results}\label{app:sim}
\subsection{Additional Signed Rank Simulation Results}\label{app::sr}

\begin{table}[!h]
\caption{Empirical power for the RPSR test, for Normally distributed data, with an effect size of 0.25. The pair dependence is modelled with a Gaussian copula. Notice how the RPSR test performs well when $\epsilon$ is small, but the gains are minimal or non-existent for larger privacy budgets.}
\centering

\end{table}
\subsection{Uneven group sizes}\label{app::group}
\begin{figure}[!htb]
    \begin{minipage}{.5\textwidth}
    \centering
    \includegraphics[width=\textwidth]{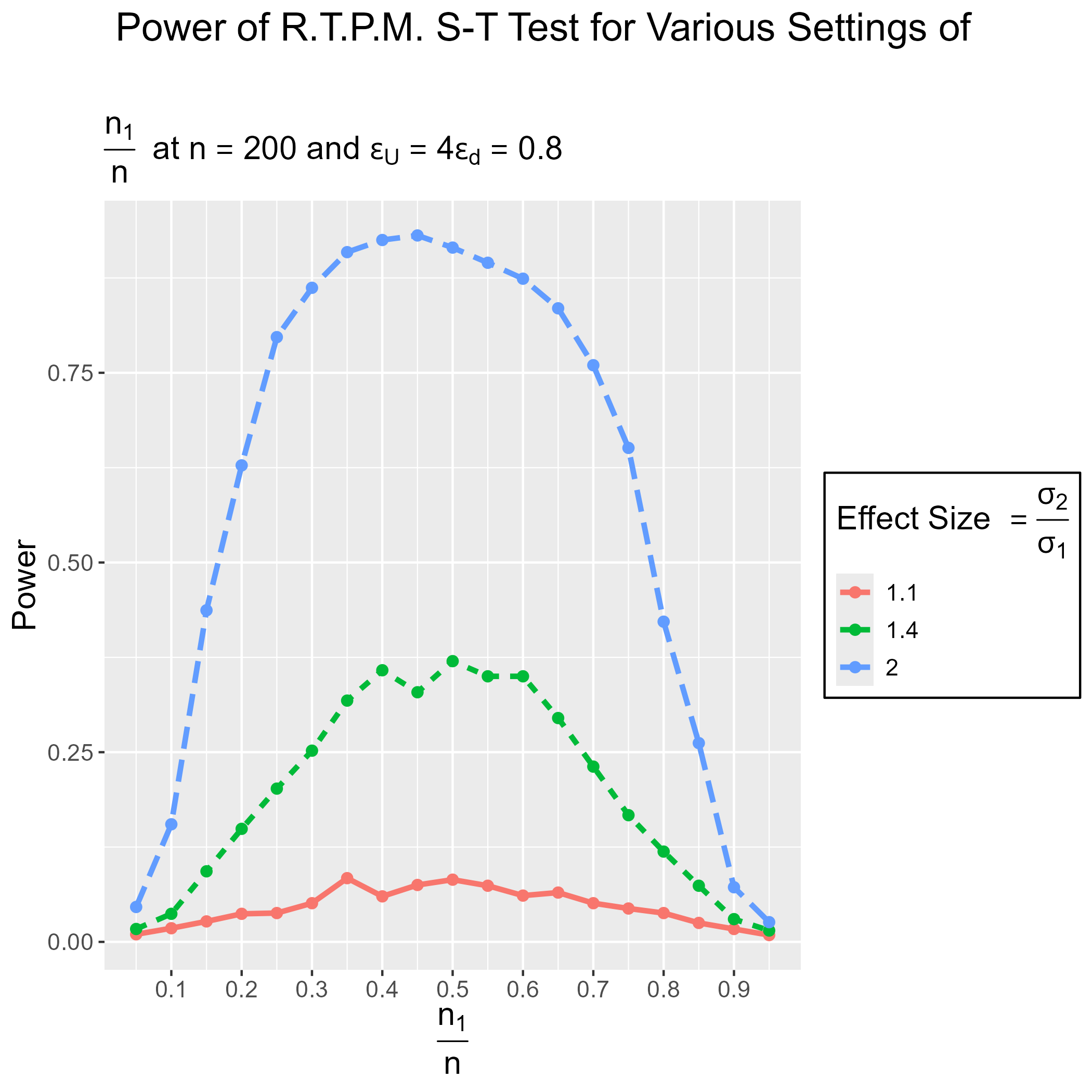}
    \end{minipage}
        \begin{minipage}{.5\textwidth}
    \centering
    \includegraphics[width=\textwidth]{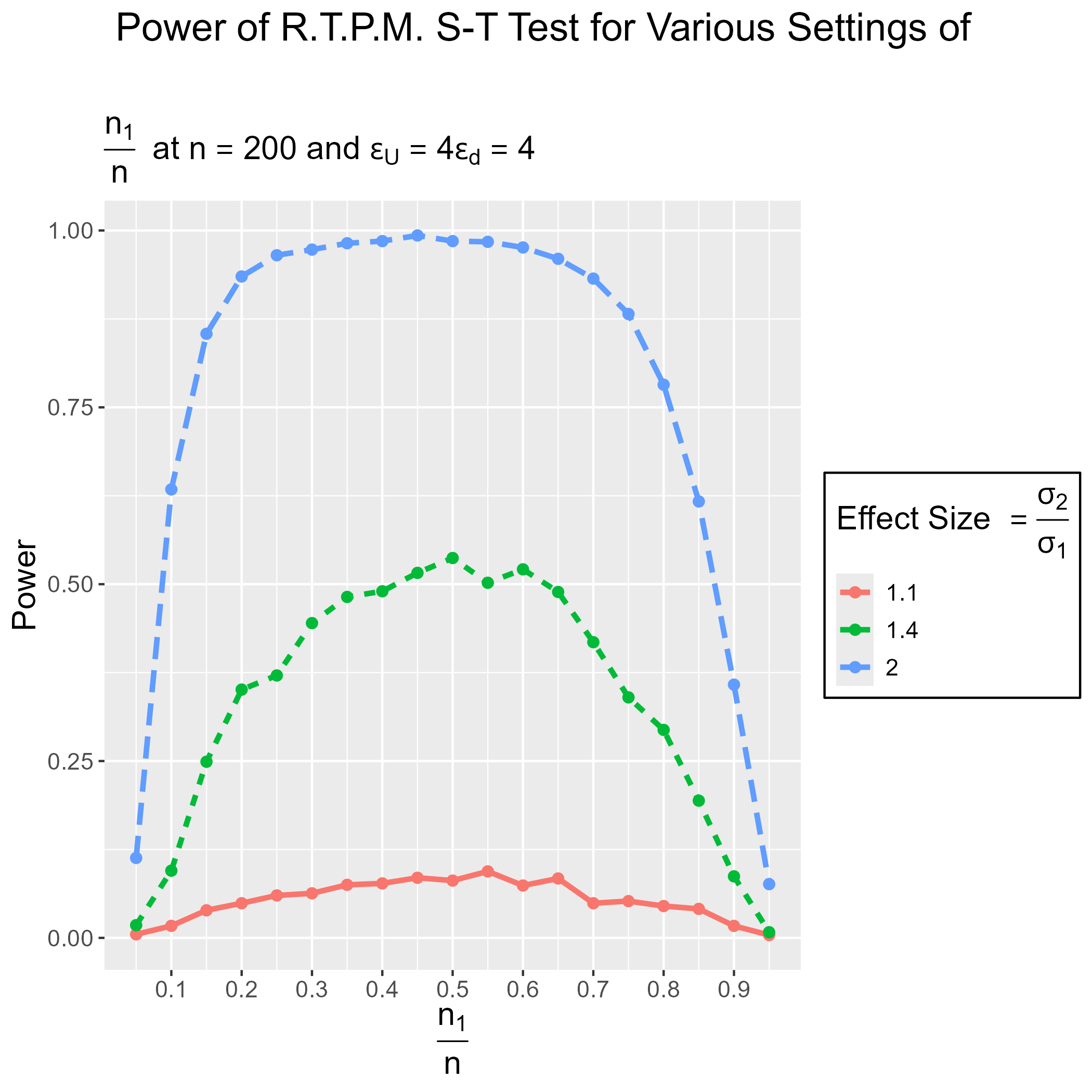}
    \end{minipage}\\
        \begin{minipage}{.5\textwidth}
    \centering
    \includegraphics[width=\textwidth]{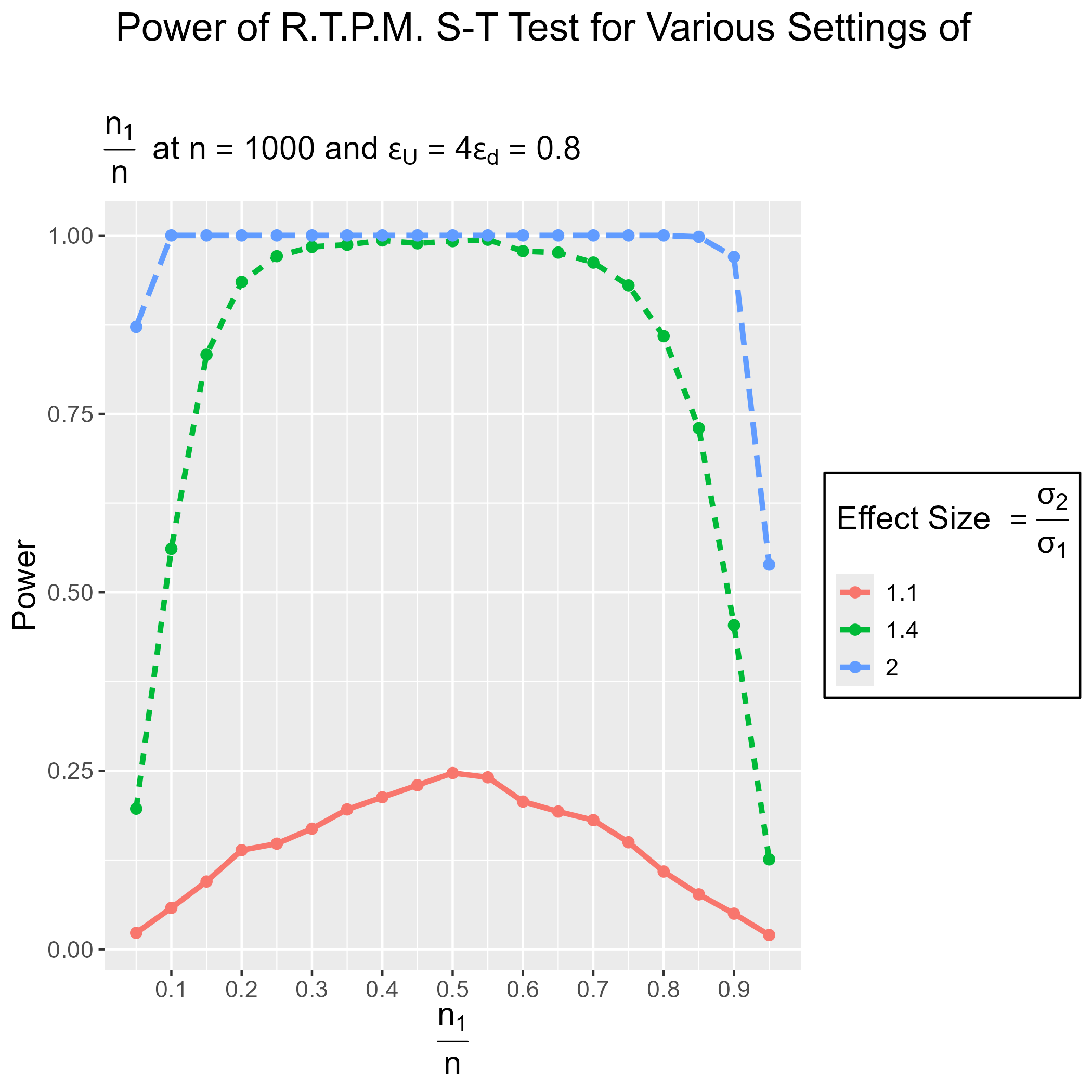}
    \end{minipage}
        \begin{minipage}{.5\textwidth}
    \centering
    \includegraphics[width=\textwidth]{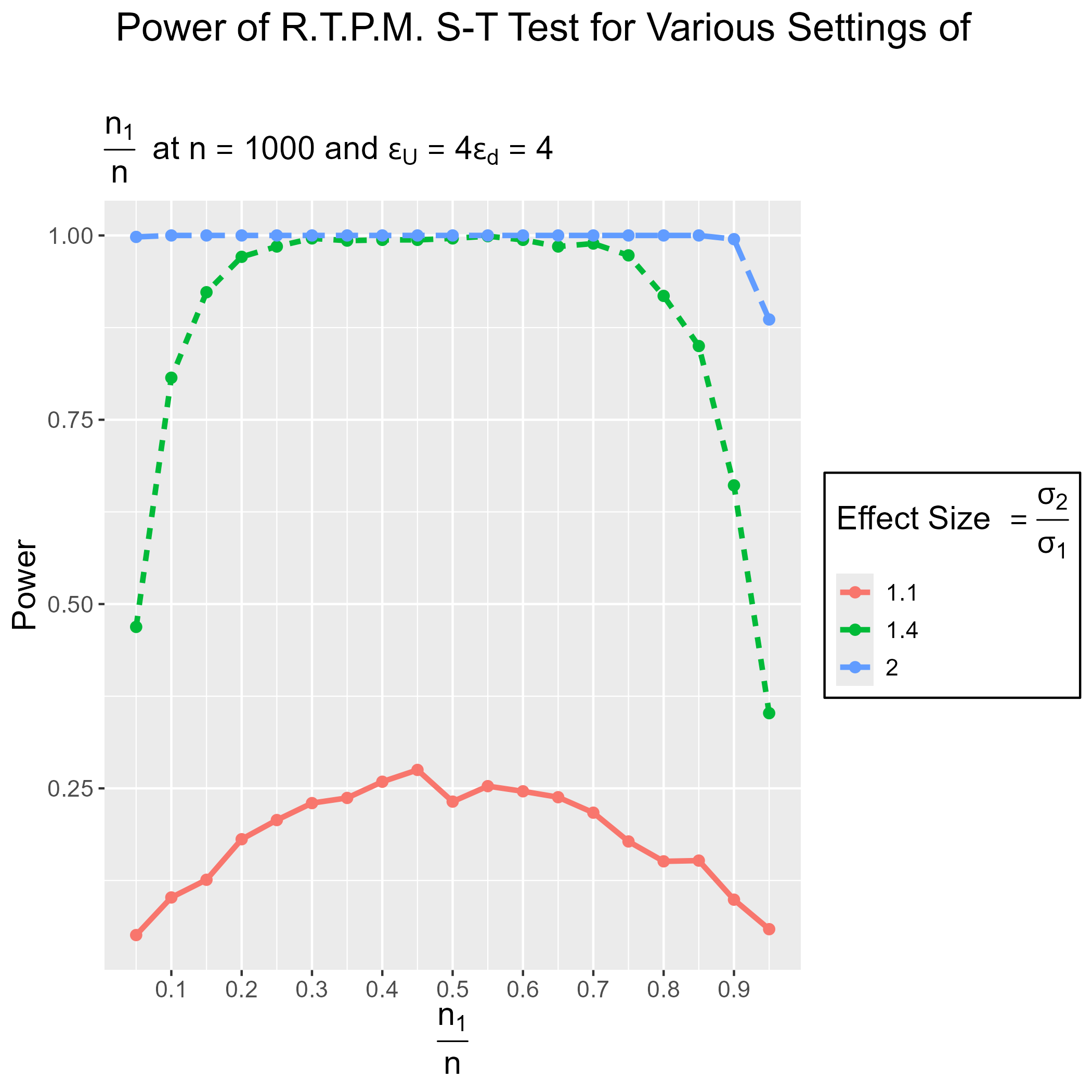}
    \end{minipage}
     \caption{Empirical power of the RPST test with different group sizes under normally distributed data. Here, $\psi=\tan^{-1}$.}
\end{figure}
\subsection{Privacy budget allocation}\label{app::budget}
\begin{figure}[!htb]
    \begin{minipage}{.5\textwidth}
    \centering
    \includegraphics[width=\textwidth]{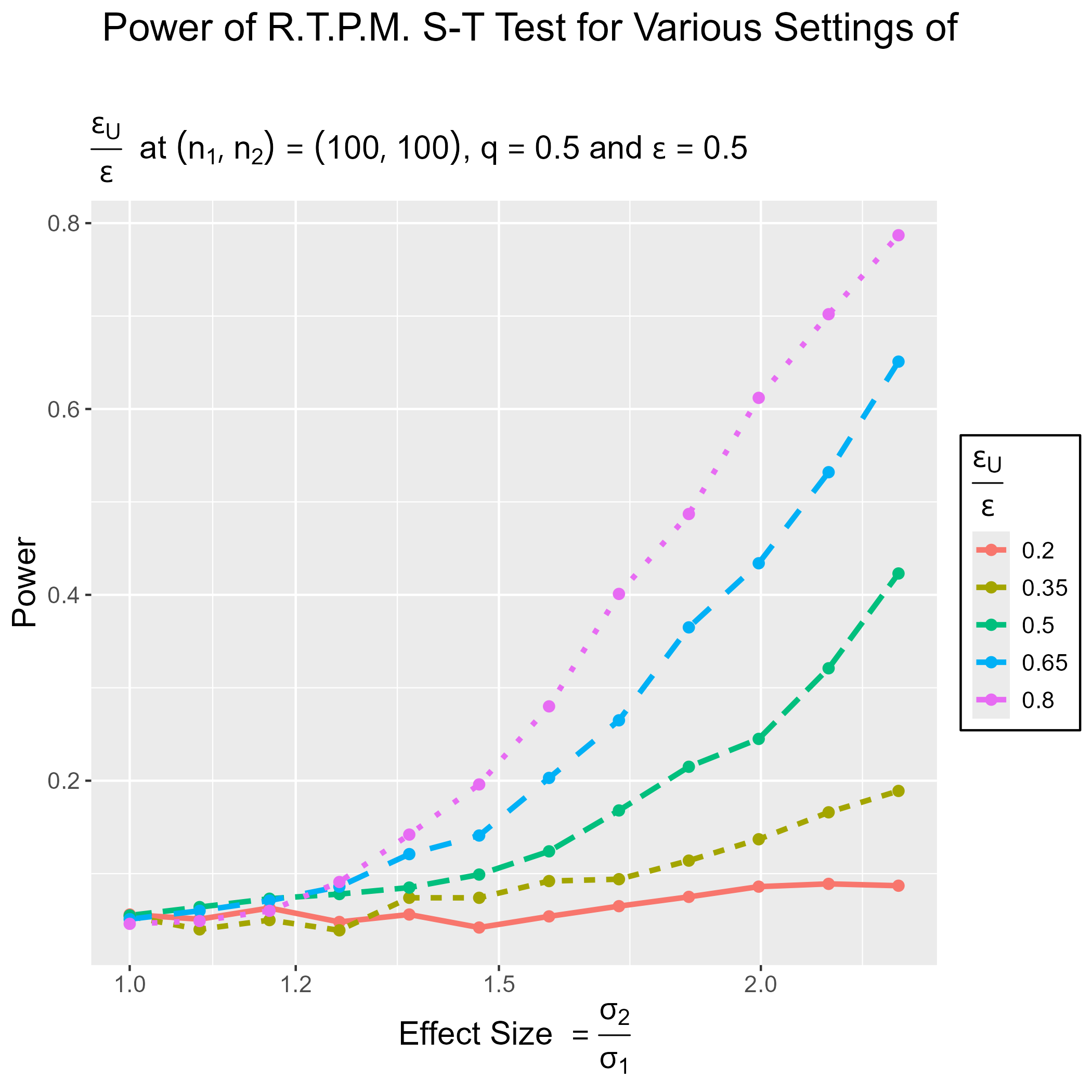}
    \label{fig:BA}
    \end{minipage}
        \begin{minipage}{.5\textwidth}
    \centering
    \includegraphics[width=\textwidth]{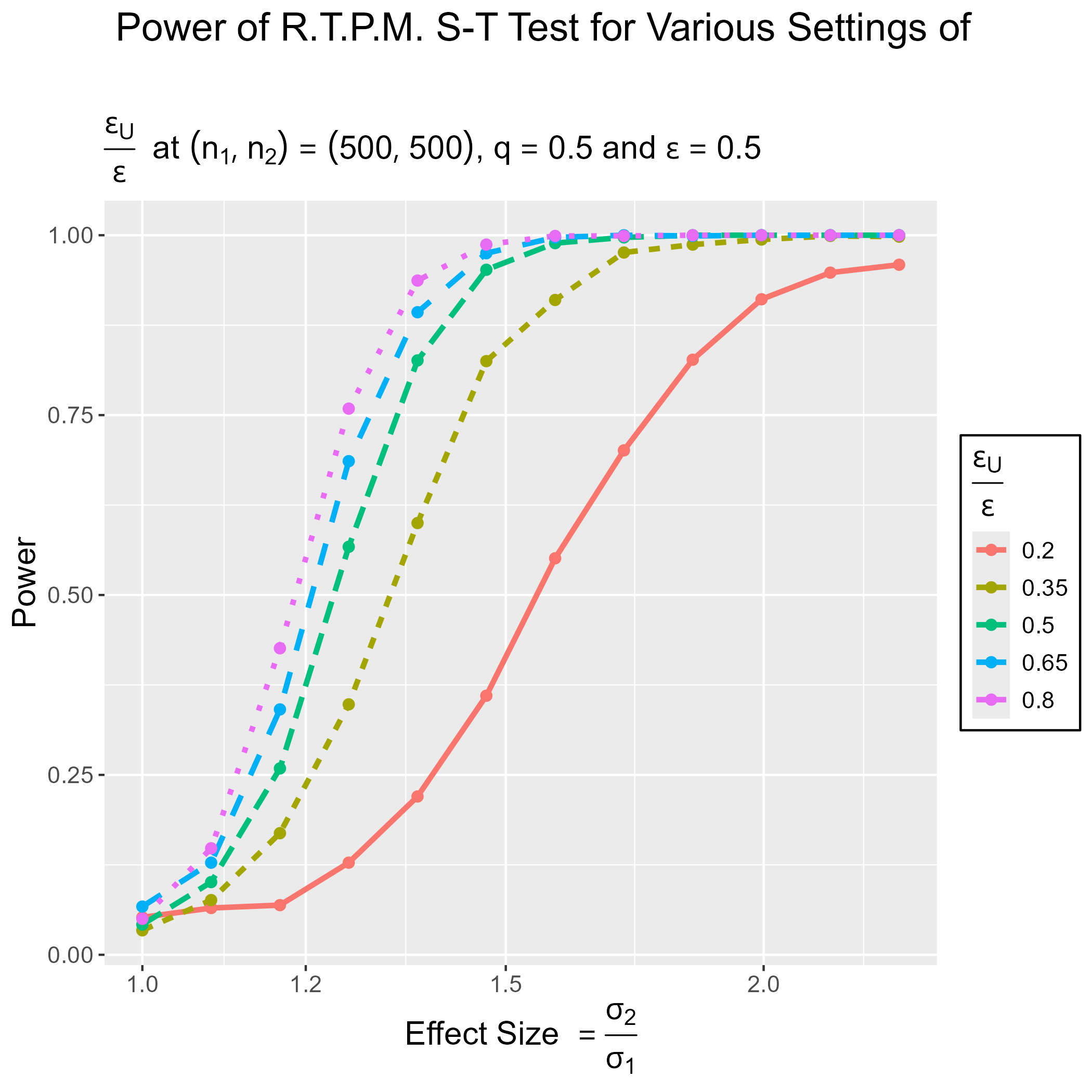}
    \label{fig:BA2}
    \end{minipage}\\
        \begin{minipage}{.5\textwidth}
    \centering
    \includegraphics[width=\textwidth]{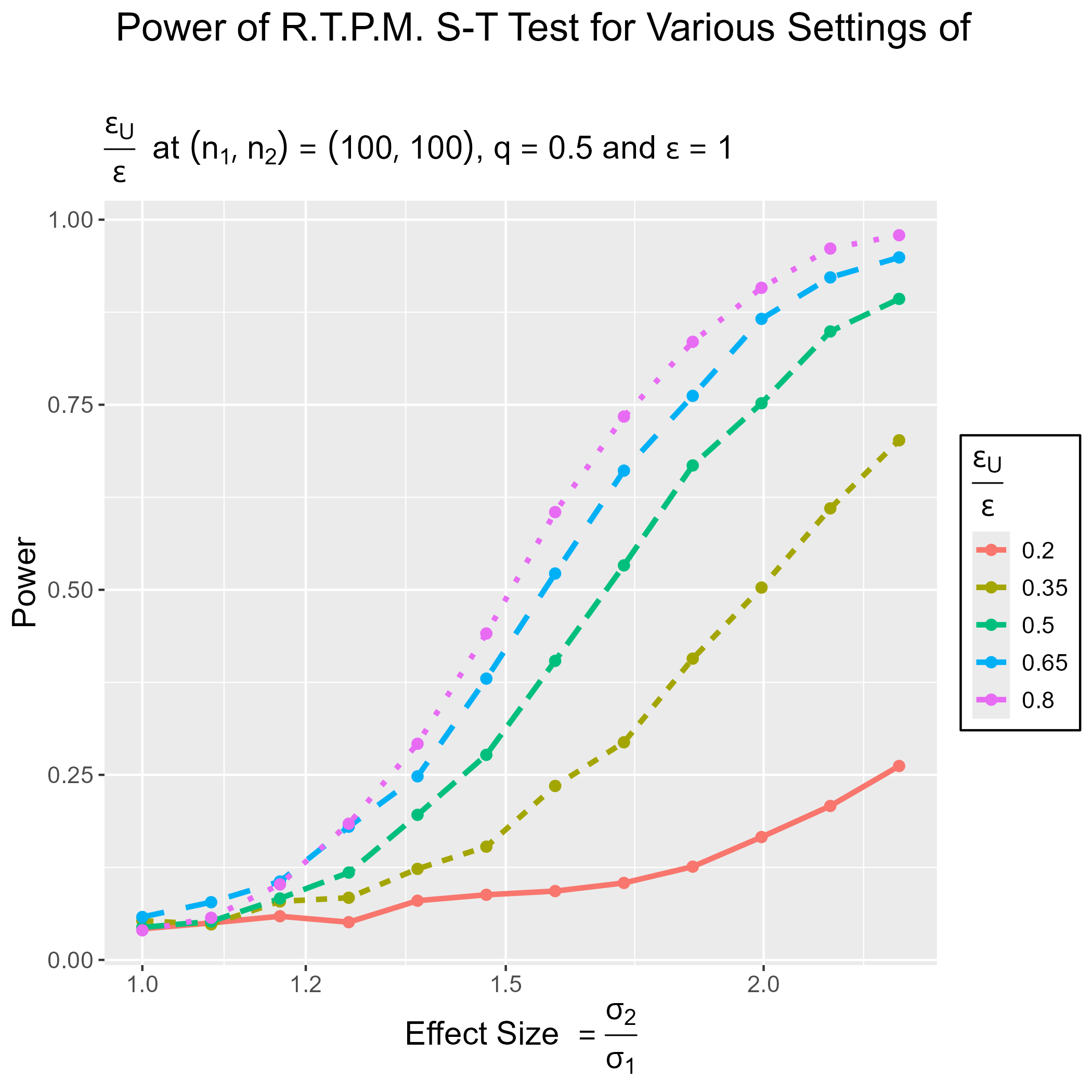}
    \label{fig:BA3}
    \end{minipage}
        \begin{minipage}{.5\textwidth}
    \centering
    \includegraphics[width=\textwidth]{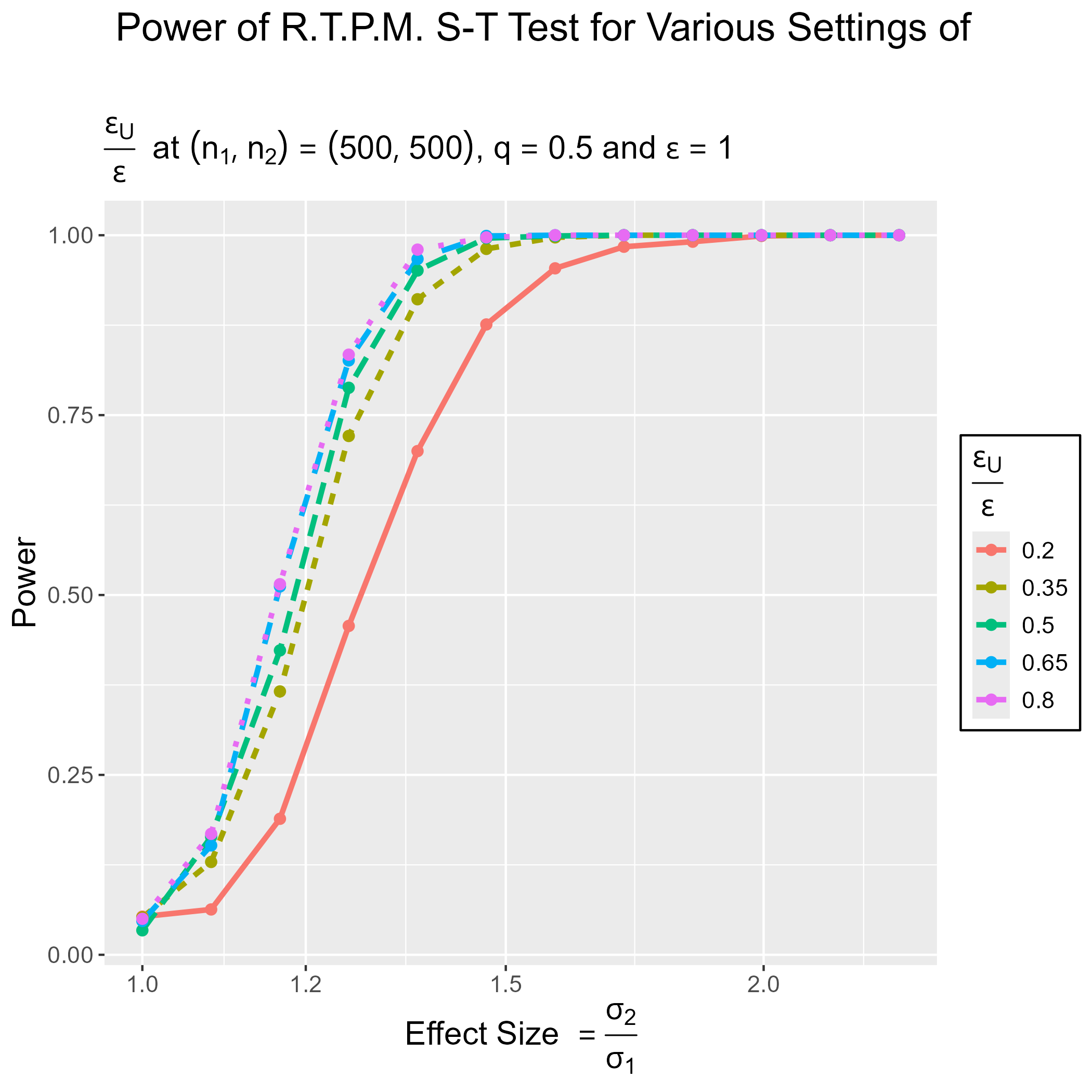}
    \label{fig:BA4}
    \end{minipage}
     \caption{Empirical power of the RPST test with different budget allocations under normally distributed data. Here, $\psi=\tan^{-1}$.}
\end{figure}

\end{document}